\tikzset{curlybrace/.style={decoration=brace,decorate}}
\theoremstyle{definition}
\newtheorem{df}{Definition}[section]
\newtheorem{lm}[df]{Lemma}
\newtheorem{prop}[df]{Proposition}
\newtheorem{theo}[df]{Theorem}
\newtheorem{cor}[df]{Corollary}
\newtheorem{ob}[df]{Observation}
\newtheorem{ex}[df]{Example}
\newcommand{\A}{{\cal A}}
\newcommand{\B}{{\cal B}}
\newcommand{\D}{{\cal D}}
\newcommand{\E}{{\cal E}}
\newcommand{\F}{{\cal F}}
\newcommand{\K}{{\cal K}}
\newcommand{\M}{{\cal M}}
\newcommand{\Ne}{{\cal N}}
\newcommand{\T}{{\cal T}}
\newcommand{\cP}{{\cal P}}
\newcommand{\N}{\mathbb{N}}
\newcommand{\R}{{\cal R}}
\newcommand{\sem}[1]{[\![#1]\!]}
\newcommand{\init}{\mathrm{init}}
\newcommand{\run}{\mathrm{run}}
\newcommand{\rel}{\mathrm{rel}}
\newcommand{\V}{\mathcal{V}}
\newcommand{\C}{{\cal C}}
\newcommand{\0}{\mathbb{0}}
\newcommand{\1}{\mathbb{1}}
\newcommand{\FS}{\mathcal{F}(\Delta^*)}
\newcommand{\FSinf}{\mathcal{F}(\Delta^*_\infty)}
\DeclareMathOperator{\acc}{acc}
\DeclareMathOperator{\der}{der}
\DeclareMathOperator{\id}{id}
\DeclareMathOperator{\im}{im}
\DeclareMathOperator{\wt}{wt}
\DeclareMathOperator{\lcm}{lcm}
\DeclareMathOperator{\pos}{pos}
\DeclareMathOperator{\rk}{rk}
\DeclareMathOperator{\Ops}{Ops}
\DeclareMathOperator{\pr}{pr}
\DeclareMathOperator{\size}{size}
\DeclareMathOperator{\sizemod2}{size-mod-2}
\DeclareMathOperator{\supp}{supp}
\DeclareMathOperator{\maxrk}{maxrk}
\DeclareMathOperator{\lcp}{lcp}
\title{Crisp-determinization of weighted tree automata over strong bimonoids}
\author{Zolt\'an F\"ul\"op\affiliationmark{1}\thanks{Research of this author was supported by grant TUDFO/47138-1/2019-ITM of the Ministry for Innovation and Technology, Hungary.} \and D\'avid K\'osz\'o\affiliationmark{1}\thanks{Supported by the ÚNKP-19-3 New National Excellence Program of the Ministry for Innovation and Technology.} \and
Heiko Vogler\affiliationmark{2}}
\affiliation{
Department of Foundations of Computer Science, University of Szeged, Hungary\\
Faculty of Computer Science, Technische Universit\"at Dresden, Germany
}
\keywords{strong bimonoid, semiring, weighted tree automaton, determinization, undecidability}
\begin{document}
\publicationdetails{23}{2021}{1}{18}{5943}
\maketitle

\sloppy

\begin{abstract} 
We consider weighted tree automata (wta) over strong bimonoids and their initial algebra semantics and their run semantics. There are wta for which these semantics are different; however, for bottom-up deterministic wta and for wta over semirings, the difference vanishes. A wta is crisp-deterministic if it is bottom-up deterministic and each transition is weighted by one of the unit elements  of the strong bimonoid. We prove that the class of weighted tree languages recognized by crisp-deterministic wta is the same as the class of recognizable step mappings.  Moreover, we investigate the following two crisp-determinization problems: for a given wta~$\mathcal{A}$, (a) does there exist a crisp-deterministic wta which computes the initial algebra semantics of $\mathcal{A}$ and (b)  does there exist a crisp-deterministic wta which computes the run semantics of $\mathcal{A}$? We show that the finiteness of the Nerode algebra $\mathcal{N}(\mathcal{A})$ of $\mathcal{A}$ implies a positive answer for (a), and that  the finite order property of $\mathcal{A}$ implies a positive answer for  (b). 
  We show a sufficient condition which guarantees the finiteness of $\mathcal{N}(\mathcal{A})$ and a sufficient condition which guarantees the finite order property of $\mathcal{A}$. Also, we provide an algorithm for the construction of the crisp-deterministic wta according to (a) if $\mathcal{N}(\mathcal{A})$ is finite, and similarly for (b) if $\mathcal{A}$ has finite order property.  We prove that it is undecidable whether an arbitrary wta~$\mathcal{A}$ is crisp-determinizable. We also prove that both, the finiteness of $\mathcal{N}(\mathcal{A})$ and the finite order property of $\mathcal{A}$ are undecidable.
\end{abstract}

\section{Introduction}

The \emph{determinization problem} shows up  if one wants to specify a problem (\textit{e.g.}, a formal language) in a nondeterministic way and to calculate its solution (\textit{e.g.}, membership) in a deterministic way. More precisely, the determinization problem asks the following: for a given nondeterministic device $\A$ of a given type (or class) $\T$, does there exist a deterministic device $\B$ of the same type which is semantically equivalent to $\A$?

It is well known that the determinization problem is solved positively if $\T$ is the class of all finite-state (string) automata (cf., \textit{e.g.}, \cite[Thm.~2.1]{hopull79}), \textit{i.e.}, for each nondeterministic finite-state automaton $\A$ there is an equivalent deterministic finite-state automaton $\B$. The construction of $\B$ from $\A$ is called powerset construction. The same holds true for the class $\T$ of all finite-state tree automata \cite[Thm.~1]{thawri68}.

The situation changes drastically if one considers the class $\T$ of all weighted string automata (wsa), \textit{i.e.}, finite-state string automata in which each transition is weighted by some element of a semiring   \cite{sch61} (cf. \cite[Ch.~VI.6]{eil74} and \cite{salsoi78,kuisal86,sak09,drokuivog09}). 
More precisely, there exists a wsa such that there is no equivalent deterministic wsa  (see, \textit{e.g.}, \cite[Lm.~6.3]{borvog03} for a weighted tree automaton over a monadic alphabet with this property)\footnote{Weighted tree automata over monadic alphabets and wsa  are equivalent, cf., \cite[p.~324]{fulvog09}.}
On the other side, there are subclasses of $\T$  for which the determinization problem can be solved positively: the subclass of all wsa over locally finite semirings \cite[p.~293]{kirmae05}, the subclass of all trim unambiguous wsa over the tropical semiring having the twins property \cite[Thm.~12]{moh97}, and the subclass of all wsa over min-semirings having the twins property \cite[Thm.~5]{kirmae05}.
The same situation is present if $\T$ is the class of all weighted tree automata \cite{berreu82,aleboz87,kui98,esikui03}, and subclasses for the positive solution of the determinization problem were identified in \cite[Cor. 4.9 and Thm. 4.24]{borvog03}, \cite[Thm. 3.17]{fulvog09}, and  \cite[Thm. 5.2]{bucmayvog10}. In \cite{allmoh03a,beacarprisak03} results for deciding the twins property of wsa have been shown; we refer to \cite{buefis12} for results on deciding the twins property of weighted tree automata. 

Weighted string automata have been investigated for a number of different weight algebras, \textit{e.g.}, for semirings, lattices \cite{matsalsalyu95} (also cf. \cite{kliyua95,rah09}), valuation monoids \cite{dromei11,dromei12}, and strong bimonoids \cite{drostuvog10,cirdroignvog10}. 
 Roughly speaking, strong bimonoids are semirings without the distributivity laws. There exist  wsa over strong bimonoids such that the  initial algebra semantics and the run semantics are different \cite[Ex.~25~and~26]{drostuvog10}. However, if the strong bimonoid is right distributive, then both semantics coincide \cite[Lm.~4]{drostuvog10}.

A special case of determinization of wsa over strong bimonoids is when we require that 
the resulting deterministic wsa is \emph{crisp-deterministic}, \textit{i.e.}, each of its transitions is weighted by the additive zero or the multiplicative unit element of the strong bimonoid;\ that is, arbitrary weights can show up only at the final states.

Crisp-deterministic wsa are worth investigating because the class of weighted languages recognized by them is exactly the class of recognizable step mappings  \cite[Lm. 8]{drostuvog10}.
A recognizable step mapping is the sum of finitely many weighted languages, each of which is  constant over a recognizable language (called step language) and zero over the complement of that language. Therefore, it is easy to give a recognizable step mapping effectively by the direct product of the finite automata for the step languages and a simple weight mapping over the set of states of the direct product automaton. We mention that  recognizable step mappings play an important role 
in the characterization of recognizable weighted languages by weighted MSO-logic \cite{drogas05,drogas07,drogas09}. In fact, the semantics of the weighted MSO-formula $\forall x. \varphi$ is a recognizable weighted language if the semantics of $\varphi$ is a recognizable step mapping \cite[Lm.~5.4]{drogas09}; moreover, there is a weighted MSO-formula $\varphi$ of which the semantics is a recognizable weighted language and the semantics $\forall x. \varphi$ is not recognizable \cite[Ex.~3.6]{drogas09}. The same holds for weighted MSO-logic on trees \cite{drovog06}.

For the class $\T$ of all wsa over strong bimonoids, the \emph{crisp-determinization problem} asks the following: For a given wsa $\A$, (a) does there exist a crisp-deterministic wsa $\B$ which computes the initial algebra semantics of $\A$ and (b) does there exist a crisp-deterministic wsa $\B$ which computes the run semantics of $\A$?   In \cite{drostuvog10,cirdroignvog10} subclasses of $\T$ were identified for which the crisp-determinization problem is solved positively. However, in  \cite{drostuvog10,cirdroignvog10} no decidability results on the membership problem of that subclass is given.

In this paper we consider the class $\T$ of all weighted tree automata (wta) over strong bimonoids.  We will follow the lines of \cite{cirdroignvog10} and  identify subclasses of $\T$ for which the
crisp-determinization problem is solvable, \textit{i.e.}, for every wta $\A$ of that subclass, there exists a crisp-deterministic wta $\B$ such that $\A$ and $\B$ are i-equivalent, \textit{i.e.},  $\A$ and $\B$ have the same initial algebra semantics, and $\B$ can be constructed effectively. Also we deal with the modified problem in which initial algebra semantics and i-equivalence are replaced by run semantics and r-equivalence, respectively.
In fact, we generalize the corresponding results of the papers \cite{cirdroignvog10} to the tree case.

Moreover, we deal with decidability problems concerning crisp-determinization of wta. We show that it is undecidable whether, for an arbitrary given wta, there is an i-equivalent crisp-deterministic wta. Moreover, we show the undecidability of two properties of wta which  
are relevant for crisp-determinization. These are as follows. To each wta $\A$ we can associate an algebra $\V(\A)$ such that if
the image $\im(h_{\V(\A)})$ of the unique homomorphism $h_{\V(\A)}$ from the term algebra to $\V(\A)$ is finite, then 
a crisp-deterministic wta can be constructed which is i-equivalent to $\A$. For wsa over fields it is shown to be decidable whether  this image is finite \cite[Sec. IV. 2]{berreu88}. Moreover,  
in \cite{sei94} it was shown that for each wta $\A$ over the tropical or the arctic semiring it is decidable if $\A$ is bounded. Since these semirings are idempotent, the fact that $\A$ is bounded, implies that $\im(h_{\V(\A)})$ is finite. In this paper we show that for arbitrary wta $\A$ it is undecidable whether $\im(h_{\V(\A)})$ is finite. By  restricting this result to the case of monadic input trees (\textit{i.e.}, strings), we have solved partially the open problem stated in \cite[Sect.~12]{cirdroignvog10}. It would be interesting to find strong bimonoids such that it is decidable whether,  for arbitrary wta $\A$ over such bimonoids, (i)  $\A$ is crisp-determinizable or (ii)  $\im(h_{\V(\A)})$ is finite.
Finite order property of a wta $\A$ is  also important for crisp-determinization because if $\A$ has this property, then a crisp-deterministic wta can be constructed which is r-equivalent to $\A$. We also show that for arbitrary wta $\A$ it is undecidable whether $\A$ has the finite order property.

Our paper is organized as follows.
In Section \ref{sect:preliminaries} we recall the necessary definitions and concepts. We have tried to make the paper self-contained.
In Section \ref{sect:wta} we recall the concept of wta over strong bimonoid with its initial algebra semantics and its run semantics.
We give a complete proof for the result in the folklore that
\begin{compactitem}
 \item for bottom-up deterministic wta the two kinds of semantics coincide (Theorem \ref{theo:run-sem=initial-sem}).
  \end{compactitem}

  In Section \ref{sect:cdwbts} we introduce the auxiliary concept of algebras with root weights and define two basic constructions with them. These algebras may be infinite and the semantics of each algebra with root weights is a weighted tree language.

In Section \ref{sec:Crisp-deterministic weighted tree automata} we show that
crisp-deterministic wta and finite algebras with root weights are essentially the same concepts.
Moreover,
\begin{compactitem}
\item we prove a characterization of the class of weighted tree languages which are recognized by crisp-deterministic wta in terms of finite algebras with root weights, as well as in terms of  recognizable step mappings (Theorem \ref{lm:cdwta_finite_image}).
\end{compactitem}

In Section \ref{sect:initial-algebra-semantics} we consider the problem whether, given a wta $\A$, a crisp-deterministic wta can be constructed such that it is i-equivalent to $\A$. For each wta $\A$, we introduce the algebra $\Ne(\A)$ with root weights, which we call the Nerode algebra of $\A$. We show that $\A$ and $\Ne(\A)$ are semantically equivalent. As a consequence, 
\begin{compactitem}
\item we obtain that if $\Ne(\A)$ is finite, then  
  $\A$ and the crisp-deterministic wta $\rel(\Ne(\A))$ derived from $\Ne(\A)$ are i-equivalent (Theorem \ref{th:cd-init}),
\item we characterize the case that $\Ne(\A)$ is finite (Theorem \ref{theo:Nerode_conditions}),
\item we give an isomorphic representation of $\Ne(\A)$ (Theorem \ref{thm:characterizatio-of-Nerode-automaton}), and
  \item we show that if $\Ne(\A)$ is finite, then $\rel(\Ne(\A))$ is minimal among all crisp-deterministic wta which satisfy a certain condition concerning the initial algebra semantics of $\A$ (Theorem~\ref{theo:minimal}).
  \end{compactitem} (However, the last result  does not mean that $\rel(\Ne(\A))$ is minimal among all  crisp-deterministic wta which are i-equivalent to $\A$.) Moreover,
  \begin{compactitem}
  \item we give sufficient conditions which guarantee that $\Ne(\A)$ is finite (Corollary \ref{cor:cd-init}), and
  \item we design an algorithm of which the input is an arbitrary wta $\A$, and which terminates if $\Ne(\A)$ is finite and delivers the crisp-deterministic wta $\rel(\Ne(\A))$ (Algorithm \ref{alg:construct-rel-Ne(A)}).
  \end{compactitem}
  In particular, the algorithm terminates if the above mentioned sufficient conditions hold.

  In Section \ref{sect:run-semantics} we consider the problem whether, given a wta $\A$, a crisp-deterministic wta can be constructed such that
  it is r-equivalent to $\A$. We introduce the concept of finite order property for a wta $\A$. Then
  \begin{compactitem}
  \item we prove that if $\A$ has the finite order property, then a crisp-deterministic wta can be constructed which is r-equivalent to $\A$ (Theorem \ref{theo:A_pi_equals_A_run}),
  \item we give sufficient conditions which guarantee that $\A$ has the finite order property (Corollary \ref{thm:det-run}), and
  \item we give an algorithm of which the input is an arbitrary wta $\A$ which has the finite order property, and which  delivers the crisp-deterministic wta $\R(\A)$ which is r-equivalent to $\A$ (Algorithm \ref{alg:construct-R(A)} ).
    \end{compactitem}

    In Section \ref{sect:undecidability} we prove that it is undecidable whether
\begin{compactitem}
\item an arbitrary bottom-up deterministic wta  is crisp-determinizable (Theorem \ref{thm:A-init-rec-step-map}),
\item for an arbitrary bottom-up deterministic wta, its Nerode algebra is finite  (Theorem \ref{thm:finite-index-undec}), and
\item an arbitrary bottom-up deterministic wta  has the finite order property (Theorem~\ref{thm:fin-ord-prop-undec}).
\end{compactitem}


\section{Preliminaries}\label{sect:preliminaries}

\subsection{Basic concepts}
We denote by $\N$  the set of natural numbers $\{0,1,2,\ldots\}$ and by $\N_+$  the set $\N \setminus \{0\}$. For every $m,n \in \N$, we define $[m,n]= \{i\in \N \mid m \le i\le n\}$. We abbreviate $[1,n]$ by $[n]$. Hence, $[0]=\emptyset$.

Let $f:A \to B$ be a mapping.
We define the {\em image} of $f$ to be the set $\im(f) = \{f(a) \mid a \in A\}$.  Let $A'\subseteq A$. The {\em restriction of $f$ to $A'$} is the mapping $f|_{A'}: A' \to B$ defined by $f|_{A'}(a)=f(a)$ for each $a\in A'$. We denote the set of all mappings $f:A \to B$  by $B^A$.
For two mappings $f:A \to B$ and $g:B \to C$, the {\em composition of $f$ and $g$} is denoted by $g \circ f$ and is defined by $(g \circ f)(a) = g(f(a))$ for each $a \in A$. 

Let $A$ be a set. Then $|A|$ denotes the cardinality of $A$ and $\cP(A)$ its set of subsets. For each $k\in \N$, a mapping $f:A^k \to A$ is also called a {\em $k$-ary operation on $A$}. The set of all $k$-ary operations on $A$ is denoted by $\Ops^{(k)}(A)$ and we define $\Ops(A)=\bigcup_{k\in \N}\Ops^{(k)}(A)$. Let $A'\subseteq A$ and $O\subseteq \Ops(A)$. We say that $A'$ is closed under operations of $O$ if for every $f\in O$ of arity $k$ and $a_1,\ldots,a_k\in A'$ we have $f(a_1,\ldots,a_k)\in A'$.

An {\em alphabet} is a finite and nonempty set $X$ of symbols. A {\em string over $X$} is a finite sequence $x_1 \ldots x_n$ with $n \in \N$ and $x_i \in X$ for each $i \in [n]$. We denote by $\varepsilon$ the empty sequence (where $n=0$) and by $X^*$ the {\em set of all strings} (or {\em words}) {\em over $X$}.

\subsection{Trees and tree languages}

We assume that the reader is familiar with the fundamental concepts and results of the theory of tree automata and tree languages \cite{eng75-15,gecste84,tata07}. Here we only recall some basic definitions. 

A {\em ranked alphabet} is a tuple $(\Sigma,\rk)$ where $\Sigma$ is an alphabet and $\rk: \Sigma \to \N$ is a mapping called {\em rank mapping}. For each $k \in \N$, we define $\Sigma^{(k)} = \{\sigma \in \Sigma \mid \rk(\sigma) = k\}$. Sometimes we write $\sigma^{(k)}$ to mean that $\sigma \in \Sigma^{(k)}$. We denote $\max\{k \in \N \mid \Sigma^{(k)} \neq \emptyset\}$ by $\maxrk(\Sigma)$. When $\rk$ is irrelevant or it is clear from the context, then we abbreviate $(\Sigma,\rk)$ by $\Sigma$. 

Let $\Sigma$ be a ranked alphabet and $H$ a set such that $\Sigma \cap H=\emptyset$. The {\em set of $\Sigma$-trees over $H$}, denoted by $T_\Sigma(H)$, is the smallest set $T$ such that (i) $\Sigma^{(0)} \cup H \subseteq T$ and (ii) if $k \in \N_+$, $\sigma \in \Sigma^{(k)}$, and $\xi_1,\ldots,\xi_k \in T$, then $\sigma(\xi_1,\ldots,\xi_k) \in T$. We will abbreviate $T_\Sigma(\emptyset)$ by $T_\Sigma$.

\begin{quote}\em In the rest of this paper, $\Sigma$ will denote an arbitrary ranked alphabet if not specified otherwise. In addition, we assume that $\Sigma^{(0)} \neq \emptyset$.
  \end{quote}

  Each subset $L \subseteq T_\Sigma$ is called a {\em $\Sigma$-tree language} (or just: {\em tree language}). A tree language $L \subseteq T_\Sigma$ is {\em recognizable} if there is a {\em finite-state tree automaton over $\Sigma$} which recognizes $L$.

In order to avoid repetition of the quantifications of $k$, $\sigma$, and $\xi_1,\ldots,\xi_k$, we henceforth only write that we consider a "$\xi \in T_\Sigma$ of the form $\xi=\sigma(\xi_1,\ldots,\xi_k)$" or "for every $\xi=\sigma(\xi_1,\ldots,\xi_k)$".
For every $\gamma \in \Sigma^{(1)}$ and $\xi \in T_\Sigma$ we abbreviate  the tree $\gamma(\xi)$ by $\gamma \xi$. Moreover, we abbreviate the tree $\gamma(\ldots \gamma(\xi)\ldots)$ with $n$ occurrences of $\gamma$, by $\gamma^n \xi$.

We define the {\em set of positions} of trees as the mapping $\pos: T_\Sigma(H) \to \cP(\N_+^*)$ as follows: (i) for each $\alpha \in (\Sigma^{(0)} \cup H)$, we define $\pos(\alpha)=\{\varepsilon\}$ and (ii) for every $\xi=\sigma(\xi_1,\ldots,\xi_k)$, where $k\in \N_+$ we define $\pos(\xi)=\{\varepsilon\} \cup \{iv \mid i \in [k], v \in \pos(\xi_i)\}$. 

Now for every $\xi \in T_\Sigma(H)$ and $w \in \pos(\xi)$ the {\em label of $\xi$ at $w$}, denoted by $\xi(w)$, and the {\em subtree of $\xi$ at $w$}, denoted by $\xi|_w$, are defined as follows: (i) for each $\alpha \in (\Sigma^{(0)} \cup H)$, we define $\alpha(\varepsilon)=\alpha$, and $\alpha|_\varepsilon=\alpha$ and (ii) for every $\xi=\sigma(\xi_1,\ldots,\xi_k)$ with $k\in \N_+$, we define $\xi(\varepsilon)=\sigma$ and $\xi|_\varepsilon=\xi$, and for every $i\in [k]$ and $v \in \pos(\xi_i)$, we define $\xi(iv)=\xi_i(v)$ and $\xi|_{iv}=\xi_i|_v$.

For every $\Delta\subseteq (\Sigma\cup H)$
and $\xi \in T_\Sigma(H)$, we define $\pos_\Delta(\xi)=\{w \in \pos(\xi)\mid \xi(w)\in \Delta\}$.

Lastly, we define $\Sigma$-contexts. Let $\square$ be a special symbol such that $\square \not\in \Sigma$. For this, we define the notion of elementary context as follows: for every $k\in \N_+$, $\sigma\in \Sigma^{(k)}$, $i\in [k]$, and $\xi_1,\ldots,\xi_{i-1}, \xi_{i+1},\ldots, \xi_k \in T_\Sigma$, the tree 
\(\sigma(\xi_1,\ldots,\xi_{i-1}, \Box,\xi_{i+1},\ldots, \xi_k)\)
is an {\em elementary $\Sigma$-context}. The set of all elementary $\Sigma$-contexts is denoted by $C^{\mathrm e}_\Sigma$.

Then the \emph{set of $\Sigma$-contexts}, denoted by  $C_\Sigma$,  is the smallest set $C$ which satisfies the following two conditions:
\begin{itemize}
\item[(i)] $\Box \in C$, and
\item[(ii)] for every $e\in C^{\mathrm{e}}_\Sigma$ of the form $\sigma(\xi_1,\ldots,\xi_{i-1}, \Box,\xi_{i+1},\ldots, \xi_k)$ and for every $c\in C$, the tree $\sigma(\xi_1,\ldots,\xi_{i-1}, c,\xi_{i+1},\ldots, \xi_k)$ is in $C$.
\end{itemize}
Clearly, $C_\Sigma^{\mathrm{e}} \subseteq C_\Sigma$.

For every $c \in C_\Sigma$ and $\xi \in T_\Sigma \cup C_\Sigma$, we denote by $c[\xi]$ the tree  obtained from $c$ by replacing the unique occurrence of $\square$  by $\xi$. We note that $c[\xi] \in T_\Sigma$ if $\xi \in T_\Sigma$, and $c[\xi] \in C_\Sigma$ if $\xi \in C_\Sigma$.

\subsection{Algebraic structures}

We assume that the reader is familiar with the basic concepts and results of universal algebra \cite{gra68,bursan81} as well as basic concepts of semigroups and strong bimonoids \cite{drostuvog10,cirdroignvog10}. However, here we recall those concepts which we will use in the paper without any reference.

\paragraph{Universal algebra.}
A {\em $\Sigma$-algebra} is a pair $(A,\theta)$ which consists of a nonempty set $A$ and a $\Sigma$-indexed family $\theta=(\theta(\sigma)\mid \sigma \in \Sigma)$ over $\Ops(A)$ such that $\theta(\sigma): A^k \to A$ for every $k \in \N$ and $\sigma \in \Sigma^{(k)}$. Then $A$ is the {\em carrier set} and $\theta$ is the {\em $\Sigma$-interpretation} (or: {\em interpretation of $\Sigma$}), of that $\Sigma$-algebra. We call a $\Sigma$-algebra {\em finite} if its carrier set  is finite.
Next we show two examples of $\Sigma$-algebras.
\begin{ex}\hfill
  \begin{enumerate}
  \item The {\em $\Sigma$-term algebra} is the $\Sigma$-algebra $(T_\Sigma,(\overline{\sigma}\mid \sigma\in\Sigma))$, where  $\overline{\sigma}(\xi_1,\ldots,\xi_k)=\sigma(\xi_1,\ldots,\xi_k)$ for every $k\in \N$, $\sigma \in \Sigma^{(k)}$, and $\xi_1,\ldots,\xi_k \in T_\Sigma$.
        \item Let $\Sigma=\{+^{(2)}, *^{(2)}, -^{(2)}\}$. Then $(\mathbb{Z},\theta)$ is a $\Sigma$-algebra, where $\theta(+)$, $\theta(*)$, and $\theta(-)$ are the usual addition, multiplication, and substraction over integers, respectively. \hfill $\Box$
     \end{enumerate} 
\end{ex}

\begin{quote}\em As usual, if confusion is ruled out, we identify the $\Sigma$-algebra $(A,\theta)$ with its carrier set $A$.
\end{quote}

Let $(A,\theta)$ be a $\Sigma$-algebra.  A $\Sigma$-algebra $(A',\theta')$ is a {\em subalgebra of $A$}
if $A' \subseteq A$ and, for every $k \in \N$ and $\sigma \in \Sigma^{(k)}$,  we have $\theta'(\sigma)=\theta(\sigma)|_{(A')^k}$.
Let $X \subseteq A$ and $\Sigma'$ be a ranked alphabet such that $\Sigma'^{(k)}\subseteq \Sigma^{(k)}$ for each $k \in \N$. 
We denote by $\langle X \rangle_{\Sigma'}$ the smallest subset of $A$ which contains $X$ and is closed under operations of $\{\theta(\sigma)\mid \sigma \in \Sigma'\}$. Then  $(\langle X \rangle_{\Sigma'},\theta')$ is a  $\Sigma'$-algebra, where  $\theta'(\sigma)(a_1,\ldots,a_k)=\theta(\sigma)(a_1,\ldots,a_k)$ for every $k \in \N$, $\sigma \in \Sigma'^{(k)}$, and $a_1,\ldots,a_k\in \langle X \rangle_{\Sigma'}$.
In particular, $(\langle X \rangle_{\Sigma},\theta')$ is a  $\Sigma$-algebra, which we call the {\em subalgebra of $(A,\theta)$ generated by $X$}.
The \emph{smallest subalgebra of $(A,\theta)$} is its subalgebra generated by $\emptyset$.

We say that $(A,\theta)$ is {\em locally finite} if for each finite subset $X \subseteq A$ the set $\langle X \rangle_{\Sigma}$ is finite.

Let $((A_i,\theta_i)\mid i \in [n])$ be a family of $\Sigma$-algebras. The {\em direct product of $((A_i,\theta_i)\mid i \in [n])$} is the $\Sigma$-algebra $(A,\theta)$ where
\begin{compactitem}
\item $A = A_1 \times \ldots \times A_n$ and
\item for every $k\in\N$, $\sigma \in \Sigma^{(k)}$, $(a_{11},\ldots,a_{1n}),\ldots,(a_{k1},\ldots,a_{kn}) \in A$ we have that
\begin{equation}\label{eq:product-theta}
\begin{aligned}
&\theta(\sigma)((a_{11},\ldots,a_{1n}),\ldots,(a_{k1},\ldots,a_{kn}))\\
= & \ (\theta_1(\sigma)(a_{11},\ldots,a_{k1}),\ldots,\theta_n(\sigma)(a_{1n},\ldots,a_{kn}))\ .
\end{aligned}
\end{equation}
\end{compactitem}

Let $(A_1,\theta_1)$ and $(A_2,\theta_2)$ be two $\Sigma$-algebras and $h:A_1 \to A_2$  a mapping. Then $h$ is a {\em $\Sigma$-algebra homomorphism} (from $A_1$ to $A_2$) if for every $k\in\N$, $\sigma \in \Sigma^{(k)}$, and $a_1,\ldots,a_k \in A_1$, we have
\(h(\theta_1(\sigma)(a_1,\ldots,a_k))=\theta_2(\sigma)(h(a_1),\ldots,h(a_k))\).
If $h$ is bijective, then $h$ is a {\em $\Sigma$-algebra isomorphism}. If there is such an isomorphism, then we say that $A_1$ and $A_2$ {\em isomorphic}. We denote this fact by $A_1 \cong A_2$. 

Let $(A_1,\theta_1)$ and $(A_2,\theta_2)$ be two $\Sigma$-algebras and $h:A_1 \to A_2$  a $\Sigma$-algebra homomorphism.
Then $\im(h)$ is closed under the operations of $\{\theta_2(\sigma)\mid \sigma \in \Sigma\}$ and thus $(\im(h),\theta')$
is a subalgebra of $A_2$, where $\theta'(\sigma)(h(a_1),\ldots,h(a_k))=h(\theta_1(\sigma)(a_1,\ldots,a_k))$ for every $k \in \N$, $\sigma \in \Sigma^{(k)}$, and $a_1,\ldots,a_k\in A_1$. We call $(\im(h),\theta')$ the {\em $h$-image of $A_1$ (in $A_2$)}.

It is well known that the $\Sigma$-term algebra $T_\Sigma$ is {\em initial} in the class of all $\Sigma$-algebras, which means that
for every $\Sigma$-algebra $A$, there is a unique $\Sigma$-algebra homomorphism from $T_\Sigma$ to $A$. We denote by $h_A$ the unique homomorphism from $T_\Sigma$ to $A$.

\begin{prop} \label{prop:smallest-subalgebra}
Let $(A,\theta)$ be a $\Sigma$-algebra. The smallest subalgebra of $A$ is the $h_A$-image of $T_\Sigma$.
\end{prop}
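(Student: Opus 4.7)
The plan is to prove the two inclusions $\im(h_A)\subseteq \langle \emptyset\rangle_\Sigma$ and $\langle \emptyset\rangle_\Sigma \subseteq \im(h_A)$ separately, using only the definition of $\langle\cdot\rangle_\Sigma$ as the smallest closed-under-operations subset and the fact that $\im(h_A)$ carries a subalgebra structure (already recorded in the paragraph preceding the proposition).

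For the inclusion $\im(h_A)\subseteq \langle \emptyset\rangle_\Sigma$ I would proceed by structural induction on $\xi\in T_\Sigma$, showing that $h_A(\xi)\in \langle \emptyset\rangle_\Sigma$. In the base case $\xi=\alpha\in\Sigma^{(0)}$, the homomorphism property applied to the $0$-ary symbol gives $h_A(\alpha)=\theta(\alpha)()$, which lies in $\langle\emptyset\rangle_\Sigma$ because this set is closed under the nullary operation $\theta(\alpha)$. In the inductive step $\xi=\sigma(\xi_1,\ldots,\xi_k)$ with $k\ge 1$, the homomorphism property yields $h_A(\xi)=\theta(\sigma)(h_A(\xi_1),\ldots,h_A(\xi_k))$; by induction hypothesis each $h_A(\xi_i)$ lies in $\langle\emptyset\rangle_\Sigma$, and then closure under $\theta(\sigma)$ places $h_A(\xi)$ there as well. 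Note that the induction covers all of $T_\Sigma$ only because of the standing assumption $\Sigma^{(0)}\neq\emptyset$, so that $T_\Sigma$ is nonempty and the base case is nonvacuous.

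The reverse inclusion $\langle \emptyset\rangle_\Sigma\subseteq\im(h_A)$ is immediate: by the remark preceding the proposition, $\im(h_A)$ is closed under every operation $\theta(\sigma)$ with $\sigma\in\Sigma$, and it trivially contains $\emptyset$. Since $\langle\emptyset\rangle_\Sigma$ is by definition the \emph{smallest} subset of $A$ containing $\emptyset$ and closed under these operations, it must be contained in $\im(h_A)$. Combining both inclusions gives $\langle\emptyset\rangle_\Sigma=\im(h_A)$, and since both carry the same $\Sigma$-interpretation inherited from $\theta$, they coincide as subalgebras, which is exactly the claim.

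There is no real obstacle here; the statement is a standard universal-algebra fact. The only point requiring mild care is remembering that $\langle \emptyset\rangle_\Sigma$ is nonempty precisely because closure forces it to contain the interpretation of every nullary symbol, which is guaranteed by $\Sigma^{(0)}\neq\emptyset$; without that hypothesis the empty set would be closed and the smallest subalgebra would fail to exist in the sense of the paper's definition (which requires a nonempty carrier).
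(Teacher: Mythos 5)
Your proof is correct; the paper states this proposition without proof, treating it as a standard fact of universal algebra, and your argument (structural induction on $\xi$ for $\im(h_A)\subseteq\langle\emptyset\rangle_\Sigma$, minimality of $\langle\emptyset\rangle_\Sigma$ together with the closure of $\im(h_A)$ for the converse) is exactly the standard argument the paper implicitly relies on. Your remark that $\Sigma^{(0)}\neq\emptyset$ is what makes the base case nonvacuous and the smallest subalgebra nonempty is also consistent with the paper's standing assumption.
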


For every $c \in C_\Sigma$, we define the mapping $c^A : A \to A$ by induction on $c$ as follows. \label{page:defintion-c^A}
\begin{itemize}
\item[(i)] If $c = \Box$, then $c^{A}(a) = a$ for each $a\in A$.

\item[(ii)] If $c = e[c']$ for some $e = \sigma(\xi_1,\ldots,\xi_{i-1}, \Box,\xi_{i+1},\ldots, \xi_k)$ in $C_\Sigma^{\mathrm{e}}$ and $c' \in C_\Sigma$, then
  \[
    c^A(a) = \theta(\sigma)(h_A(\xi_1),\ldots,h_A(\xi_{i-1}),(c')^A(a),h_A(\xi_{i+1}),\ldots, h_A(\xi_k))
  \]
 for each $a\in A$.
\end{itemize}

\begin{lm}(cf. \cite[Prop.~2.5]{gogthawagwri77b}) \label{lm:hom-decomposition} Let $(A,\theta)$ be a $\Sigma$-algebra. For every $c \in C_\Sigma$ and $\xi \in T_\Sigma$, we have $h_A(c[\xi])=c^A(h_A(\xi))$.
\end{lm}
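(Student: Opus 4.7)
The plan is to prove the identity by structural induction on the context $c \in C_\Sigma$, following the very same recursive clauses that were used to define $c^A$. There are no real obstacles here; the proof is essentially bookkeeping that matches the inductive definition of $c^A$ against the homomorphism condition on $h_A$.

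For the base case $c = \Box$, I would observe that $c[\xi] = \xi$ by the convention for substitution into $\Box$, and clause (i) of the definition of $c^A$ gives $c^A(a) = a$ for every $a \in A$. Therefore $h_A(c[\xi]) = h_A(\xi) = c^A(h_A(\xi))$.

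For the inductive step, suppose $c = e[c']$ with $e = \sigma(\xi_1,\ldots,\xi_{i-1},\Box,\xi_{i+1},\ldots,\xi_k) \in C_\Sigma^{\mathrm{e}}$ and $c' \in C_\Sigma$, and assume as induction hypothesis that $h_A(c'[\xi]) = (c')^A(h_A(\xi))$. By unfolding the substitution, $c[\xi] = \sigma(\xi_1,\ldots,\xi_{i-1}, c'[\xi], \xi_{i+1},\ldots,\xi_k)$, where the outer symbol is interpreted in $T_\Sigma$ by $\overline{\sigma}$. Applying the homomorphism property of $h_A$ from $T_\Sigma$ to $A$ at the root, I obtain
\[
h_A(c[\xi]) = \theta(\sigma)\bigl(h_A(\xi_1),\ldots,h_A(\xi_{i-1}), h_A(c'[\xi]), h_A(\xi_{i+1}),\ldots, h_A(\xi_k)\bigr).
\]
Substituting the induction hypothesis $h_A(c'[\xi]) = (c')^A(h_A(\xi))$ into the $i$-th argument and comparing with clause (ii) of the definition of $c^A$ yields exactly $c^A(h_A(\xi))$, completing the induction.

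The only subtlety worth noting is that the induction is on $c$, not on $\xi$, so $\xi$ is fixed throughout; all the terms $\xi_1,\ldots,\xi_{i-1},\xi_{i+1},\ldots,\xi_k$ appearing in the elementary contexts have their $h_A$-values absorbed into $c^A$ exactly as dictated by the definition on page \pageref{page:defintion-c^A}. Since that definition was crafted precisely to make the decomposition identity hold, no further ingenuity is needed.
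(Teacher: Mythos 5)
Your proof is correct and follows exactly the same route as the paper's: structural induction on $c$, with the base case $c=\Box$ immediate and the inductive step unfolding $c=e[c']$, applying the homomorphism property of $h_A$ at the root, and substituting the induction hypothesis into the $i$-th argument to match clause (ii) of the definition of $c^A$. No differences worth noting.
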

\begin{proof} We prove by induction on $c$. For $c=\Box$ the proof is obvious.

Let $c=e[c']$, where $e=\sigma(\xi_1,\ldots,\xi_{i-1}, \Box,\xi_{i+1},\ldots, \xi_k)$ is in $C_\Sigma^{\mathrm{e}}$ and $c' \in C_\Sigma$. Then
\begin{align*}
h_A(c[\xi])& =h_A\big( \sigma(\xi_1,\ldots,\xi_{i-1}, c'[\xi],\xi_{i+1},\ldots, \xi_k)\big) \\
& = \theta(\sigma)\big(h_A(\xi_1),\ldots,h_A(\xi_{i-1}), h_A(c'[\xi]),h_A(\xi_{i+1}),\ldots, h_A(\xi_k)\big) \\
& \hspace*{5mm} \text{(since $h_A$ is a homomorphism)} \\
& = \theta(\sigma)\big(h_A(\xi_1),\ldots,h_A(\xi_{i-1}), (c')^A(h_A(\xi)),h_A(\xi_{i+1}),\ldots, h_A(\xi_k)\big)\\
& \hspace*{5mm} \text{(by I.H)}\\
& =c^A(h_A(\xi)). \qedhere
\end{align*}
\end{proof}

\paragraph{Strong bimonoids.}
Now we recall a particular class of $\Sigma$-algebras: strong bimonoids \cite{drostuvog10,cirdroignvog10,rad10}. This is specified by a particular $\Sigma$ and particular algebraic laws which involve its operations. Here, as usual,  we abbreviate $\theta(\sigma)$ by $\sigma$ for every $\sigma \in \Sigma$.

A {\em strong bimonoid} is an algebra $(B,\oplus,\otimes,\0,\1)$ where $(B,\oplus,\0)$ is a commutative monoid, $(B,\otimes,\1)$ is a monoid, $\0 \neq \1$, and $\0$ acts as multiplicative zero, \textit{i.e.}, $b \otimes \0 = \0 \otimes b = \0$ for every $b \in B$. We call the operations $\oplus$ and $\otimes$ summation and multiplication, respectively.   In order to avoid parentheses, we associate with multiplication higher priority than with summation. Then we may write, \textit{e.g.}, $a \oplus b \otimes c$ instead of $a \oplus (b \otimes c)$.

Let $(B,\oplus,\otimes,\0,\1)$ be a strong bimonoid. It is
\begin{compactitem}
  \item {\em commutative} if $\otimes$ is commutative,
  \item {\em right distributive} if $(a\oplus b)\otimes c = a\otimes c \oplus b \otimes c$  for every $a,b,c\in B$,
  \item {\em left distributive} if $a\otimes (b\oplus c) = a\otimes b \oplus a \otimes c$ for every $a,b,c\in B$, and
  \item {\em bi-locally finite} if $(B,\oplus,\0)$ and $(B,\otimes,\1)$ are locally finite.
  \end{compactitem}

  An element $b \in B$ is {\em additively idempotent} if $b\oplus b=b$. Moreover, $B$ is {\em additively idempotent} if each $b\in B$ is additively idempotent.

  Let $b \in B$ and  $n\in \N$. We abbreviate $b \oplus \ldots \oplus b$, where $b$ occurs $n$ times, by $nb$. In particular, $0b=\0$. We abbreviate $\langle b \rangle_{\{\oplus\}}=\{nb \mid n \in \N\}$ by $\langle b \rangle$.  
If $\langle b \rangle$ is finite, then we say that $b$ has a {\em finite order in $(B,\oplus,\0)$}.
In this case there is a least number $i \in \N_+$ such that 
$ib=(i+k)b$ for some $k \in \N_+$, and there is a least number $p \in \N_+$ such that $ib = (i+p)b$.
  We call $i$ and $p$ the {\em index (of $b$)} and the {\em period (of $b$)}, respectively, and denote them by $i(b)$ and $p(b)$, respectively. Moreover, we call $i+p-1$, \textit{i.e.}, the number of elements of $\langle b \rangle$, the {\em order of $b$}. We illustrate the index and the period of $b$ in Figure \ref{fig:ib-pb}, where the directed arrow means addition of $b$.

  \begin{figure}[t]
    \small
    \centering
    \begin{tikzpicture}
        \tikzset{element/.style={circle,fill=black,inner sep=0pt,minimum size=3pt}}
        \node[element,label={below:$b$}] at (0,0) (b) {};
        \node[element,label={below:$2b$}] at (1,0) (b2) {};
        \node at (2,0) (ld) {$\ldots$};
        \node[element,label={below:$ $}] at (3,0) (ib-1) {};
        \node[element,label={[xshift=1.25cm, yshift=-2em]$i(b)b =(i(b)+p(b))b$}] at (4,0) (ib) {};
        \node[element, shift={(150:.75)}] at (4,.75) (p4) {};
        \node[element, shift={(210:.75)}] at (4,.75) (p5) {};
        \node[element, shift={(330:.75)}] at (4,.75) (p1) {};
        \node[element, shift={(30:.75)}] at (4,.75) (p2) {};
        \node[shift={(90:.75)}] at (4,.75) (p3) {$\cdots$};
        \draw[->, shorten <=1mm, shorten >=1mm] (b) -- (b2);
        \draw[->, shorten <=1mm] (b2) -- (ld);
        \draw[->, shorten >=1mm] (ld) -- (ib-1);
        \draw[->, shorten <=1mm, shorten >=1mm] (ib-1) -- (ib);
        \draw[->, shorten <=1mm, shorten >=1mm] (ib) edge[bend right] (p1) ;
        \draw[->, shorten <=1mm, shorten >=1mm] (p1) edge[bend right] (p2) ;
        \draw[->, shorten <=1mm, shorten >=1mm] (p2) edge[bend right] (p3) ;
        \draw[->, shorten <=1mm, shorten >=1mm] (p3) edge[bend right] (p4) ;
        \draw[->, shorten <=1mm, shorten >=1mm] (p4) edge[bend right] (p5) ;
        \draw[->, shorten <=1mm, shorten >=1mm] (p5) edge[bend right] (ib) ;
        \draw[->] (4.75,0.175) arc (-40:220:1);
        \node at (5.5,1) {$p(b)$};
        \draw[decorate, decoration={brace,amplitude=10pt}] (4,-.65) -- (0,-.65) node[midway,below,yshift=-10pt] {$i(b)$};
    \end{tikzpicture}
    \caption{\label{fig:ib-pb} Illustration of the index $i(b)$ and the period $p(b)$ of $b$.}
\end{figure}
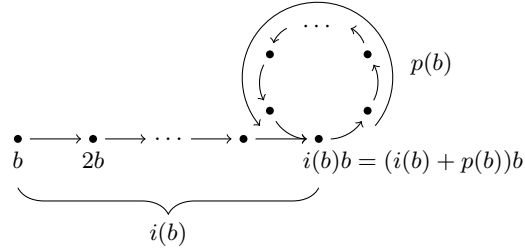
 
We extend $\oplus$ to every finite set $I$ and family $(a_i \mid i \in I)$ of elements of $B$. We denote the extended operation by $\bigoplus$ and define it as follows:
\[
  \bigoplus(a_i \mid i \in I) =
  \begin{cases}
    a_{i_1} \oplus \ldots \oplus a_{i_n} & \text{ if } I=\{i_1,\ldots,i_n\} \text{ for some $n \in \mathbb{N}_+$}\\
    \mathbb{0} & \text{ if } I=\emptyset \enspace.
  \end{cases}
\]
Since $\oplus$ is commutative, the sum above is well defined. Sometimes we abbreviate $\bigoplus(a_i \mid i \in I)$ by $\bigoplus_{i \in I}a_i$. If $I=[k]$ for some $k\in \N$, then we write $\bigoplus_{i=1}^k a_i$.

Moreover, we extend $\otimes$ to every $k\in \N$  and family $(a_i \mid i \in [k])$ of elements of $B$. We denote the extended operation by $\bigotimes$ and define it by:
\[
  \bigotimes(a_i \mid i \in [k]) =
  \begin{cases}
    a_{1} \otimes \ldots \otimes a_{k} & \text{ if } k \in \mathbb{N}_+\\
    \mathbb{1} & \text{ if } k=0 \enspace.
  \end{cases}
\]
We will abbreviate $\bigotimes(a_i \mid i \in [k])$ by $\bigotimes_{i=1}^k a_i$.

For every $B_1,B_2 \subseteq B$, we denote by $B_1 \otimes B_2$ the set $\{a \otimes b \mid a \in B_1, b \in B_2\}$.

A {\em semiring} is a strong bimonoid which is left distributive and right distributive. 

\begin{quote}
\em In the rest of this paper, $(B,\oplus,\otimes,\0,\1)$ will denote an arbitrary strong bimonoid if not specified otherwise.
\end{quote}

In the following example, we recall particular strong bimonoids and semirings which we will use later. We refer the reader for more examples of strong bimonoids (also those which are not semirings) to \cite[Ex.~1]{drostuvog10} and \cite[Ex.~2.1]{cirdroignvog10}.

\begin{ex} \hfill
\begin{enumerate}
\item The structure $\mathrm{TBM}=(\N_\infty,+,\min,0,\infty)$ \cite[Ex.~1(1)]{drostuvog10} with $\N_\infty = \N \cup \{\infty\}$ and the usual extensions of $+$ and $\min$ from $\N$ to $\N_\infty$ is a commutative strong bimonoid, called the {\em tropical bimonoid}. However, it is not bi-locally finite because $(\N_\infty,+,0)$ is not locally finite.
Moreover, it is not a semiring, because there are $a, b, c \in \N_\infty$ with $\min(a,b+c) \neq \min(a,b) + \min(a,c)$ (\textit{e.g.}, take $a = b = c \neq 0$).
\item The structure $\mathrm{TSR}=(\N_\infty,\min,+,\infty,0)$ is a semiring, called the {\em tropical semiring}.
\item The {\em Boolean semiring} is the semiring $(\mathbb{B},\vee,\wedge,0,1)$ where $\mathbb{B}=\{0,1\}$ (the truth values) and $\vee$ and $\wedge$ denote disjunction and conjunction, respectively. \hfill $\Box$
\end{enumerate}
\end{ex}

\subsection{Weighted tree languages} \label{subsect:weighted-tree-languages}
Let $H$ be a set disjoint with the ranked alphabet $\Sigma$ and the strong bimonoid $B$. A {\em weighted tree language over $\Sigma$, $H$ and $B$} is a mapping $r:T_\Sigma(H) \to B$. If $H = \emptyset$, then we say just weighted tree language over $\Sigma$ and $B$ or {\em $(\Sigma,B)$-weighted tree language}.

Let $r$ be  $(\Sigma,B)$-weighted tree language. The \emph{support of $r$}, denoted by $\supp(r)$, is the set $\{\xi \in T_\Sigma \mid r(\xi)\not= \0\}$.

Let $r$ and $r'$ be $(\Sigma,B)$-weighted tree languages and $b \in B$. We define the $(\Sigma,B)$-weighted tree languages $r\oplus r'$ and $b \otimes r$  by $(r\oplus r')(\xi)=r(\xi)\oplus r'(\xi)$ and $(b \otimes r)(\xi)=b\otimes r(\xi)$, respectively, for every $\xi \in T_\Sigma$. Moreover, we denote by $r^{-1}(b)$ the tree language defined by $r^{-1}(b) = \{\xi \in T_\Sigma \mid r(\xi) = b\}$.

Let $L\subseteq T_\Sigma$ be a tree language. The {\em characteristic mapping of $L$ with respect to $B$} is the mapping
$\mathbb{1}_{(B,L)}: T_\Sigma \to B$ defined for each $\xi \in T_\Sigma$ by:  $\1_{(B,L)}(\xi) = \1$ if $\xi \in L$ and $\1_{(B,L)}(\xi) = \0$ otherwise. 

Next let $r: T_\Sigma \to B$ be a $(\Sigma,B)$-weighted tree language and $\xi \in T_\Sigma$. The {\em quotient of $r$ with respect $\xi$}  is the weighted context language $\xi^{-1}r : C_\Sigma \to B$ defined by $(\xi^{-1}r)(c) = r(c[\xi])$ for every $c \in C_\Sigma$. In particular, $(\xi^{-1}r)(\square) = r(\square[\xi]) = r(\xi)$.


\section{Weighted tree automata}\label{sect:wta}

In this section we recall the definition of weighted tree automata \cite{fulvog09}, show examples, and compare the initial algebra semantics with the run semantics.

\subsection{The model}

A {\em weighted tree automaton over $\Sigma$ and $B$} (for short: $(\Sigma,B)$-wta, or just: wta) is a tuple $\A=(Q,\delta,F)$ where
\begin{compactitem}
\item $Q$ is a finite nonempty set ({\em states}),
\item $\delta = (\delta_k \mid k \in \N)$ is a family of mappings $\delta_k: Q^k \times \Sigma^{(k)} \times Q \to B$ ({\em transition mappings}), and
\item $F:Q \to B$ is a mapping ({\em root weight vector}).
\end{compactitem}

\begin{quote}\em In the rest of this paper, we abbreviate formulas of the form $\delta_k((q_1,\ldots,q_k),\sigma,q)$ by $\delta_k(q_1\ldots q_k,\sigma,q)$, for every $k \in \N$, $\sigma \in \Sigma^{(k)}$, and $q,q_1,\ldots,q_k \in Q$.
  \end{quote}

Let $\A=(Q,\delta,F)$ be a $(\Sigma,B)$-wta. Then $\A$ is {\em bottom-up deterministic} (for short bu-deterministic) if for every $k \in \N$, $\sigma \in \Sigma^{(k)}$, and $q_1,\ldots,q_k \in Q$ there is at most one $q \in Q$ such that $\delta_k(q_1 \ldots q_k,\sigma, q) \neq \0$. Moreover, $\A$ is {\em total} if for every $k \in \N$, $\sigma \in \Sigma^{(k)}$, and $q_1,\ldots,q_k \in Q$ there is at least one $q \in Q$ such that $\delta_k(q_1 \ldots q_k,\sigma, q) \neq \0$.

We call $\A$ a {\em crisp-deterministic} if  for every $k \in \N$, $\sigma \in \Sigma^{(k)}$ and $q_1,\ldots,q_k \in Q$ there is a unique $q \in Q$ such that $\delta_k(q_1\ldots q_k,\sigma,q)=\1$, and $\delta_k(q_1 \ldots q_k,\sigma,q')=\0$, for each $q' \in Q \setminus \{q\}$.  It is clear that each crisp-deterministic $(\Sigma,B)$-wta is bu-deterministic and total.

Before defining the semantics of $(\Sigma,B)$-wta, we introduce the following convention. We denote by  $\im(\delta)$
the set $\bigcup_{k\in \mathbb{N}}\im(\delta_k)$. Moreover, the elements of $B^Q$ are also called {\em $Q$-vectors over $B$}. For every $v \in B^Q$ and $q \in Q$, $v(q)$ is called the {\em $q$-component of $v$} and it is denoted by $v_q$ frequently.

\paragraph{Initial algebra semantics.} Let $\A$ be $(\Sigma,B)$-wta. The {\em vector algebra of $\A$} is the $\Sigma$-algebra $\V(\A)=(B^Q,\delta_\A)$, where, for every $k \in \N$, $\sigma \in \Sigma^{(k)}$, the $k$-ary operation $\delta_\A(\sigma): B^Q \times \cdots \times B^Q \to B^Q$ is defined by 
\[\delta_\A(\sigma)(v_1,\ldots,v_k)_q = \bigoplus_{q_1,\ldots,q_k \in Q} \left(\bigotimes_{i=1}^k (v_i)_{q_i}\right) \otimes \delta_k(q_1 \ldots q_k,\sigma,q)\]
for every $q\in Q$ and $v_1,\ldots,v_k \in B^Q$. We denote the unique $\Sigma$-algebra homomorphism from $T_\Sigma$ to $B^Q$ by $h_{\V(\A)}$. 

The {\em initial algebra semantics of $\A$} is the $(\Sigma,B)$-weighted tree language $\sem{\A}^\init$ such that
\[\sem{\A}^\init(\xi) = \bigoplus_{q \in Q} h_{\V(\A)}(\xi)_q \otimes F_q.\]
for every $\xi \in T_\Sigma$.

A $(\Sigma,B)$-weighted tree language $r$ is {\em initial algebra recognizable} (for short: i-recognizable) if there is a $(\Sigma,B)$-wta $\A$ such that $r = \sem{\A}^\init$. 

We note that i-recognizable $(\Sigma,\mathbb{B})$-weighted tree languages are the same as recognizable $\Sigma$-tree languages in the sense of \cite{eng75-15,gecste84,tata07}. Hence, we will specify a recognizable $\Sigma$-tree language $L$ by showing a bu-deterministic and total $(\Sigma,\mathbb{B})$-wta $\A=(Q,\delta,F)$ such that $L = \supp(\sem{\A}^{\init})$.

\paragraph{Run semantics.} Let $\A$ be $(\Sigma,B)$-wta and $\xi \in T_\Sigma$. A {\em run of $\A$ on $\xi$} is a mapping $\rho: \pos(\xi)\to Q$. If $\rho(\varepsilon) = q$ for some $q \in Q$, then we call $\rho$ a {\em $q$-run}. We denote by $R_\A(\xi)$ the {\em set of all runs of $\A$ on $\xi$} and by $R_\A(q,\xi)$ the {\em set of all $q$-runs of $\A$ on $\xi$}. For every $\rho \in R_\A(\xi)$ and $w \in \pos(\xi)$, the {\em run induced by $\rho$ at position $w$}, denoted by $\rho|_w \in R_\A(\xi|_w)$, is the mapping $\rho|_w:\pos(\xi|_w) \to Q$ defined by $\rho|_w(w') = \rho(ww')$ for every $w' \in \pos(\xi|_w)$. For every $\xi=\sigma(\xi_1,\ldots,\xi_k) \in T_\Sigma$, the {\em weight} $\wt_\A(\rho)$ of $\rho$ is the element of $B$ defined inductively by
\begin{equation}\label{eq:run-weight}
\wt_\A(\rho) = \left(\bigotimes_{i=1}^k\wt_\A(\rho|_i)\right)\otimes\delta_k(\rho(1)\ldots\rho(k),\sigma,\rho(\varepsilon)).
\end{equation}
The {\em run semantics of $\A$} is the $(\Sigma,B)$-weighted tree language $\sem{\A}^\run$ such that for every $\xi \in T_\Sigma$
\[\sem{\A}^\run(\xi) = \bigoplus_{\rho \in R_\A(\xi)} \wt_\A(\rho) \otimes F_{\rho(\varepsilon)} .\]

A $(\Sigma,B)$-weighted tree language $r$ is {\em run recognizable} (for short: r-recognizable) if there is a $(\Sigma,B)$-wta $\A$ such that $r = \sem{\A}^\run$.

\subsection{Examples}

In this subsection we show three examples of wta: a wta for which the initial algebra semantics and the run semantics are different, a bu-deterministic wta, and a crisp-deterministic wta.

Moreover, we visualize  wta  by  hypergraphs. A \emph{hypergraph over $\Sigma$} (for short: $\Sigma$-hypergraph) is a tuple $G = (V,E)$ where $V$ is a finite set (set of nodes) and $E \subseteq \bigcup_{k \in \mathbb{N}} V^k \times \Sigma^{(k)} \times V$ (set of hyperedges).

Each $\Sigma$-hypergraph $G = (V,E)$ can be illustrated by a figure as follows: we represent each node $v$ by a circle, and each hyperedge $\langle v_1,\ldots, v_k,\sigma,v\rangle$  by a box with $\sigma$ inscribed and with one outgoing arc leading to the node $v$ and with one ingoing arc coming from $v_i$ for each $i \in [k]$. In order to represent the order inherent in the list $v_1, \ldots, v_k$, the ingoing arcs are drawn such that, when traversing them counter-clockwise, starting from the outgoing arc, then the list of their source nodes is $v_1, \ldots, v_k$.
  
In particular, each $(\Sigma,B)$-wta $\A=(Q,\delta,F)$ can be represented by a $\Sigma$-hypergraph $G_\A= (Q,\delta')$ where $\delta' = \bigcup_{n \in \N} \supp(\delta_k)$. Moreover, in the illustration of  $G_\A$, we add to each hyperedge $e \in Q^k \times \Sigma^{(k)} \times Q$ the value $\delta_k(e)$ and to each node $q$ the value $F_q$.

\begin{figure}[t]
\begin{center}
\begin{tikzpicture}
\tikzset{node distance=7em, scale=0.6, transform shape}
\node[state, rectangle] (1) {$\alpha$};
\node[state, right of=1] (2) {$p_1$};
\node[state, rectangle, above of=2] (3) {$\gamma$};
\node[state, rectangle, above right of=2] (4) [right=5em] {$\gamma$};
\node[state, rectangle, below right of=2] (5) [right=5em] {$\gamma$};
\node[state, below right of=4] (6) [right=5em] {$p_2$};
\node[state, rectangle, above of=6] (7) {$\gamma$};
\node[state, rectangle, right of=6] (8) {$\alpha$};

\tikzset{node distance=2em}
\node[above of=1] (w1) {1};
\node[above of=2] (w2) {1};
\node[above of=3] (w3) {1};
\node[above of=4] (w4) {1};
\node[above of=5] (w5) {1};
\node[above of=6] (w6) {1};
\node[above of=7] (w7) {1};
\node[above of=8] (w8) {1};

\draw[->,>=stealth] (1) edge (2);
\draw[->,>=stealth] (2) edge[out=120, in=180, looseness=1.4] (3);
\draw[->,>=stealth] (3) edge[out=0, in=60, looseness=1.4] (2);
\draw[->,>=stealth] (2) edge[out=30, in=180, looseness=1.4] (4);
\draw[->,>=stealth] (4) edge[out=0, in=150, looseness=1.4] (6);
\draw[->,>=stealth] (6) edge[out=210, in=0, looseness=1.4] (5);
\draw[->,>=stealth] (5) edge[out=180, in=-30, looseness=1.4] (2);
\draw[->,>=stealth] (6) edge[out=120, in=180, looseness=1.4] (7);
\draw[->,>=stealth] (7) edge[out=0, in=60, looseness=1.4] (6);
\draw[->,>=stealth] (8) edge (6);
\end{tikzpicture}
\end{center}

\caption{\label{fig:hypgraph-sem-diff} The $\Sigma$-hypergraph for the $(\Sigma,\mathrm{TBM})$-wta $\A=(Q,\delta,F)$ such that $\sem{\A}^\init \neq \sem{\A}^\run$.}
\end{figure}

\begin{ex}\rm 
\label{ex:run-not=init} We consider the ranked alphabet $\Sigma = \{\gamma^{(1)}, \alpha^{(0)}\}$. Moreover, let $\A =(Q,\delta,F)$ be the $(\Sigma,\mathrm{TBM})$-wta with 
\begin{compactitem}
\item $Q = \{p_1,p_2\}$,
\item $\delta_0(\varepsilon,\alpha,q_1) = \delta_1(q_1,\gamma,q_2)= 1$ for every $q_1,q_2 \in Q$, and
\item $F_{p_1}= F_{p_2}=1$.
\end{compactitem}
Then $\A$ is not bu-deterministic because, \textit{e.g.}, $\delta_0(\varepsilon,\alpha,p_1)$ and $\delta_0(\varepsilon,\alpha,p_2)$ are not equal to~$0$. Figure \ref{fig:hypgraph-sem-diff} shows the $\Sigma$-hypergraph for $\A$.

Let $n \in \mathbb{N}$ and let us compute $\sem{\A}^\init(\gamma^n\alpha)$.
\[
\sem{\A}^\init(\gamma^n\alpha) = \sum_{q \in Q}{\min(h_{\V(\A)}(\gamma^n\alpha)_q, \ F_q)}
= \sum_{q \in Q}{\min(h_{\V(\A)}(\gamma^n\alpha)_q, \ 1)}
=^{(*)} \sum_{q \in Q}{1} = 2 \enspace,
\] 
where at $(*)$ we have used the following fact. For every $n \in \mathbb{N}$ and $q \in Q$: 
\[h_{\V(\A)}(\gamma^n\alpha)_{q}=
\left\{
\begin{array}{ll}
1 &\text{ if } n=0\\
2 &\text{ otherwise}\enspace.
\end{array}
\right.
\]
 This can be proved as follows. 
 If $n=0$, then $h_{\V(\A)}(\alpha)_q = \delta_0(\varepsilon,\alpha,q) = 1$. If $n \ge 1$, then:
\begin{align*}
h_{\V(\A)}(\gamma^n\alpha)_q 
=  \sum_{q' \in Q}{\min(h_{\V(\A)}(\gamma^{n-1}\alpha)_{q'}, \ \delta_1(q',\gamma,q))}
= \sum_{q' \in Q}{1}
= 2\enspace,
\end{align*}
where the second equality follows from (a) I.H. saying that $1 \le h_{\V(\A)}(\gamma^{n-1}\alpha)_{q'} \le 2$ and (b)~$\delta_1(q',\gamma,q)=1$. 

Now we compute $\sem{\A}^\run(\gamma^n\alpha)$. It is easy to see that $\wt_\A(\rho)=1$ for each run $\rho \in R_\A(\gamma^n\alpha)$.  Then 
\[
  \sem{\A}^\run(\gamma^n\alpha) = \sum_{\rho \in R_\A(\gamma^n\alpha)}{\min(\wt_\A(\rho), \ F_{\rho(\varepsilon)})} = \sum_{\rho \in R_\A(\gamma^n\alpha)}{1} = |R_\A(\gamma^n\alpha)| = 2^{n+1}\enspace.
\] 

Hence $\sem{\A}^\run \not= \sem{\A}^\init$. \hfill $\Box$
\end{ex}

\begin{ex}\label{ex:size} \rm Let $\Sigma = \{\sigma^{(2)}, \gamma^{(1)}, \alpha^{(0)}\}$. 
  We consider the mapping
  \[\size: T_\Sigma \to \mathbb{N}\]
  defined for each $\xi \in T_\Sigma$ by $\size(\xi)= |\pos(\xi)|$.
  As weight structure we use the tropical semiring $\mathrm{TSR}$. Thus, the mapping $\size$ is a $(\Sigma,\mathrm{TSR})$-weighted tree language.
  
We construct the $(\Sigma,\mathrm{TSR})$-wta $\C=(Q,\delta,F)$ such that its run semantics is $\size$, as follows.
\begin{compactitem}
\item $Q = \{q\}$ (intuitively, the state $q$ computes the size of the tree),

\item $\delta_2(qq,\sigma,q) = \delta_1(q,\gamma,q) = \delta_0(\varepsilon,\alpha,q) = 1$, and 
\item $F_q=0$.
\end{compactitem}
Clearly, $\C$ is bu-deterministic. However, it  is not crisp-deterministic, because $1$ is not the multiplicative unit element of  $\mathrm{TSR}$. Figure \ref{fig:hypgraph-size} shows the $\Sigma$-hypergraph for $\C$.

Let $\xi \in T_\Sigma$.
Since $q$ is the only state of $\C$, we have $R_\C(\xi)=\{\rho^\xi\}$,
where $\rho^\xi$ is the unique run on $\xi$ defined by $\rho^\xi(w) = q$ for each $w \in \pos(\xi)$. 
Thus
\[\sem{\C}^\run(\xi) = \min( \wt_\C(\rho) + F_{\rho(\varepsilon)} \mid  \rho \in R_\C(\xi)) = \wt_\C(\rho^\xi)= |\pos(\xi)| = \size(\xi).
\]
Hence $\sem{\C}^\run=\size$.
\hfill $\Box$
\end{ex}

\begin{figure}[t]
\begin{center}
\begin{tikzpicture}
\tikzset{node distance=7em, scale=0.6, transform shape}
\node[state, rectangle] (1) {$\gamma$};
\node[state, right of=1] (2) {$q$};
\node[state, rectangle, right of=2] (3) {$\sigma$};
\node[state, rectangle, below of=2] (4) {$\alpha$};

\tikzset{node distance=2em}
\node[above of=1] (w1) [right=0.05cm] {1};
\node[above of=2] (w2) {0};
\node[above of=3] (w3) {1};
\node[right of=4] (w4) {1};

\draw[->,>=stealth] (2) edge[out=-120, in=270, looseness=1.4] (1);
\draw[->,>=stealth] (1) edge[out=90, in=120, looseness=1.4] (2);
\draw[->,>=stealth] (2) edge[out=60, in=30, looseness=1.4] (3);
\draw[->,>=stealth] (2) edge[out=-60, in=-30, looseness=1.4] (3);
\draw[->,>=stealth] (3) edge (2);
\draw[->,>=stealth] (4) edge (2);
\end{tikzpicture}
\end{center}

\caption{\label{fig:hypgraph-size} The $\Sigma$-hypergraph for the $(\Sigma,\mathrm{TSR})$-wta $\C=(Q,\delta,F)$ which r-recognizes $\size$.}
\end{figure}

\begin{figure}[t]
  \begin{center}
\begin{tikzpicture}
\tikzset{node distance=7em, scale=0.6, transform shape}
\node[state] (1) {$e$};
\node[state, rectangle, right of=1] (2) [right=5em] {$\sigma$};
\node[state, rectangle, above of=2] (3)  {$\sigma$};
\node[state, rectangle, above of=3] (4)  {$\gamma$};
\node[state, rectangle, below of=2] (5)  {$\sigma$};
\node[state, rectangle, below of=5] (6)  {$\gamma$};
\node[state, right of=2] (7) [right=5em] {$o$};
\node[state, rectangle, above = 0.5em of $(2)!0.5!(7)$] (8) {$\alpha$};
\node[state, rectangle, right of=7] (9) {$\sigma$};

\tikzset{node distance=2em}
\node[above of=1] (w1) [left=0.05cm] {2};
\node[above of=2] (w2) {0};
\node[above of=3] (w3) {0};
\node[above of=4] (w4) {0};
\node[below of=5] (w5) {0};
\node[below of=6] (w6) {0};
\node[above of=7] (w7) [left=0.17cm] {3};
\node[above of=8] (w8) {0};
\node[above of=9] (w9) {0};

\draw[->,>=stealth] (1) edge[out=15, in=165, looseness=1.2] (2);
\draw[->,>=stealth] (1) edge[out=-15, in=195, looseness=1.2] (2);
\draw[->,>=stealth] (2) edge[out=0, in=195, looseness=1.4] (7);

\draw[->,>=stealth] (1) edge[out=75, in=120, looseness=1.4] (3);
\draw[->,>=stealth] (7) edge[out=105, in=60, looseness=1.4] (3);
\draw[->,>=stealth] (3) edge[out=270, in=45, looseness=1.4] (1);

\draw[->,>=stealth] (1) edge[out=90, in=180, looseness=1.4] (4);
\draw[->,>=stealth] (4) edge[out=0, in=90, looseness=1.4] (7);

\draw[->,>=stealth] (1) edge[out=-75, in=240, looseness=1.4] (5);
\draw[->,>=stealth] (7) edge[out=-105, in=300, looseness=1.4] (5);
\draw[->,>=stealth] (5) edge[out=90, in=-45, looseness=1.4] (1);

\draw[->,>=stealth] (7) edge[out=270, in=0, looseness=1.4] (6);
\draw[->,>=stealth] (6) edge[out=180, in=270, looseness=1.4] (1);

\draw[->,>=stealth] (8) edge[out=0, in=165, looseness=1.4] (7);

\draw[->,>=stealth] (7) edge[out=60, in=30, looseness=1.4] (9);
\draw[->,>=stealth] (7) edge[out=-60, in=-30, looseness=1.4] (9);
\draw[->,>=stealth] (9) edge (7);
\end{tikzpicture}
\end{center}

\caption{\label{fig:hypgraph-size-mod-2} The $\Sigma$-hypergraph for the $(\Sigma,\mathrm{TSR})$-wta $\A=(Q,\delta,F)$ which i-recognizes $\sizemod2$.}

\end{figure}

\begin{ex}\label{ex:size-mod2} \rm Let $\Sigma = \{\sigma^{(2)}, \gamma^{(1)},  \alpha^{(0)}\}$. 
  We consider the mapping
  $\sizemod2: T_\Sigma \to \mathbb{N}$
  defined for each $\xi \in T_\Sigma$ by
  \[
    \sizemod2(\xi) =
    \begin{cases}
      2 & \text{if  $|\pos(\xi)|$ is an even number }\\
      3 & \text{otherwise}\enspace.
      \end{cases}
    \]
    As weight structure we use the tropical semiring $\mathrm{TSR}$ again. 
  
We construct the $(\Sigma,\mathrm{TSR})$-wta  $\A=(Q,\delta,F)$ such that its initial algebra semantics is  $\sizemod2$ as follows. 
We  let 
\begin{compactitem}
\item $Q = \{e,o\}$,

  \item  $\delta_0(\varepsilon,\alpha,e) = \infty$ and $\delta_0(\varepsilon,\alpha,o) = 0$  and for every $q_1,q_2,q \in Q$ we let
   \[
    \delta_1(q_1,\gamma,q) =
    \begin{cases}
      0 & \text{ if $q_1\ne q$}\\
      \infty & \text{ otherwise},
      \end{cases}
    \]
    
  \[
    \delta_2(q_1q_2,\sigma,q) =
    \begin{cases}
      0 & \text{ if ($q_1=q_2$ and $q=o$) or ($q_1\not=q_2$ and $q=e$)}\\
      \infty & \text{ otherwise},
      \end{cases}
    \]
    and 
\item $F_e=2$ and $F_o = 3$.
\end{compactitem}
Then $\A$ is crisp-deterministic and consequently also bu-deterministic. Figure \ref{fig:hypgraph-size-mod-2} shows the $\Sigma$-hypergraph for $\A$.

Let $\xi \in T_\Sigma$ and $q \in Q$. It is clear that
\[
  h_{\V(\A)}(\xi)_q =
  \begin{cases}
    0 & \text{ if }  \ \big((q=e \text{ and } |\pos(\xi)| \text{ is even } ) \text{ or }  (q=o \text{ and } |\pos(\xi)| \text{ is odd } )\big)\\
    \infty & \text{ otherwise}.
    \end{cases}
\]
Thus
\begin{align*}
  \sem{\A}^\init(\xi) & = \min_{q \in Q} (h_{\V(\A)}(\xi)_q + F_q)  = \min \big( (h_{\V(\A)}(\xi)_e + F_e),  (h_{\V(\A)}(\xi)_o + F_o)\big)  \\ & = \min \big( (h_{\V(\A)}(\xi)_e + 2) , (h_{\V(\A)}(\xi)_o + 3)\big) = \sizemod2(\xi) \enspace. 
\end{align*}
\hfill $\Box$
\end{ex}

\subsection{Relationship between the initial algebra semantics and the run semantics.}

As it is illustrated by Example \ref{ex:run-not=init}, in general, the initial algebra semantics and the run semantics of wta are different (for the string case cf.  \cite{drostuvog10,cirdroignvog10}). However, if $B$ is a semiring, then the initial algebra semantics coincides with the run semantics. 

\begin{theo}\label{lm:run=initial} (\cite[Lm. 4.1.13]{bor04b} and \cite[Thm.~4.1]{rad10}, cf. also \cite[Lm.~4]{drostuvog10} for the string case)  Let $B$ be a semiring. Then $\sem{\A}^\init=\sem{\A}^\run$  for every  $(\Sigma,B)$-wta $\A$. 
\end{theo}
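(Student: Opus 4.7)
The plan is to establish the pointwise equality $\sem{\A}^\init(\xi)=\sem{\A}^\run(\xi)$ for every $\xi \in T_\Sigma$ by first proving a componentwise identity linking the vector-algebra homomorphism to sums of run weights, and then combining the components with the root weight vector $F$. Write $\V(\A)=(B^Q,\delta_\A)$ and, for every $q\in Q$, consider the quantity $h_{\V(\A)}(\xi)_q$. The key lemma is:
\[
h_{\V(\A)}(\xi)_q \;=\; \bigoplus_{\rho \in R_\A(q,\xi)} \wt_\A(\rho) \qquad \text{for every } \xi \in T_\Sigma, q \in Q.
\]
Given this lemma, the theorem follows by a single application of right distributivity:
\[
\sem{\A}^\init(\xi) = \bigoplus_{q\in Q} h_{\V(\A)}(\xi)_q \otimes F_q = \bigoplus_{q\in Q}\bigoplus_{\rho\in R_\A(q,\xi)} \wt_\A(\rho)\otimes F_q = \bigoplus_{\rho\in R_\A(\xi)} \wt_\A(\rho)\otimes F_{\rho(\varepsilon)},
\]
where the last equality uses that the family $(R_\A(q,\xi)\mid q\in Q)$ partitions $R_\A(\xi)$ according to the label at $\varepsilon$.

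I would prove the lemma by induction on the structure of $\xi$. For $\xi=\alpha\in\Sigma^{(0)}$, both sides reduce directly to $\delta_0(\varepsilon,\alpha,q)$, since the unique $q$-run $\rho$ on $\alpha$ has weight $\delta_0(\varepsilon,\alpha,q)$ by the convention $\bigotimes_{i\in[0]}=\1$. For the inductive step, let $\xi=\sigma(\xi_1,\ldots,\xi_k)$ and write $v_i = h_{\V(\A)}(\xi_i)$. Unfolding the definition of $\delta_\A(\sigma)$ and applying the induction hypothesis gives
\[
h_{\V(\A)}(\xi)_q \;=\; \bigoplus_{q_1,\ldots,q_k\in Q} \Bigl(\bigotimes_{i=1}^{k}\bigoplus_{\rho_i\in R_\A(q_i,\xi_i)}\wt_\A(\rho_i)\Bigr)\otimes \delta_k(q_1\ldots q_k,\sigma,q).
\]
The crucial manipulation then interchanges the inner $\bigotimes$ and $\bigoplus$ to produce
\[
\bigotimes_{i=1}^{k}\bigoplus_{\rho_i\in R_\A(q_i,\xi_i)}\wt_\A(\rho_i) \;=\; \bigoplus_{(\rho_1,\ldots,\rho_k)\in \prod_i R_\A(q_i,\xi_i)}\bigotimes_{i=1}^{k}\wt_\A(\rho_i).
\]

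Finally, I would note that each tuple $(q_1,\ldots,q_k,\rho_1,\ldots,\rho_k)$ with $\rho_i\in R_\A(q_i,\xi_i)$ corresponds bijectively to a $q$-run $\rho$ on $\xi$ by setting $\rho(\varepsilon)=q$ and $\rho|_i=\rho_i$, and that the recursive definition \eqref{eq:run-weight} of $\wt_\A(\rho)$ matches exactly the factor $\bigl(\bigotimes_i\wt_\A(\rho_i)\bigr)\otimes\delta_k(q_1\ldots q_k,\sigma,q)$ occurring under the outer $\bigoplus$; reindexing therefore yields $\bigoplus_{\rho\in R_\A(q,\xi)}\wt_\A(\rho)$, closing the induction. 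The main obstacle is the commutation step, which is a generalized distributivity identity requiring both left and right distributivity of $\otimes$ over $\oplus$ (and hence genuinely uses the semiring hypothesis, not merely a strong-bimonoid structure); it can be discharged by a straightforward secondary induction on $k$, peeling off one factor at a time and applying left distributivity to move the outermost $\bigoplus_{\rho_1}$ outside and right distributivity for the remaining factor, so this reduction is natural to state but warrants explicit verification.
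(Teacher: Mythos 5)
Your proof is correct. The paper does not actually prove this theorem (it only cites \cite[Lm.~4.1.13]{bor04b} and \cite[Thm.~4.1]{rad10}), but your argument is exactly the standard one behind those references: the componentwise identity $h_{\mathrm{V}(\A)}(\xi)_q=\bigoplus_{\rho\in R_\A(q,\xi)}\wt_\A(\rho)$, proved by induction on $\xi$ with a generalized distributivity step to interchange $\bigotimes$ and $\bigoplus$ (which, as you rightly flag, needs both left and right distributivity and is where the semiring hypothesis is genuinely used), followed by one application of right distributivity and the partition of $R_\A(\xi)$ by the state at the root. One minor point: you correctly read $\sem{\A}^\init(\xi)$ as $\bigoplus_{q\in Q}h_{\mathrm{V}(\A)}(\xi)_q\otimes F_q$, which is consistent with every later use of this definition in the paper, even though the displayed definition there mistakenly writes $\bigotimes_{q\in Q}$.
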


Also, the initial algebra semantics coincides with the run semantics for bu-deterministic $(\Sigma,B)$-wta.
We will prove this fact after the following preparation.

Let $\A=(Q,\delta,F)$ be a $(\Sigma,B)$-wta and let $\xi \in T_\Sigma$. We define the sets
\begin{align*}
Q_{\neq \0}^{h_{\V(\A)}}(\xi) &= \{q \in Q \mid h_{\V(\A)}(\xi)_q\not=\mathbb{0}\}, \text{ and} \\
Q_{\neq \0}^{R_\A}(\xi)& =\{q\in Q\mid (\exists \rho\in R_{\A}(q,\xi)) \text{ such that } \wt(\rho)\ne \mathbb{0}\}\enspace.
\end{align*}

\begin{lm}\rm\label{lm:properties-bu-wta}   
Let $\A=(Q,\delta,F)$ be a $(\Sigma,B)$-wta and $\xi \in T_\Sigma$ of the form $\xi=\sigma(\xi_1,\ldots,\xi_k)$. Then the following statement holds.
\begin{enumerate}
\item[(i)]
\begin{itemize}
\item[(a)] If $Q_{\neq \0}^{h_{\V(\A)}}(\xi_i)=\emptyset$ for some $i\in [k]$, then $Q_{\neq \0}^{h_{\V(\A)}}(\xi)=\emptyset$.
\item[(b)] If $Q_{\neq \0}^{R_\A}(\xi_i)=\emptyset$ for some $i\in [k]$, then $Q_{\neq \0}^{R_\A}(\xi)=\emptyset$.
\end{itemize}
\end{enumerate}
Moreover, if $\A$ is bu-deterministic, then also the following statements hold.
\begin{enumerate}
\item[(ii)] (a) $|Q_{\neq \0}^{h_{\V(\A)}}(\xi)|\le 1$ and (b) $|Q_{\neq \0}^{R_\A}(\xi)|\le 1$. 
\item[(iii)] Either (a) $Q_{\neq \0}^{h_{\V(\A)}}(\xi)=\emptyset=Q_{\neq \0}^{R_\A}(\xi)$ or (b) there is a $q\in Q$ such that 
$Q_{\neq \0}^{h_{\V(\A)}}(\xi)=\{q\}=Q_{\neq \0}^{R_\A}(\xi)$ and there is exactly one $\rho\in R_{\A}(q,\xi)$ with $\wt_\A(\rho)\ne \0$ and $h_{\V(\A)}(\xi)_{q} = \wt_\A(\rho)$.
\end{enumerate}
\end{lm}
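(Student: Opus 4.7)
The plan is to obtain (i) directly from the absorbing property of $\0$ in a strong bimonoid, and then to prove (ii) and (iii) together by structural induction on $\xi$ — the base case being $k=0$ and the inductive step using bu-determinism together with (i) and the induction hypothesis applied to each $\xi_i$.

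For (i)(a), I would unfold the definition of $h_{\mathrm{V}(\A)}(\xi)_q = \delta_\A(\sigma)(h_{\mathrm{V}(\A)}(\xi_1),\ldots,h_{\mathrm{V}(\A)}(\xi_k))_q$; if $h_{\mathrm{V}(\A)}(\xi_i)_{p} = \0$ for every $p \in Q$, then in each summand $\bigotimes_{j=1}^k h_{\mathrm{V}(\A)}(\xi_j)_{p_j} \otimes \delta_k(p_1 \ldots p_k, \sigma, q)$ the $i$-th factor is $\0$, so the whole summand vanishes and $h_{\mathrm{V}(\A)}(\xi)_q = \0$ for every $q \in Q$. For (i)(b), I would argue similarly from the recursive definition \eqref{eq:run-weight}: any $\rho \in R_\A(\xi)$ satisfies $\rho|_i \in R_\A(\rho(i),\xi_i)$, whose weight is $\0$ by assumption, hence $\wt_\A(\rho) = \0$.

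For (ii) and (iii), I would induct on $\xi$. In the base case $\xi = \alpha \in \Sigma^{(0)}$, $h_{\mathrm{V}(\A)}(\alpha)_q = \delta_0(\varepsilon,\alpha,q)$ and the runs on $\alpha$ are the constant maps $\rho_q(\varepsilon) = q$ with $\wt_\A(\rho_q) = \delta_0(\varepsilon,\alpha,q)$, so bu-determinism immediately gives both (ii) and the dichotomy in (iii). In the step case I first dispatch the situation that some $h_{\mathrm{V}(\A)}^{\ne \0}(\xi_i)$ is empty: by the induction hypothesis for (iii) that same $\xi_i$ has $R_\A^{\ne \0}(\xi_i) = \emptyset$ too, and then (i)(a) and (i)(b) force both sets for $\xi$ to be empty, i.e.\ (iii)(a). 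Otherwise the induction hypothesis picks out, for each child, a unique state $q_i$ and a unique nonzero-weight run $\rho_i \in R_\A(q_i,\xi_i)$ with $h_{\mathrm{V}(\A)}(\xi_i)_{q_i} = \wt_\A(\rho_i)$. In the sum defining $h_{\mathrm{V}(\A)}(\xi)_{q'}$ every summand with some $p_j \ne q_j$ collapses (since $h_{\mathrm{V}(\A)}(\xi_j)_{p_j} = \0$), and bu-determinism leaves at most one $q$ with $\delta_k(q_1\ldots q_k,\sigma,q) \ne \0$; this yields (ii)(a). A parallel argument for runs — a nonzero-weight $\rho$ must satisfy $\rho|_i = \rho_i$ and hence $\rho(i) = q_i$, and bu-determinism then pins $\rho(\varepsilon)$ to the same $q$ — gives (ii)(b) and produces a single candidate run $\rho$ whose weight equals $\wt_\A(\rho_1) \otimes \cdots \otimes \wt_\A(\rho_k) \otimes \delta_k(q_1\ldots q_k,\sigma,q) = h_{\mathrm{V}(\A)}(\xi)_q$.

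The only delicate point is closing (iii): a strong bimonoid need not be free of zero divisors, so even with every $h_{\mathrm{V}(\A)}(\xi_i)_{q_i}$ and $\delta_k(q_1\ldots q_k,\sigma,q)$ individually nonzero the product at the root may still collapse to $\0$. This is harmless because the weight of the candidate run $\rho$ equals the same product: if it is nonzero we are in case (iii)(b) with the specified $q$ and $\rho$, and if it is $\0$ then both $h_{\mathrm{V}(\A)}^{\ne \0}(\xi)$ and $R_\A^{\ne \0}(\xi)$ are empty, giving (iii)(a). Hence the two cases of (iii) partition the possibilities cleanly and the induction closes.
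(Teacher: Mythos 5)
Your proposal is correct and follows essentially the same route as the paper: part (i) from the absorbing property of $\0$, and parts (ii)--(iii) by induction on $\xi$, using bu-determinism to collapse the defining sum to a single term and gluing the unique nonzero child runs into one candidate run whose weight equals $h_{\mathrm{V}(\A)}(\xi)_q$. The only organizational difference is that you run (ii) and (iii) as a single simultaneous induction and settle the zero-divisor dichotomy at the very end, whereas the paper proves (ii) and (iii) by separate inductions and flags the zero-divisor caveat inside the proof of (ii)(a); the underlying computations are identical.
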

\begin{proof} 
\underline{Proof of (i):} Let $q\in Q$ be arbitrary. To prove (a), we calculate as follows:
\begin{align*}
h_{\V(\A)}(\sigma(\xi_1,\ldots,\xi_k))_q &= h_{\V(\A)}(\bar{\sigma}(\xi_1,\ldots,\xi_k))_q = \delta_{\A}(\sigma)
(h_{\V(\A)}(\xi_1),\ldots, h_{\V(\A)}(\xi_k))_q \\
&= \bigoplus_{q_1,\dots,q_k \in Q} \Big( \bigotimes_{i=1}^k h_{\V(\A)}(\xi_i)_{q_i}\Big) \otimes \delta_k(q_1\dots q_k,\sigma,q),
\end{align*}
where $\bar{\sigma}$ is the operation of the $\Sigma$-term algebra associated to $\sigma$; the second equality holds, because $h_{\V(\A)}$ is a $\Sigma$-algebra homomorphism. By our assumption, $h_{\V(\A)}(\xi_i)_{q_i}=\0$ for each choice $q_i\in Q$, hence also $h_{\V(\A)}(\sigma(\xi_1,\ldots,\xi_k))_q=\mathbb{0}$.

To prove (b), let $\rho \in R_\A(q,\xi)$ and let us consider  Equation \eqref{eq:run-weight}. We have $\rho|_i\in R_\A(\rho(i),\xi_i)$, hence by our assumption $\wt_\A(\rho|_i)=\0$. Then also $\wt_\A(\rho)=\mathbb{0}$.

\underline{Proof of (ii):} We prove (a) by induction on $\xi$.  We assume that $|Q_{\neq \0}^{h_{\V(\A)}}(\xi_i)| \le 1$ for each $i \in [k]$, and we continue by case analysis.

\underline{Case (a1):} There is an $i \in [k]$ such that $|Q_{\neq \0}^{h_{\V(\A)}}(\xi_i)| = 0$. By Statement (i)(a) also \mbox{$|Q_{\neq \0}^{h_{\V(\A)}}(\xi)| =0$.}

\underline{Case (a2):} For each $i \in [k]$ we have  $|Q_{\neq \0}^{h_{\V(\A)}}(\xi_i)| = 1$, \textit{i.e.},
$Q_{\neq \0}^{h_{\V(\A)}}(\xi_i) = \{p_{i}\}$ for some $p_i\in Q$. Then for each $q \in Q$: 
\begin{align*}
h_{\V(\A)}(\sigma(\xi_1,\ldots,\xi_k))_q
=&\ \bigoplus_{q_1,\ldots,q_k \in Q} \Big( \bigotimes_{i=1}^k h_{\V(\A)}(\xi_i)_{q_i}\Big) \otimes \delta_k(q_1\ldots q_k,\sigma,q)\\
=&\ \Big( \bigotimes_{i=1}^k h_{\V(\A)}(\xi_i)_{p_{i}}\Big) \otimes \delta_k(p_{1}\ldots p_{k},\sigma,q)\enspace.
\end{align*}
Since $\A$ is bu-deterministic, there is at most one $q \in Q$ such that $\delta_k(p_{1}\ldots p_{k},\sigma,q) \not= \mathbb{0}$, and thus, $|Q_{\neq \0}^{h_{\V(\A)}}(\sigma(\xi_1,\ldots,\xi_k))| \le 1$. (Since $B$ may contain zero-divisors, the cardinality of this set can be $0$.) 

Statement (b) is proved in a very similar way to Statement (a).

\underline{Proof of (iii):} We prove by induction on $\xi$. If $Q_{\neq \0}^{h_{\V(\A)}}(\xi)=\emptyset=Q_{\neq \0}^{R_\A}(\xi)$, then we are done. 
Otherwise, first assume that  $Q_{\neq \0}^{h_{\V(\A)}}(\xi)\not=\emptyset$. Then, by Statement (ii)(a), $Q_{\neq \0}^{h_{\V(\A)}}(\xi)=\{q\}$ for some $q\in Q$ and, by Statement (i)(a), $Q_{\neq \0}^{h_{\V(\A)}}(\xi_i)\ne \emptyset$ for every $i\in [k]$. 
By the I.H, for every $i\in [k]$ there is a $q_i\in Q$ such that
$Q_{\neq \0}^{h_{\V(\A)}}(\xi_i)=\{q_i\}=Q_{\neq \0}^{R_\A}(\xi_i)$ and there is exactly one $\rho_i\in R_{\A}(q_i,\xi_i)$ with $\wt_\A(\rho_i)\ne \0$ and $h_{\V(\A)}(\xi_i)_{q_i} = \wt_\A(\rho_i)$.
 Now let the run $\rho \in R_\A(q,\xi)$ be defined by $\rho(\varepsilon)=q$, and
$\rho(iw)=\rho_i(w)$ for every $i\in[k]$ and $w\in \pos(\xi_i)$. Then we have
\begin{align*}
\wt_\A(\rho)= & \left(\bigotimes_{i=1}^k\wt_\A(\rho|_i)\right)\otimes\delta_k(\rho(1)\ldots\rho(k),\sigma,\rho(\varepsilon)) 
  =  \left(\bigotimes_{i=1}^k\wt_\A(\rho_i)\right)\otimes\delta_k(q_1\ldots q_k,\sigma,q)\\
= & \left(\bigotimes_{i=1}^k h_{\V(\A)}(\xi_i)_{q_i}\right)\otimes\delta_k(q_1\ldots q_k,\sigma,q)= h_{\V(\A)}(\xi)_q.
\end{align*}
Since $h_{\V(\A)}(\xi)_q\ne \0$, we have $\wt_\A(\rho)\ne  \0 $, \textit{i.e.}, $q\in Q_{\neq \0}^{R_\A}(\xi)$, and thus, by Statement (ii),  $Q_{\neq \0}^{R_\A}(\xi)=\{q\}$.
Lastly, let $\rho'\in R_\A(q,\xi)$ with $\rho'\ne \rho$. Using the definition of $\rho$, the assumption that $\rho_i$ is the only run in  $R_{\A}(q_i,\xi_i)$ with $\wt_\A(\rho_i)\ne \0$ for each $i\in [k]$, and the fact that $\A$ is bu-deterministic, we can easily show that $\wt_\A(\rho')= \0$.

The case $Q_{\neq \0}^{R_\A}(\xi)\not=\emptyset$ can be proved similarly.
\end{proof}

\begin{theo}\label{theo:run-sem=initial-sem} If $\A$ is a bu-deterministic $(\Sigma,B)$-wta, then $\sem{\A}^\init=\sem{\A}^\run$.
\end{theo}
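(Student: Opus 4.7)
The plan is to reduce the theorem essentially to Lemma~\ref{lm:properties-bu-wta}(iii), which already does the heavy lifting: for a bu-deterministic $\A$ and any $\xi \in T_\Sigma$, it pins down both the set of $q$ for which $h_{\mathrm{V}(\A)}(\xi)_q \neq \0$ and the set of $q$ admitting a nonzero-weight $q$-run, and moreover matches the nonzero value of $h_{\mathrm{V}(\A)}(\xi)_q$ with the weight of the unique nonzero $q$-run. So I would simply fix an arbitrary $\xi \in T_\Sigma$ and split on the two cases provided by that lemma.

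In the first case, $h_{\mathrm{V}(\A)}^{\neq\0}(\xi) = R_\A^{\neq\0}(\xi) = \emptyset$. Then $h_{\mathrm{V}(\A)}(\xi)_q = \0$ for all $q \in Q$, so each summand of
\[
\sem{\A}^\init(\xi) = \bigoplus_{q \in Q} h_{\mathrm{V}(\A)}(\xi)_q \otimes F_q
\]
is $\0 \otimes F_q = \0$ (using that $\0$ is multiplicative zero), and the total is $\0$. Likewise, $\wt_\A(\rho) = \0$ for every $\rho \in R_\A(\xi)$ (otherwise $\rho(\varepsilon)$ would lie in $R_\A^{\neq\0}(\xi)$), so each summand of
\[
\sem{\A}^\run(\xi) = \bigoplus_{\rho \in R_\A(\xi)} \wt_\A(\rho) \otimes F_{\rho(\varepsilon)}
\]
is $\0$, and again the total is $\0$.

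In the second case, there is a unique $q \in Q$ with $h_{\mathrm{V}(\A)}^{\neq\0}(\xi) = \{q\} = R_\A^{\neq\0}(\xi)$, a unique $\rho \in R_\A(q,\xi)$ with $\wt_\A(\rho) \neq \0$, and $h_{\mathrm{V}(\A)}(\xi)_q = \wt_\A(\rho)$. Every term in the defining sum of $\sem{\A}^\init(\xi)$ indexed by $q' \neq q$ vanishes, and every term in the defining sum of $\sem{\A}^\run(\xi)$ indexed by a run $\rho'$ with $\rho' \neq \rho$ or $\rho'(\varepsilon) \neq q$ vanishes. What remains is
\[
\sem{\A}^\init(\xi) = h_{\mathrm{V}(\A)}(\xi)_q \otimes F_q = \wt_\A(\rho) \otimes F_{\rho(\varepsilon)} = \sem{\A}^\run(\xi).
\]

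I do not expect any genuine obstacle: once Lemma~\ref{lm:properties-bu-wta}(iii) is available, the proof is a one-line case split together with the observation that $\0$ acts as a multiplicative zero (needed to kill the $F_q$ factors at states $q'$ with $h_{\mathrm{V}(\A)}(\xi)_{q'} = \0$, and symmetrically for runs). The only thing to be careful about is checking that the second case truly leaves only one surviving summand on each side; this uses the bu-determinism via the uniqueness clauses of Lemma~\ref{lm:properties-bu-wta}(ii),(iii), so no independent argument is required.
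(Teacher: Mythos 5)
Your proposal is correct and follows essentially the same route as the paper: the paper's own proof also fixes $\xi$, invokes Lemma~\ref{lm:properties-bu-wta}(iii) to split into the empty case and the singleton case, and in each case reduces both defining sums to the same value using that $\0$ is a multiplicative zero. There is nothing to add or repair.
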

\begin{proof} Let $\A=(Q,\delta,F)$. We show that $\sem{\A}^\init(\xi)=\sem{\A}^\run(\xi)$ for each $\xi\in T_\Sigma$. By Statement (iii) of Lemma \ref{lm:properties-bu-wta}, we can distinguish the following two cases.

\underline{Case 1:} $Q_{\neq \0}^{h_{\V(\A)}}(\xi) =\emptyset =Q_{\neq \0}^{R_\A}(\xi)$. Then
\[
\sem{\A}^\init(\xi)= \bigoplus_{q \in \emptyset} h_{\V(\A)}(\xi)_q \otimes F_q = \mathbb{0} =
\bigoplus_{q \in \emptyset}\Biggl(\bigoplus_{\rho \in R_{\A}(q,\xi)}\wt(\rho) \otimes F_q\Biggr) =
\sem{\A}^\run(\xi)\enspace.
\]

\underline{Case 2:} $Q_{\neq \0}^{h_{\V(\A)}}(\xi) =\{q\} =Q_{\neq \0}^{R_\A}(\xi)$ for some $q \in Q$ and there is exactly one $\rho\in R_{\A}(q,\xi)$ with $h_{\V(\A)}(\xi)_{q} = \wt_\A(\rho)$.
Then
\(
\sem{\A}^\init(\xi) = h_{\V(\A)}(\xi)_q \otimes F_q = \wt_\A(\rho) \otimes F_q = \sem{\A}^\run(\xi)\enspace.
\)
\end{proof}


\section{Algebras with root weights}\label{sect:cdwbts}

In this section we introduce the concept of algebra with root weights in order to study crisp-deterministic wta. This concept can be considered as generalization of weighted automata with infinitely many states \cite[p.~3502]{cirdroignvog10} to the tree case. The semantics of an algebra with root weights is a weighted tree language. Then we define two basic constructions with $(\Sigma,B)$-algebras. We will use both of them to give an isomorphic representation of the $(\Sigma,B)$-algebra $\mathcal{N}(\A)$ of a $(\Sigma,B)$-wta $\A$ (cf. Section~ \ref{sect:initial-algebra-semantics}, in particular, Theorem~\ref{thm:characterizatio-of-Nerode-automaton}).

\subsection{General concepts}

A {\em $\Sigma$-algebra with root-weight vector in $B$} (for short: $(\Sigma,B)$-algebra) is a triple $\K=(Q,\theta,F)$, where $Q$ is a (possibly infinite) set, 
$(Q,\theta)$ is a $\Sigma$-algebra, and $F:Q\to B$ is mapping. 

We denote  by $h_\K$ the unique $\Sigma$-algebra homomorphism from $T_\Sigma$ to $Q$. The {\em semantics of $\K$} is the weighted tree language  $\sem{\K}:T_\Sigma \to B$ defined by
\[
  \sem{\K}=F\circ h_\K\enspace.
  \]

Now we define some notions concerning $(\Sigma,B)$-algebras which we will use in the rest of this paper.
For this, let $\K=(Q,\theta,F)$ be a $(\Sigma,B)$-algebra.
A \emph{$(\Sigma,B)$-subalgebra of $\K$} is a $(\Sigma,B)$-algebra $\K'=(Q',\theta',F')$ such that $(Q',\theta')$ is a subalgebra of $(Q,\theta)$
and $F' = F|_{Q'}$.
We call $q \in Q$ {\em accessible (in $\K$)} if $q\in\im(h_\K)$. The \emph{accessible part of $\K$} is the
$(\Sigma,B)$-subalgebra  $\acc(\K)=(Q',\theta',F')$ of $\K$  such that $(Q',\theta')$ is the smallest subalgebra of $(Q,\theta)$. We call $\K$  {\em accessible} if $\im(h_\K) = Q$.

Let $\K'=(Q',\theta',F')$ be a further $(\Sigma,B)$-algebra. We say that {\em $\K$ and $\K'$ are isomorphic} if there is a $\Sigma$-algebra isomorphism $\varphi: Q \to Q'$ such that  $F_q=F'_{\varphi(q)}$ for every $q\in Q$. 
The following lemma can be proved by using standard arguments.

\begin{lm} If the $(\Sigma,B)$-algebra  $\K=(Q,\delta,F)$ and $\K'=(Q',\delta',F')$ are isomorphic, then $\sem{\K} = \sem{\K'}$. 
\end{lm}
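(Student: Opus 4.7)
The plan is to unfold the definition $\sem{\K} = F \circ h_\K$ (and similarly for $\K'$) and use two standard facts: the initiality of the $\Sigma$-term algebra, and the hypothesis $F_q = F'_{\varphi(q)}$ linking the two root-weight vectors via the isomorphism $\varphi: Q \to Q'$.

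The first step will be to observe that the composition $\varphi \circ h_\K : T_\Sigma \to Q'$ is a $\Sigma$-algebra homomorphism, since both $h_\K$ and $\varphi$ are. Then, by the initiality of $T_\Sigma$ in the class of all $\Sigma$-algebras, this composition must coincide with the unique homomorphism $h_{\K'}$ from $T_\Sigma$ to $Q'$, giving the identity $h_{\K'} = \varphi \circ h_\K$.

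The second step is the pointwise calculation: for every $\xi \in T_\Sigma$, setting $q = h_\K(\xi)$, we have
\[
\sem{\K'}(\xi) \;=\; F'(h_{\K'}(\xi)) \;=\; F'(\varphi(h_\K(\xi))) \;=\; F'_{\varphi(q)} \;=\; F_q \;=\; F(h_\K(\xi)) \;=\; \sem{\K}(\xi),
\]
where the fourth equality is the hypothesis on root weights under an isomorphism. This yields $\sem{\K} = \sem{\K'}$ as desired.

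I do not expect any real obstacle: the argument is a direct diagram chase on the two commuting triangles (the homomorphism triangle from initiality of $T_\Sigma$, and the root-weight compatibility triangle). The only thing to be careful about is phrasing: one must explicitly invoke initiality to identify $\varphi \circ h_\K$ with $h_{\K'}$ rather than treating this as automatic.
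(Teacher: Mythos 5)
Your proof is correct and is precisely the ``standard argument'' that the paper alludes to without spelling out: identifying $\varphi \circ h_\K$ with $h_{\K'}$ via initiality of the term algebra and then chasing the root-weight compatibility condition. Nothing is missing, and your remark about explicitly invoking initiality is the right point of care.
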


\subsection{Direct product of algebras with root-weight vector}

Direct product of $(\Sigma,B)$-algebras is a generalization of the direct product of crisp-deterministic automata defined in \cite[p.~3503-3504]{cirdroignvog10}
to the tree case. It relies on the concept of direct product of $\Sigma$-algebras. We will use this concept to prove Theorem \ref{thm:characterizatio-of-Nerode-automaton}.

Let $\K = (\K_i\mid i \in [n])$ be a family of $(\Sigma,B)$-algebras with $\K_i=(Q_i,\theta_i,F_i)$. The {\em direct product} of $\K$ is the $(\Sigma,B)$-algebra $\Pi(\K)=(Q,\theta,F)$, where
\begin{compactitem}
\item $(Q,\theta)$ is the direct product of the $\Sigma$-algebras $((Q_i,\theta_i) \mid i \in [n])$ and
\item $F_{(q_1,\ldots,q_n)} = \bigotimes_{i=1}^n (F_i)_{q_i}$.
\end{compactitem}
Using \eqref{eq:product-theta}, we can easily see that
\begin{equation} \label{eq:direct_product_homomorphism}
  h_{\Pi(\K)}(\xi)= (h_{\K_1}(\xi),\ldots,h_{\K_n}(\xi)) \ \text{ for every $\xi \in T_\Sigma$}.
\end{equation}

\begin{lm}\label{lm:semantics-direct-prod} Let  $\K=(\K_i \mid i\in [n])$ be a family of $(\Sigma,B)$-algebras. For every $\xi\in T_\Sigma$ we have
\[\sem{\Pi(\K)}(\xi)=\bigotimes_{i=1}^n \sem{\K_i}(\xi)\enspace.\]
\end{lm}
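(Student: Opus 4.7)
The plan is to unfold the definition of semantics and use the already-established formula for the homomorphism into a direct product. Specifically, by definition we have $\sem{\Pi(\K)} = F \circ h_{\Pi(\K)}$, where $F$ is the root-weight vector of $\Pi(\K)$. So for $\xi \in T_\Sigma$ the computation of $\sem{\Pi(\K)}(\xi)$ reduces to evaluating $F$ at the tuple $h_{\Pi(\K)}(\xi)$.

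The key step is to rewrite $h_{\Pi(\K)}(\xi)$ componentwise. This is exactly what equation \eqref{eq:direct_product_homomorphism} provides: $h_{\Pi(\K)}(\xi) = (h_{\K_1}(\xi), \ldots, h_{\K_n}(\xi))$. Once we have this, the definition of the root-weight vector of the direct product, namely $F_{(q_1,\ldots,q_n)} = \bigotimes_{i=1}^n (F_i)_{q_i}$, lets us conclude immediately:
\begin{align*}
\sem{\Pi(\K)}(\xi) &= F_{h_{\Pi(\K)}(\xi)} = F_{(h_{\K_1}(\xi),\ldots,h_{\K_n}(\xi))} \\
&= \bigotimes_{i=1}^n (F_i)_{h_{\K_i}(\xi)} = \bigotimes_{i=1}^n \sem{\K_i}(\xi).
\end{align*}

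Since each step is a direct application of a definition or of the already-cited identity \eqref{eq:direct_product_homomorphism}, there is no genuine obstacle; the statement is essentially a sanity check that the definition of the root-weight vector in $\Pi(\K)$ is the ``right'' one. The only subtlety worth mentioning explicitly is that \eqref{eq:direct_product_homomorphism} itself follows from the initiality of $T_\Sigma$ together with the componentwise definition of the operations in the product $\Sigma$-algebra (equation \eqref{eq:product-theta}); but since this identity has already been stated in the excerpt, the proof can invoke it without further justification.
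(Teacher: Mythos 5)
Your proof is correct and follows exactly the same route as the paper's: unfold $\sem{\Pi(\K)} = F \circ h_{\Pi(\K)}$, apply the componentwise identity \eqref{eq:direct_product_homomorphism}, and then use the definition $F_{(q_1,\ldots,q_n)} = \bigotimes_{i=1}^n (F_i)_{q_i}$. Nothing is missing.
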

\begin{proof} Let $\xi\in T_\Sigma$ and $\Pi(\K)=(Q,\theta,F)$. Then
\begin{align*}
\sem{\Pi(\K)}(\xi) = & (F \circ h_{\Pi(\K)})(\xi) = F(h_{\Pi(\K)}(\xi)) = F((h_{\K_1}(\xi), \ldots, h_{\K_n}(\xi))) \\ = & \bigotimes_{i=1}^n F_i(h_{\K_i}(\xi))=\bigotimes_{i=1}^n (F_i\circ h_{\K_i})(\xi)=\bigotimes_{i=1}^n \sem{\K_i}(\xi),
\end{align*}
where the third equality follows from \eqref{eq:direct_product_homomorphism}.
\end{proof}

Let $\Pi(\K)=(Q,\theta,F)$ be the direct product of the family $\K= (\K_i \mid i\in [n])$ with $(\Sigma,B)$-algebra $\K_i=(Q_i,\theta_i,F_i)$.
For every subset $Q'$ of $Q$ and $i \in [n]$, the mapping $\pr_i: Q' \to Q_i$ defined by $\pr_i(q_1,\ldots,q_n)=q_i$, for each $(q_1,\ldots,q_n) \in Q'$, is called the {\em $i$th projection mapping} of $Q'$ into $Q_i$. Any $(\Sigma,B)$-subalgebra $\K'=(Q',\theta',F')$ of $\Pi(\K)$ having the property that for each $i \in [n]$ the $i$th projection mapping $\pr_i: Q' \to Q_i$ is surjective is called a {\em subdirect product} of $\K$.

\begin{ob}\label{obs:subdirect-product}\rm The accessible part of $\Pi(\K)$ is a subdirect product of $\K$ if and only if $\K_i$ is accessible for each $i\in[n]$.
\end{ob}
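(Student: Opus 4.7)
My plan is to reduce the claim to a straightforward computation of the images of the projection maps on the carrier set of $\acc(\Pi(\K))$, using the explicit formula \eqref{eq:direct_product_homomorphism} for the homomorphism into a direct product.

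First, I would fix notation: write $\Pi(\K) = (Q, \theta, F)$ and $\acc(\Pi(\K)) = (Q^{\mathrm{acc}}, \theta^{\mathrm{acc}}, F^{\mathrm{acc}})$. By Proposition \ref{prop:smallest-subalgebra}, the carrier $Q^{\mathrm{acc}}$ of the smallest subalgebra of $(Q,\theta)$ equals $\im(h_{\Pi(\K)})$. Invoking \eqref{eq:direct_product_homomorphism}, I would then rewrite this image as
\[
Q^{\mathrm{acc}} \;=\; \{(h_{\K_1}(\xi), \ldots, h_{\K_n}(\xi)) \mid \xi \in T_\Sigma\}.
\]

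Next, for every $i \in [n]$, I would compute the $i$th projection of $Q^{\mathrm{acc}}$ directly from this description, obtaining
\[
\pr_i(Q^{\mathrm{acc}}) \;=\; \{h_{\K_i}(\xi) \mid \xi \in T_\Sigma\} \;=\; \im(h_{\K_i}).
\]
Hence $\pr_i \colon Q^{\mathrm{acc}} \to Q_i$ is surjective if and only if $\im(h_{\K_i}) = Q_i$, i.e., if and only if $\K_i$ is accessible. Taking the conjunction over $i \in [n]$, I conclude that $\acc(\Pi(\K))$ is a subdirect product of $\K$ (every $\pr_i$ is surjective) if and only if every $\K_i$ is accessible.

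I do not foresee a genuine obstacle here: once \eqref{eq:direct_product_homomorphism} is invoked, the equivalence is just the observation that projection and image commute componentwise. The only minor care needed is to cite Proposition \ref{prop:smallest-subalgebra} so that the accessible part is identified with $\im(h_{\Pi(\K)})$ rather than described only abstractly as a smallest subalgebra.
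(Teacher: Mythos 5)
Your argument is correct and is precisely the routine verification the paper intends: the paper states this as an Observation without proof, and your chain (accessible part $=$ smallest subalgebra $=\im(h_{\Pi(\K)})$ by Proposition \ref{prop:smallest-subalgebra}, then componentwise description via \eqref{eq:direct_product_homomorphism}, then $\pr_i(\im(h_{\Pi(\K)}))=\im(h_{\K_i})$) is exactly the computation that makes it immediate. No gaps.
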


\subsection{Derivative \texorpdfstring{$(\Sigma,B)$}{(\textbackslash Sigma,B)}-algebra of a weighted tree language}

The concept introduced below is a generalization of derivative automaton of \cite[p.~3504]{cirdroignvog10} to the tree case. It uses the concept of quotient of weighted tree language with respect to a tree defined in Subsection \ref{subsect:weighted-tree-languages}. We will apply it to prove Theorem \ref{thm:characterizatio-of-Nerode-automaton}.

Let $r$ be a $(\Sigma,B)$-weighted tree language. The {\em derivative $(\Sigma,B)$-algebra of $r$} is the $(\Sigma,B)$-algebra $\der(r)=(Q,\theta,F)$, where
\begin{compactitem}
\item $Q=\{\xi^{-1}r \mid \xi \in T_\Sigma\}$ (we recall that $\xi^{-1}r$ denotes the quotient of $r$ with respect to $\xi$),
\item $\theta(\sigma)(\xi^{-1}_1r, \ldots, \xi^{-1}_kr) = \sigma(\xi_1, \ldots, \xi_k)^{-1}r$
for every $k\in \N$, $\sigma \in \Sigma^{(k)}$, and $\xi_1,\ldots,\xi_k\in T_\Sigma$,
\item $F_{\xi^{-1}r}= r(\xi)$\enspace.
\end{compactitem}

We show that $\theta(\sigma)$ is a well defined mapping for each $k \in \N$ and $\sigma \in \Sigma^{(k)}$. For this, let $\zeta_1,\ldots,\zeta_k \in T_\Sigma$ be such that $\xi_i^{-1}r=\zeta_i^{-1}r$ for each $i \in [k]$.
To show that
\[\sigma(\xi_1,\xi_2,\ldots,\xi_k)^{-1}r=\sigma(\zeta_1,\zeta_2,\ldots,\zeta_k)^{-1}r\enspace,\] 
we first prove that 
\[\sigma(\xi_1,\xi_2,\ldots,\xi_k)^{-1}r=\sigma(\zeta_1,\xi_2,\ldots,\xi_k)^{-1}r\enspace.\]
Note that both of $\xi_1^{-1}r$ and $\zeta_1^{-1}r$ are mappings from $C_\Sigma$ to $B$. 
Let $c \in C_\Sigma$ and  $c' = c[\sigma(\square,\xi_2,\ldots,\xi_k)]$, \textit{i.e.},  $c'$ is the context which can be obtained by replacing $\square$ by $\sigma(\square,\xi_2,\ldots,\xi_k)$ in $c$. Then we have 
\[r(c[\sigma(\xi_1,\xi_2,\ldots,\xi_k)])=r(c'[\xi_1])=r(c'[\zeta_1])=r(c[\sigma(\zeta_1,\xi_2,\ldots,\xi_k)]),\]
where the second equality follows from $\xi^{-1}_1r=\zeta^{-1}_1r$. Thus, $\sigma(\xi_1,\xi_2,\ldots,\xi_k)^{-1}r=\sigma(\zeta_1,\xi_2,\ldots,\xi_k)^{-1}r$. By successive applications of the above reasoning, we obtain $\sigma(\xi_1,\xi_2,\ldots,\xi_k)^{-1}r=\sigma(\zeta_1,\zeta_2,\ldots,\zeta_k)^{-1}r$. Hence, $\theta(\sigma)$ is a well defined mapping.

Moreover, $F$ is also well defined  because if there are $\xi_1,\xi_2 \in T_\Sigma$ such that $\xi_1^{-1}r = \xi_2^{-1}r$, then for every $c \in C_\Sigma$ we have $r(c[\xi_1]) = r(c[\xi_2])$ and, by choosing $c= \Box$, we obtain $r(\xi_1) = r(\xi_2)$ and thus $F_{\xi_1^{-1}r} =F_{\xi_2^{-1}r}$.

For every $\xi=\sigma(\xi_1,\ldots,\xi_k)\in T_\Sigma$, we have that
\begin{equation}\label{eq:derivative-homomorphism}
\begin{aligned}
h_{\der(r)}(\xi) 
&= h_{\der(r)}(\sigma(\xi_1,\ldots,\xi_k))= \theta(\sigma)(h_{\der(r)}(\xi_1),\ldots,h_{\der(r)}(\xi_k))\\
&= \theta(\sigma)(\xi^{-1}_1r,\ldots,\xi^{-1}_kr)= \xi^{-1}r. 
\end{aligned}
\end{equation}
The above reasoning implies that $\der(r)$ is accessible.

\begin{lm}\label{lm:semantics-derivative} For every $r:T_\Sigma \to B$, we have that
$\sem{\der(r)}=r$.
\end{lm}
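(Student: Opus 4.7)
The plan is to unwind the definition of $\sem{\der(r)}$ and apply the key identity $h_{\der(r)}(\xi) = \xi^{-1}r$, which has essentially already been recorded as equation \eqref{eq:derivative-homomorphism}. Concretely, for an arbitrary $\xi \in T_\Sigma$, I would compute
\[
\sem{\der(r)}(\xi) = (F \circ h_{\der(r)})(\xi) = F(h_{\der(r)}(\xi)) = F_{\xi^{-1}r} = r(\xi),
\]
where the first equality is the definition of the semantics of a $(\Sigma,B)$-algebra with root weights, the third equality uses \eqref{eq:derivative-homomorphism}, and the last equality is the definition of the root-weight mapping $F$ of $\der(r)$.

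The only nontrivial ingredient is the identity $h_{\der(r)}(\xi) = \xi^{-1}r$ itself. This is \eqref{eq:derivative-homomorphism} in the text, but if one wished to spell it out it is a straightforward structural induction on $\xi = \sigma(\xi_1,\ldots,\xi_k)$: by the homomorphism property of $h_{\der(r)}$ we have $h_{\der(r)}(\xi) = \theta(\sigma)(h_{\der(r)}(\xi_1),\ldots,h_{\der(r)}(\xi_k))$, which by the inductive hypothesis equals $\theta(\sigma)(\xi_1^{-1}r,\ldots,\xi_k^{-1}r)$, and this is $\sigma(\xi_1,\ldots,\xi_k)^{-1}r = \xi^{-1}r$ by the definition of $\theta(\sigma)$.

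There is no real obstacle: the lemma is essentially a direct unfolding of definitions once one has the key identification of $h_{\der(r)}(\xi)$ with the quotient $\xi^{-1}r$. The potential subtlety — that $\theta$ and $F$ are well defined on equivalence classes of trees yielding the same quotient — has already been taken care of in the paragraphs preceding the lemma, so the proof itself can be written as a single short chain of equalities.
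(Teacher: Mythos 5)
Your proof is correct and matches the paper's own argument exactly: the same chain of equalities $\sem{\der(r)}(\xi) = F(h_{\der(r)}(\xi)) = F_{\xi^{-1}r} = r(\xi)$, with \eqref{eq:derivative-homomorphism} supplying the key identification. Nothing to add.
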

\begin{proof}  Let $\der(r)=(Q,\theta,F)$ and $\xi\in T_\Sigma$. Then
\[\sem{\der(r)}(\xi) = (F \circ h_{\der(r)})(\xi) = F(h_{\der(r)}(\xi)) = F(\xi^{-1}r) = r(\xi)\]
where the last but one equality follows from \eqref{eq:derivative-homomorphism}.
\end{proof}

\begin{ex}\label{ex:size-mod2-cont} \rm We reconsider the weighted tree language $\sizemod2$ from Example \ref{ex:size-mod2} and construct the derivative $(\Sigma,\mathrm{TSR})$-algebra $\der(\sizemod2)=(Q,\theta,F)$ of $\sizemod2$ as follows, where we abbreviate $\sizemod2$ by $r$. 
By definition, we have
\begin{compactitem}
\item $Q = \{\xi^{-1}r \mid \xi \in T_\Sigma\}$. 
\end{compactitem}
We analyze $Q$. Let $\xi \in T_\Sigma$. If $|\pos(\xi)|$ is even, then for every $c \in C_\Sigma$ we have
 \[
     \xi^{-1}r(c) =
    \begin{cases}
      2 & \text{if  $|\pos_\Sigma(c)|$ is an even number }\\
      3 & \text{otherwise}\enspace,
      \end{cases}
    \]
    where $|\pos_\Sigma(c)|$ is the set of positions of $c$ viewed as tree in $T_{\Sigma \cup \{\Box\}}$. If $|\pos(\xi)|$ is odd, then for every $c \in C_\Sigma$ we have
 \[
     \xi^{-1}r(c) =
    \begin{cases}
      2 & \text{if  $|\pos_\Sigma(c)|$ is an odd number }\\
      3 & \text{otherwise}\enspace.
      \end{cases}
    \]
Thus $|Q|=2$. Let us denote $ \xi^{-1}r$ by $e$ if $|\pos(\xi)|$ is even, and by $o$ otherwise.

Moreover, we have
\begin{compactitem}
\item $\theta(\alpha)()=e$, 
\item $\theta(\gamma)(e)=o$ and $\theta(\gamma)(o)=e$,
\item for every $q_1,q_2 \in Q$ we let $\theta(\sigma)(q_1,q_2)= e$ if $q_1 \neq q_2$, and $o$ otherwise,
\item $F_e=2$ and $F_o = 3$.
\end{compactitem}

Let $\xi \in T_\Sigma$. Then
\[
  h_{\der(r)}(\xi) = 
  \begin{cases}
    e &\text{if $|\pos(\xi)|$ is even }\\
    o &\text{otherwise}\enspace.
  \end{cases}
\]
and by Lemma \ref{lm:semantics-derivative} we have $\sem{\der(r)}=r$.
\hfill $\Box$
\end{ex}


\section{Crisp-deterministic weighted tree automata}
\label{sec:Crisp-deterministic weighted tree automata}

Finite $(\Sigma,B)$-algebras and crisp-deterministic $(\Sigma,B)$-wta are very close. Here we formalize their relationship.
For this let $\K = (Q,\theta,F)$ be a  finite $(\Sigma,B)$-algebra and $\A=(Q,\delta,F)$ be a  crisp-deterministic $(\Sigma,B)$-wta. We say that $\K$ and $\A$ are \emph{related} if for every $k \in \mathbb{N}$, $\sigma \in \Sigma^{(k)}$, and $q,q_1, \ldots,q_k \in Q$:
\[\theta(\sigma)(q_1,\ldots,q_k)=q \ \ \text{ if and only if} \ \ \delta_k(q_1\ldots q_k,\sigma,q)=\1.\]
Clearly, for each  finite $(\Sigma,B)$-algebra $\K$ there is exactly one  crisp-deterministic $(\Sigma,B)$-wta $\A$ such that $\K$ and $\A$ are related. We denote this $\A$ by $\rel(\K)$. Also vice versa, for each  crisp-deterministic $(\Sigma,B)$-wta $\A$ there  is exactly one  finite $(\Sigma,B)$-algebra $\K$ such that $\K$ and $\A$ are related.  We denote this $\K$ by $\rel(\A)$.

\begin{lm} \label{lm:cdwta_semantics} Let $\K$ be a  finite $(\Sigma,B)$-algebra and $\A$  be a  crisp-deterministic $(\Sigma,B)$-wta. If $\K$ and $\A$ are related, then
for each $\xi \in T_\Sigma$ and each $q \in Q$ we have that
$h_{\V(\A)}(\xi)_q = \mathbb{1}$ if $q= h_\K(\xi)$, and $\mathbb{0}$ otherwise. Moreover, $\sem{\K}=\sem{\A}^\init$.
\end{lm}
\begin{proof} For each $\xi \in T_\Sigma$ and each $q \in Q$ we show that  
\[
h_{\V(\A)}(\xi)_q = 
\left\{
\begin{array}{ll}
\mathbb{1} & \text{if } q= h_\K(\xi)\\
\mathbb{0} & \text{otherwise}\enspace.
\end{array}
\right.
\]
We prove the statement by induction on $\xi$. 
Let $\xi = \sigma(\xi_1,\ldots,\xi_k)$. Then we obtain
\begin{align*}
h_{\V(\A)}(\sigma(\xi_1,\ldots,\xi_k))_q & = \bigoplus_{q_1,\ldots, q_k\in Q}\Big(\bigotimes_{i=1}^k h_{\V(\A)}(\xi_i)_{q_i}\Big)\otimes \delta_k(q_1\ldots q_k,\sigma,q)\\
& = \Big(\bigotimes_{i=1}^k \1 \Big)\otimes \delta_k(h_\K(\xi_1) \ldots h_\K(\xi_k),\sigma,q)
\tag{by I. H., note that $h_{\V(\A)}(\xi_i)_{h_\K(\xi_i)}=\1$}\\[2mm] &=
\delta_k(h_\K(\xi_1) \ldots h_\K(\xi_k),\sigma,q) \\[2mm]
&  = 
\left\{
\begin{array}{ll}
\mathbb{1} & \text{if $\theta(\sigma)(h_\K(\xi_1), \ldots, h_\K(\xi_k))=q$ } \\[2mm]
\mathbb{0} & \text{otherwise\enspace.}
\end{array}
\right.
\end{align*}
Since $\theta(\sigma)(h_\K(\xi_1), \ldots, h_\K(\xi_k)) = h_\K(\sigma(\xi_1,\ldots,\xi_k))$, we have proved the statement.

Then for every $\xi \in T_\Sigma$ we have
\begin{align*}
\sem{\A}^\init(\xi)=\bigoplus_{q\in Q}h_{\V(\A)}(\xi)_q \otimes F_q = F_{h_\K(\xi)} =(F\circ h_\K)(\xi)= \sem{\K}(\xi). 
\end{align*}
\end{proof}

In the following we give characterizations for the weighted tree languages which are i-recognizable by crisp-deterministic wta in terms  of recognizable step mappings.

Let $r$ be a $(\Sigma,B)$-weighted tree language. Then $r$ is a {\em $(\Sigma,B)$-recognizable step mapping} if there are $n \in \N_+$, recognizable $\Sigma$-tree languages $L_1,\ldots,L_n \subseteq T_\Sigma$, and $b_1,\ldots,b_n \in B$ such that $r = \bigoplus_{i=1}^nb_i \otimes \mathbb{1}_{(B,L_i)}$. Each tree language $L_i$ is called {\em step language}. We say that a $(\Sigma,B)$-recognizable step mapping is in \emph{normal form} if the family of its step languages is a partitioning of $T_\Sigma$.

\begin{ex}\rm \label{ex:size-mod2-step} We consider again the weighted tree language $\sizemod2$ defined in Example~\ref{ex:size-mod2}.  It can be specified as follows:
    \[
\sizemod2 = \min(2 + 0_{(\mathrm{TSR},L_1)} , 3 + 0_{(\mathrm{TSR},L_2)})
\]
where $L_1 = \{\xi \in T_\Sigma \mid |\pos(\xi)| \text{ is even}\}$ and $L_2 = T_\Sigma \setminus L_1$. Clearly, $L_1$ and $L_2$ are recognizable $\Sigma$-tree languages, hence $\sizemod2$ is a recognizable step mapping in normal form.
 \hfill $\Box$
  \end{ex}

\begin{lm} (cf. \cite[Lm. 8. and Prop. 9.]{drostuvog10}) \label{lm:cdwta_finite_image}
Let $r: T_\Sigma \to B$. Then the following statements are equivalent.
\begin{enumerate}
\item[(i)] There is a crisp-deterministic $(\Sigma,B)$-wta $\A$ such that $r = \sem{\A}^\init$.
\item[(ii)] There is a finite $(\Sigma,B)$-algebra $\K$ such that $r = \sem{\K}$.
\item[(iii)] $r$ is a $(\Sigma,B)$-recognizable step mapping in normal form.
\item[(iv)] $r$ is a $(\Sigma,B)$-recognizable step mapping.
\item[(v)] $\im(r)$ is finite and for each $b \in B$ the $\Sigma$-tree language $r^{-1}(b)$ is recognizable.
\end{enumerate}
\end{lm}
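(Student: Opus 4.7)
The natural approach is to prove the cycle $(i) \Rightarrow (ii) \Rightarrow (iii) \Rightarrow (iv) \Rightarrow (v) \Rightarrow (i)$.

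For $(i) \Rightarrow (ii)$ I would simply take $\K = \rel(\A)$ and invoke Lemma \ref{lm:cdwta_semantics}, which directly yields $\sem{\K} = \sem{\A}^\init = r$. For $(ii) \Rightarrow (iii)$, given a finite $\K = (Q,\theta,F)$ with $\sem{\K} = r$, I would set $L_q = h_\K^{-1}(q)$ for each $q \in Q$. Each $L_q$ is recognizable: the $\Sigma$-algebra $(Q,\theta)$ together with $\{q\}$ as accepting set yields a bu-deterministic total finite-state tree automaton recognizing $L_q$. The family $(L_q \mid q \in Q)$ partitions $T_\Sigma$ (each $\xi$ has a unique value $h_\K(\xi)$), and $r(\xi) = F_{h_\K(\xi)} = \bigoplus_{q \in Q} F_q \otimes \mathbb{1}_{(B,L_q)}(\xi)$, giving a step mapping in normal form. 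Step $(iii) \Rightarrow (iv)$ is immediate from the definitions.

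For $(iv) \Rightarrow (v)$, given $r = \bigoplus_{i=1}^n b_i \otimes \mathbb{1}_{(B,L_i)}$, for every $I \subseteq [n]$ define $K_I = \bigcap_{i \in I} L_i \cap \bigcap_{i \notin I}(T_\Sigma \setminus L_i)$. The $K_I$ form a (finite) partition of $T_\Sigma$ into recognizable languages, since recognizable $\Sigma$-tree languages are closed under Boolean operations. On each $K_I$ the weighted tree language $r$ takes the constant value $c_I := \bigoplus_{i \in I} b_i$, so $\im(r) \subseteq \{c_I \mid I \subseteq [n]\}$ is finite and $r^{-1}(b) = \bigcup_{I : c_I = b} K_I$ is a finite union of recognizable languages, hence recognizable.

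The step $(v) \Rightarrow (i)$ is the one that requires most care. Let $\im(r) = \{b_1, \ldots, b_n\}$; the sets $L_i := r^{-1}(b_i)$ partition $T_\Sigma$ into recognizable tree languages. The plan is to pick, for each $i \in [n]$, a bu-deterministic total finite-state tree automaton $\A_i = (Q_i, \delta^{(i)}, F^{(i)})$ recognizing $L_i$ (in the sense $L_i = \supp(\sem{\A_i}^\init)$, as allowed by the remark after the definition of initial algebra semantics), take the $\Sigma$-algebra direct product $(Q,\theta) = \prod_{i=1}^n (Q_i, \overline{\delta^{(i)}})$ of their transition algebras, and define a root-weight vector $F \colon Q \to B$ on the product by $F_{(q_1,\ldots,q_n)} = b_i$ if $q_i$ is accepting in $\A_i$ (and since the $L_i$ partition $T_\Sigma$, exactly one such $i$ exists for every accessible state tuple; on the remaining tuples we set $F$ arbitrarily, say to $\0$). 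Then $(Q,\theta,F)$ is a finite $(\Sigma,B)$-algebra, and the crisp-deterministic wta $\rel(Q,\theta,F)$ is our candidate; Lemma \ref{lm:cdwta_semantics} again gives $\sem{\rel(Q,\theta,F)}^\init = F \circ h_{(Q,\theta)} = r$, since $h_{(Q,\theta)}(\xi) = (h_{\A_1}(\xi), \ldots, h_{\A_n}(\xi))$ lies in the $i$-th accepting component precisely when $\xi \in L_i$, and then $F$ returns $b_i = r(\xi)$.

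The main obstacle is the product construction in $(v) \Rightarrow (i)$: one must argue carefully that on the accessible part of the product each tuple $h_{(Q,\theta)}(\xi)$ is accepting in exactly one component, which uses the fact that $(L_i \mid i \in [n])$ is a partition of $T_\Sigma$. The rest is routine.
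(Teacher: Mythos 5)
Your proof is correct, and on the implications (i) $\Rightarrow$ (ii), (ii) $\Rightarrow$ (iii), (iii) $\Rightarrow$ (iv) it coincides with the paper's: the paper realizes your $L_q=h_\K^{-1}(q)$ as $\supp(\sem{\K^q})$ for the final variant $\K^q$, which is the same set, and proves its recognizability by the same underlying Boolean-semiring automaton. Where you genuinely diverge is in how the cycle is closed. The paper dismisses (iv) $\Leftrightarrow$ (v) as obvious and instead proves (iv) $\Rightarrow$ (i) directly: it forms the product of bu-deterministic total $(\Sigma,\mathbb{B})$-wta for the possibly overlapping step languages $L_1,\ldots,L_n$ and sets $F_{\widetilde{q}}=\bigoplus_{i:\,(F_i)_{(\widetilde{q})_i}=1} b_i$, summing the weights of all accepting components. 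You instead spell out (iv) $\Rightarrow$ (v) via the Boolean-combination partition $(K_I \mid I \subseteq [n])$ --- which is arguably more careful than ``obvious by definition,'' since the $L_i$ of a step mapping may overlap and one does need closure of recognizable tree languages under Boolean operations --- and then prove (v) $\Rightarrow$ (i), where the preimages $r^{-1}(b_i)$ are already pairwise disjoint, so each accessible product state has exactly one accepting component and the root weight is a single $b_i$ rather than a sum. The two routes establish the same equivalences; yours trades a slightly longer (iv) $\Rightarrow$ (v) for a slightly cleaner final product construction. One cosmetic point you share with the paper: in (ii) $\Rightarrow$ (iii) the family $(L_q \mid q\in Q)$ may contain empty members for non-accessible states $q$, which should be discarded if ``partitioning'' is read strictly.
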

\begin{proof} (i) $\Leftrightarrow$ (ii): For a given crisp-deterministic $(\Sigma,B)$-wta $\A$ it is trivial to construct a finite $(\Sigma,B)$-algebra $\K$ such that $\A$ and $\K$ are related. Then Lemma \ref{lm:cdwta_semantics} implies (i) $\Rightarrow$ (ii). In a similar way we can prove (ii) $\Rightarrow$ (i).

(ii) $\Rightarrow$ (iii): Let $\K = (Q,\theta,F)$ be a finite $(\Sigma,B)$-algebra. For each $q \in Q$, we let $\K^q = (Q,\theta,F^q)$ be the finite $(\Sigma,B)$-algebra such that $F^q(p) = \1$ if $q=p$, and $\0$ otherwise. Clearly, for every $\xi \in T_\Sigma$ and $q \in Q$ we have
  \begin{equation}
    h_\K(\xi)=q  \text{ iff } \ \sem{\K^q}(\xi)=\1 \text{ iff } \ \mathbb{1}_{(B,\supp(\sem{\K^q}))}(\xi)=\1\enspace. \label{equ:partitioning-state}
    \end{equation}
Let $\A^q=(Q,\delta,G^q)$ be the crisp-deterministic $(\Sigma,\mathbb{B})$-wta defined by 
\begin{compactitem}
\item $\delta_k(q_1\ldots q_k,\sigma,q)=1$ iff $\theta(\sigma)(q_1,\ldots, q_k)=q$ for every $k\in \N, \sigma \in \Sigma^{(k)}$, and $q,q_1,\ldots, q_k\in Q$, and
\item  $G^q(p) =1$ if $q=p$, and $0$ otherwise.
\end{compactitem}
Then it is easy to show that $\supp(\sem{\A^q}^\init)=\supp(\sem{\K^q})$, the proof is similar to the proof of Lemma  \ref{lm:cdwta_semantics}. Hence $\supp(\sem{\K^q})$ is a recognizable $\Sigma$-tree language (by definition). 
Moreover, the $Q$-indexed family  $(\supp(\sem{\K^q}) \mid q \in Q)$ is a partitioning of $T_\Sigma$.  Then, for each $\xi \in T_\Sigma$, we have 
  \[
    \sem{\K}(\xi) = F(h_\K(\xi)) = \bigoplus_{q \in Q} F_q \otimes \mathbb{1}_{(B,\supp(\sem{\K^q}))}(\xi) = \Big(\bigoplus_{q \in Q} F_q \otimes \mathbb{1}_{(B,\supp(\sem{\K^q}))}\Big)(\xi)\enspace,
  \]
  where the second equality follows from \eqref{equ:partitioning-state}. Hence $\sem{\K}$ is a $(\Sigma,B)$-recognizable step mapping in normal form.

  (iii) $\Rightarrow$ (iv) and (iv) $\Leftrightarrow$ (v): These are obvious by definition.

  (iv) $\Rightarrow$ (i): In a straightforward way, we generalize the direction $\Rightarrow$ of \cite[Lm.~8]{drostuvog10} from the string case to the tree case. Let $r= \bigoplus_{i=1}^n b_i \otimes \mathbb{1}_{(B,L_i)}$ be a recognizable step mapping. For each $i \in [n]$ we let $\A_i =(Q_i,\delta_i,F_i)$ be some bu-deterministic and total $(\Sigma,\mathbb{B})$-wta such that $L_i = \supp(\sem{\A_i}^\init)$. We define the $(\Sigma,B)$-wta $\A=(Q,\delta,F)$ such that
  \begin{compactitem}
  \item $Q = Q_1 \times \ldots \times Q_n$,
  \item for every $k \in \N$, $\sigma \in \Sigma^{(k)}$, and $\widetilde{q_1},\ldots, \widetilde{q_n}, \widetilde{q} \in Q$ we let
    \[
      \delta_k(\widetilde{q_1}\ldots \widetilde{q_n}, \sigma, \widetilde{q}) =
      \begin{cases}
        \1 & \text{if for each $i \in [n]$: } (\delta_i)_k((\widetilde{q_1})_i\ldots (\widetilde{q_n})_i, \sigma, (\widetilde{q})_i)=1\\
        \0 & \text{otherwise}
      \end{cases}     
    \]
    where $(\widetilde{q_j})_i$ denotes the $i$th component of $\widetilde{q_j}$, and similarly for $(\widetilde{q})_i$, and
    \item for every $\widetilde{q} \in Q$ we let $F_{\widetilde{q}} = \bigoplus_{\substack{i \in [n]:\\ (F_i)_{(\widetilde{q})_i}=1}} b_i$
    \end{compactitem}
    Clearly, $\A$ is crisp-deterministic. Let $\xi \in T_\Sigma$. By Lemma \ref{lm:cdwta_semantics}  there is a unique state $\widetilde{q} \in Q$ with $h_{\V(\A)}(\xi)_{\widetilde{q}}=\1$. Then $\xi \in L_i$ iff $(F_i)_{(\widetilde{q})_i}=1$ for each $i \in [n]$. Let $I_\xi = \{i \in [n] \mid \xi \in L_i\}$. Then $r(\xi) = \bigoplus_{i \in I_\xi} b_i = F_{\widetilde{q}} = \sem{\A}^\init(\xi)$.
  \end{proof}


\section{Crisp-determinization for the initial algebra semantics}\label{sect:initial-algebra-semantics}

We first introduce, for each $(\Sigma,B)$-wta $\A$, the $(\Sigma,B)$-algebra $\mathcal{N}(\A)$, which we call the Nerode algebra of $\A$. It is the generalization of the Nerode automaton of a wsa defined in \cite[Sect.~6]{cirdroignvog10} to the tree case. Then we show that $\A$ and $\mathcal{N}(\A)$ are semantically equivalent (cf. Lemma~\ref{prop:A_N_equals_A_init}). In general $\mathcal{N}(\A)$ is not finite, but if it is so, then we can derive the crisp-deterministic wta $\rel(\mathcal{N}(A))$ from $\mathcal{N}(A)$, which is i-equivalent to $\A$ (cf. Theorem~\ref{th:cd-init}). We prove two interesting properties of $\mathcal{N}(\A)$: (1) we characterize the case that $\mathcal{N}(\A)$ is finite (cf. Theorem~\ref{theo:Nerode_conditions}) and (2) we give an isomorphic representation of $\mathcal{N}(\A)$ (cf. Theorem~\ref{thm:characterizatio-of-Nerode-automaton}). For the latter, we will use the constructions direct product of $(\Sigma,B)$-algebras and derivative $(\Sigma,B)$-algebra of a weighted tree language defined in Section~\ref{sect:cdwbts}. Then we show that if (1) $B$ is locally finite or (2) $B$ is multiplicatively locally finite and $\A$ is bu-deterministic, then $\mathcal{N}(\A)$ is finite (cf. Corollary~\ref{cor:cd-init}). Finally, we present an algorithm of which the input is an arbitrary $(\Sigma,B)$-wta $\A$, and which terminates if $\mathcal{N}(\A)$ is finite and delivers the crisp-deterministic wta $\rel(\mathcal{N}(A))$ (cf. Algorithm \ref{alg:construct-rel-Ne(A)})).

\subsection{Finiteness of the Nerode algebra implies crisp-determinization}

Let $\A=(Q,\delta,F)$ be a $(\Sigma,B)$-wta. The {\em Nerode $(\Sigma,B)$-algebra of $\A$}, denoted by $\Ne(\A)$, is the $(\Sigma,B)$-algebra $\Ne(\A)=(Q_\Ne,\theta_\Ne,F_\Ne)$, where 
\begin{compactitem}
\item $(Q_\Ne,\theta_\Ne)$ is the smallest subalgebra of the vector algebra $\V(\A)$ of $\A$, and 
\item $(F_\Ne)_v = \bigoplus_{q\in Q}v_q \otimes F_q$ for each $v \in Q_\Ne$.
\end{compactitem}

\begin{quote}\em In the rest of this paper,  we denote the components of $\Ne(\A)$ by $Q_\Ne$, $\theta_\Ne$, and $F_\Ne$.
\end{quote}

The next proposition follows from Proposition \ref{prop:smallest-subalgebra} and the fact that $h_{\V(\A)}$ is the unique homomorphism from $T_\Sigma$ to $\V(\A)$.

\begin{prop}\label{prop:Nerode=image} For each wta $\A$ we have $Q_\Ne=\im(h_{\V(\A)})$.
  \end{prop}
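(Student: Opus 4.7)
The plan is to invoke Proposition \ref{prop:smallest-subalgebra} directly, applying it to the vector algebra $\V(\A) = (B^Q, \delta_\A)$. By the definition of the Nerode algebra, the $\Sigma$-algebra reduct $(Q_\Ne, \theta_\Ne)$ is precisely the smallest subalgebra of $\V(\A)$. Proposition \ref{prop:smallest-subalgebra} asserts that, for any $\Sigma$-algebra $(A,\theta)$, the smallest subalgebra of $A$ is the $h_A$-image of the term algebra, i.e.\ $\im(h_A)$.

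Specializing this general fact to the $\Sigma$-algebra $\V(\A)$, whose unique homomorphism from $T_\Sigma$ is by definition $h_{\mathrm{V}(\A)}$, we conclude that the carrier set of the smallest subalgebra of $\V(\A)$ equals $\im(h_{\mathrm{V}(\A)})$. Since this carrier set is $Q_\Ne$ by the definition of $\Ne(\A)$, we obtain $Q_\Ne = \im(h_{\mathrm{V}(\A)})$, as required.

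There is no real obstacle here: the statement is essentially an unfolding of definitions combined with the general fact recorded in Proposition \ref{prop:smallest-subalgebra}. The only point that deserves a brief remark is the uniqueness of $h_{\mathrm{V}(\A)}$, which is guaranteed because $T_\Sigma$ is initial in the class of all $\Sigma$-algebras; this uniqueness is what makes the reference to ``the $h_{\V(\A)}$-image'' unambiguous.
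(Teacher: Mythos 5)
Your proof is correct and matches the paper's own justification exactly: the paper derives this proposition from Proposition~\ref{prop:smallest-subalgebra} applied to $\V(\A)$ together with the uniqueness of $h_{\mathrm{V}(\A)}$, which is precisely your argument. Nothing is missing.
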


\begin{lm} \label{prop:A_N_equals_A_init} (cf. \cite[Prop.~6.1]{cirdroignvog10})
Let $\A=(Q,\delta,F)$ be a $(\Sigma,B)$-wta. Then $\sem{\Ne(\A)} = \sem{\A}^\init$.
\end{lm}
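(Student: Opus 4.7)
My plan is to exploit the fact that $\Ne(\A)$ is, by construction, essentially the $h_{\V(\A)}$-image of $T_\Sigma$ in the vector algebra $\V(\A)$, so the homomorphism $h_{\Ne(\A)}$ coincides with $h_{\V(\A)}$, and then the definition of $F_\Ne$ already unfolds to the initial algebra semantics.

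First I would observe, using Proposition~\ref{prop:Nerode=image}, that $Q_\Ne = \im(h_{\V(\A)})$. Since $(Q_\Ne,\theta_\Ne)$ is a subalgebra of $(B^Q,\delta_\A)$, the operation $\theta_\Ne(\sigma)$ agrees with $\delta_\A(\sigma)$ on tuples from $Q_\Ne$. Consequently, the mapping $h_{\V(\A)}: T_\Sigma \to B^Q$, viewed as a mapping into $Q_\Ne$, is a $\Sigma$-algebra homomorphism from $T_\Sigma$ to $(Q_\Ne,\theta_\Ne)$. By the initiality of $T_\Sigma$ in the class of $\Sigma$-algebras, there is only one such homomorphism, hence $h_{\Ne(\A)}(\xi) = h_{\V(\A)}(\xi)$ for every $\xi \in T_\Sigma$.

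Next, using this identification together with the definitions of $\sem{\Ne(\A)}$, of $F_\Ne$, and of $\sem{\A}^\init$, I would compute for every $\xi \in T_\Sigma$:
\begin{align*}
\sem{\Ne(\A)}(\xi) &= F_\Ne(h_{\Ne(\A)}(\xi)) = F_\Ne(h_{\V(\A)}(\xi)) \\
&= \bigoplus_{q\in Q} h_{\V(\A)}(\xi)_q \otimes F_q = \sem{\A}^\init(\xi).
\end{align*}
This yields the claim.

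There is really no serious obstacle here; the only point that requires a moment of care is the typing issue, namely that $h_{\V(\A)}$ is originally defined as a homomorphism into $\V(\A) = (B^Q,\delta_\A)$ but must be re-interpreted as a homomorphism into the subalgebra $(Q_\Ne,\theta_\Ne)$ in order to apply initiality. Once Proposition~\ref{prop:Nerode=image} is invoked, this reinterpretation is immediate, and the rest is a direct unfolding of definitions.
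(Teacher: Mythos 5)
Your proposal is correct and follows essentially the same route as the paper: both identify $h_{\Ne(\A)}$ with $h_{\V(\A)}$ via Proposition~\ref{prop:Nerode=image} and the uniqueness of the homomorphism out of the initial algebra $T_\Sigma$, and then unfold the definition of $F_\Ne$ to recover $\sem{\A}^\init$. The only cosmetic difference is the direction of the uniqueness argument (you co-restrict $h_{\V(\A)}$ to the subalgebra $Q_\Ne$, while the paper extends $h_{\Ne(\A)}$ to a homomorphism into $\V(\A)$), which is immaterial.
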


\begin{proof} By Proposition \ref{prop:Nerode=image} we have that $Q_\Ne = \{h_{\V(\A)}(\xi) \mid \xi \in T_\Sigma\}$.
   Moreover, we note that $\theta_\Ne(\sigma)(h_{\V(\A)}(\xi_1),\ldots,h_{\V(\A)}(\xi_k))=h_{\V(\A)}(\sigma(\xi_1,\ldots,\xi_k))$
for every $k\in \N$, $\sigma\in \Sigma^{(k)}$, and $\xi_1,\ldots,\xi_k \in T_\Sigma$.
We recall that $h_{\Ne(\A)}$ is the unique $\Sigma$-algebra homomorphism from $T_\Sigma$ to $Q_\Ne$. This
$h_{\Ne(\A)}$ is a homomorphism also from $T_\Sigma$ to $\V(\A)$ because  $Q_\Ne$ is
a subalgebra of $\V(\A)$. Since $h_{\V(\A)}$ is unique, we have $h_{\Ne(\A)}=h_{\V(\A)}$.

Then we obtain
\[\sem{\Ne(\A)}(\xi) = (F_\Ne \circ h_{\Ne(\A)})(\xi) =  (F_\Ne)_{h_{\Ne(\A)}(\xi)} = \bigoplus_{q\in Q}h_{\V(\A)}(\xi)_q \otimes F_q = \sem{\A}^\init(\xi)\]
for every $\xi \in T_\Sigma$. Therefore, $\sem{\Ne(\A)} = \sem{\A}^\init$.
\end{proof}

\begin{theo}\label{th:cd-init} Let $\A$ be a $(\Sigma,B)$-wta. If the Nerode $(\Sigma,B)$-algebra $\Ne(\A)$ is finite, then for the crisp-deterministic wta $\rel(\Ne(\A))$ we have $\sem{\A}^\init= \sem{\rel(\Ne(\A))}^\init$.
  \end{theo}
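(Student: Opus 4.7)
The plan is to chain together two facts that have already been established in the preceding material: Lemma~\ref{prop:A_N_equals_A_init}, which tells us that the Nerode algebra has the same semantics as the initial algebra semantics of $\A$, and Lemma~\ref{lm:cdwta_semantics}, which tells us that any finite $(\Sigma,B)$-algebra $\K$ and the crisp-deterministic wta related to it agree on semantics.

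First I would verify that the crisp-deterministic wta $\rel(\Ne(\A))$ is well defined. By hypothesis $\Ne(\A)$ is finite, and by the discussion in Section~\ref{sec:Crisp-deterministic weighted tree automata}, for every finite $(\Sigma,B)$-algebra $\K$ there is a unique crisp-deterministic $(\Sigma,B)$-wta related to $\K$, namely $\rel(\K)$. Hence $\rel(\Ne(\A))$ is a well-defined crisp-deterministic $(\Sigma,B)$-wta and, by construction, it is related to $\Ne(\A)$.

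Next, applying Lemma~\ref{lm:cdwta_semantics} to the related pair $\K = \Ne(\A)$ and $\A' = \rel(\Ne(\A))$ yields
\[
\sem{\Ne(\A)} \;=\; \sem{\rel(\Ne(\A))}^\init.
\]
On the other hand, Lemma~\ref{prop:A_N_equals_A_init} gives $\sem{\Ne(\A)} = \sem{\A}^\init$. Combining these two equalities immediately yields $\sem{\A}^\init = \sem{\rel(\Ne(\A))}^\init$, which is what we want.

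I expect no real obstacle: the theorem is essentially a bookkeeping corollary of the two preceding lemmas together with the correspondence between finite $(\Sigma,B)$-algebras and crisp-deterministic wta. The only subtlety worth mentioning explicitly is that the hypothesis of finiteness of $\Ne(\A)$ is precisely what is needed so that the construction $\rel(\Ne(\A))$ makes sense (since $\rel(\cdot)$ was defined only on finite $(\Sigma,B)$-algebras).
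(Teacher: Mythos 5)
Your proof is correct and follows exactly the same route as the paper: it combines Lemma~\ref{prop:A_N_equals_A_init} ($\sem{\Ne(\A)} = \sem{\A}^\init$) with Lemma~\ref{lm:cdwta_semantics} applied to the related pair $\Ne(\A)$ and $\rel(\Ne(\A))$. The extra remark about finiteness being exactly what makes $\rel(\Ne(\A))$ well defined is a sensible addition but does not change the argument.
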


  \begin{proof}   Since $\Ne(\A)$ and $\rel(\Ne(\A))$ are related, we obtain $\sem{\Ne(\A)}= \sem{\rel(\Ne(\A))}^\init$ by Lemma \ref{lm:cdwta_semantics}.
Since $\sem{\A}^\init= \sem{\Ne(\A)}$ (by Lemma \ref{prop:A_N_equals_A_init}), we eventually obtain   $\sem{\A}^\init= \sem{\rel(\Ne(\A))}^\init$.
    \end{proof}
    
     \begin{figure}[t]
        \centering
        \begin{tikzpicture}
            \tikzset{scale=0.6, transform shape}
            \node[state] (e) {$e$};
            \node[state] at (7,0) (o) {$o$};
            \node[state] at (3.5,6) (r) {$r$};

            \node[state,rectangle] at (0,-1.5)  (1) {$\sigma$};
            \node[state,rectangle] at (0,1.5)  (2) {$\sigma$};
            \node[state,rectangle] at (3.5,1.5) (3) {$\gamma$};
            \node[state,rectangle] at (3.5,0) (4) {$\sigma$};
            \node[state,rectangle] at (3.5,-1.5) (5) {$\gamma$};
            \node[state,rectangle] at (1,4) (6) {$\gamma$};
            \node[state,rectangle] at (5,4) (7) {$\gamma$};
            \node[state,rectangle] at (-2,4) (8) {$\sigma$};
            \node[state,rectangle] at (7,4) (9) {$\sigma$};
            \node[state,rectangle] at (-4,4) (10) {$\sigma$};
            \node[state,rectangle] at (9,4) (11) {$\sigma$};
            \node[state,rectangle] at (3.5,4) (12) {$\alpha$};
            \node[state,rectangle] at (8.8,.5) (13) {$\alpha$};
            \node[state,rectangle] at (7.75, -1.5) (14) {$\sigma$};

            \draw[->,>=stealth] (1) -- (e);
            \draw[->,>=stealth] (e) edge[out=240, in=260,looseness=3] (1);
            \draw[->,>=stealth] (o) edge[out=260, in=280,looseness=.75] (1);
            \draw[->,>=stealth] (e) edge[out=120, in=100, looseness=3] (2);
            \draw[->,>=stealth] (o) edge[out=105, in=75, looseness=.75] (2);
            \draw[->,>=stealth] (2) -- (e);
            \draw[->,>=stealth] (e) edge[out=60, in=180, looseness=1.4] (3);
            \draw[->,>=stealth] (3) edge[out=0, in=120, looseness=1.4] (o);
            \draw[->,>=stealth] (e) edge[out=10, in=170, looseness=1.1] (4);
            \draw[->,>=stealth] (e) edge[out=-10, in=190, looseness=1.1] (4);
            \draw[->,>=stealth] (4) -- (o);
            \draw[->,>=stealth] (o) edge[out=240, in=0, looseness=1.4] (5);
            \draw[->,>=stealth] (5) edge[out=180, in=-60, looseness=1.4] (e);
            \draw[->,>=stealth] (e) edge[out=130, in=180,looseness=1.4] (6);
            \draw[->,>=stealth] (6) edge[out=0, in=240,looseness=1.2] (r);
            \draw[->,>=stealth] (o) edge[out=90,in=270,looseness=1.2] (7);
            \draw[->,>=stealth] (7) edge[out=90,in=310,looseness=1.4] (r);
            \draw[->,>=stealth] (e) edge[out=150, in=290, looseness=1.4] (8);
            \draw[->,>=stealth] (e) edge[out=170, in=250, looseness=1.2] (8);
            \draw[->,>=stealth] (8) edge[out=90, in=210, looseness=1.2] (r);
            \draw[->,>=stealth] (o) edge[out=70, in=250,looseness=1.4] (9); 
            \draw[->,>=stealth] (o) edge[out=50, in=290,looseness=1.2] (9);
            \draw[->,>=stealth] (9) edge[out=90, in=340,looseness=1] (r);
            \draw[->,>=stealth] (e) edge[out=190, in=290,looseness=1] (10);
            \draw[->,>=stealth] (o) edge[out=270,in=250,looseness=1.5] (10);
            \draw[->,>=stealth] (10) edge[out=90,in=180,looseness=1] (r);
            \draw[->,>=stealth] (e) edge[out=230, in=290,looseness=2.5] (11);
            \draw[->,>=stealth] (o) edge[out=30, in=250, looseness=1] (11);
            \draw[->,>=stealth] (11) edge[out=90, in=0,looseness=1] (r);
            \draw[->,>=stealth] (12) -- (r);
            \draw[->,>=stealth] (13) edge[out=180, in=10, looseness=1] (o);
            \draw[->,>=stealth] (o) edge[out=320, in=10,looseness=1.1] (14);
            \draw[->,>=stealth] (o) edge[out=350, in=-10,looseness=1.2] (14);
            \draw[->,>=stealth] (14) edge[out=180, in=290,looseness=1] (o);

            \node at (0.55,0.45) {$\infty$};
            \node at (6.4,0.45) {$\infty$};
            \node at (4,6.5) {$0$};

            \node at (9.2,1.2) {$0$};
            \node at (3.9, 4.7) {$3$};
            \node at (3.9,2.2) {$0$};
            \node at (3.9,0.7) {$0$};
            \node at (3.9,-0.8) {$0$};
            \node at (1.4,4.7) {$3$};
            \node at (5.4,4.7) {$2$};
            \node at (7.5,-.8) {$0$};
            \node at (0.4,2.2) {$0$};
            \node at (0.25,-0.8) {$0$};
            \node at (-1.6,4.7) {$3$};
            \node at (7.4,4.7) {$3$};
            \node at (-3.6,4.7) {$2$};
            \node at (9.4,4.7) {$2$};
        \end{tikzpicture}
        \vspace*{-5em}
        \caption{\label{fig:ex-6.4} The $\Sigma$-hypergraph for the $(\Sigma,\mathrm{TSR})$-wta $\D=(Q,\delta,F)$ in Example \ref{ex:size-mod2-non-det}.}
    \end{figure}

    \begin{ex} \label{ex:size-mod2-non-det} \rm 
 Let $\Sigma = \{\sigma^{(2)}, \gamma^{(1)},  \alpha^{(0)}\}$. 
 We consider the mapping $\sizemod2: T_\Sigma \to \mathbb{N}$ defined as in Example \ref{ex:size-mod2}. Here we will construct a wta $\D$ which is not bu-deterministic and which i-recognizes  $\sizemod2$, and we will analyze the Nerode algebra of $\D$.
 
    As weight structure we use the tropical semiring $\mathrm{TSR}=(\N_\infty,\min,+,\infty,0)$, \textit{i.e.}, the same algebra as in Examples \ref{ex:size} and \ref{ex:size-mod2}.  
We construct the $(\Sigma,\mathrm{TSR})$-wta  $\D=(Q,\delta,F)$ as follows.  
\begin{compactitem}
\item $Q = \{e,o,r\}$,

  \item  $\delta_0(\varepsilon,\alpha,e) = \infty$, $\delta_0(\varepsilon,\alpha,o) = 0$, and $\delta_0(\varepsilon,\alpha,r)=3$,  and for every $q_1,q_2,q \in Q$ we let
    \[
    \delta_1(q_1,\gamma,q) =
    \begin{cases}
      0 & \text{ if ($q_1=e$ and $q=o$) or ($q_1=o$ and $q=e$)}\\
      2 & \text{ if $q_1=o$ and $q=r$}\\
      3 & \text{ if $q_1=e$ and $q=r$}\\      
      \infty & \text{ otherwise},
      \end{cases}
    \]
    
  \[
    \delta_2(q_1q_2,\sigma,q) =
    \begin{cases}
      0 &\text{ if $q_1,q_2 \in \{e,o\}$ and}\\
      &\text{ \big(($q_1=q_2$ and $q=o$) or ($q_1\not=q_2$ and $q=e$)\big)}\\
      2 &\text{ if $q_1,q_2 \in \{e,o\}$, $q_1\not=q_2$, and  $q=r$}\\
      3 &\text{ if $q_1,q_2 \in \{e,o\}$, $q_1=q_2$, and $q=r$}\\
      \infty & \text{ otherwise},
      \end{cases}
    \]
    and 
\item $F_e= F_o =  \infty$ and $F_r= 0$.
\end{compactitem}
Figure \ref{fig:ex-6.4} shows the hypergraph for $\D$. We note that $\D$ is not bu-deterministic, and hence not crisp-deterministic. 
Moreover, for every  $\xi \in T_\Sigma$ and $q \in Q$, it is clear that
\[
  h_{\V(\D)}(\xi)_q =
  \begin{cases}
    0 & \text{ if }  \ \big((q=e \text{ and } |\pos(\xi)| \text{ is even } ) \text{ or }  (q=o \text{ and } |\pos(\xi)| \text{ is odd } )\big)\\
    2 & \text{ if }  \ q=r \text{ and } |\pos(\xi)| \text{ is even }  \\
    3 & \text{ if }  \ q=r \text{ and } |\pos(\xi)| \text{ is odd }  \\
    \infty & \text{ otherwise}\enspace.
    \end{cases}
\]
Thus
\begin{align*}
  \sem{\D}^\init(\xi) & = \min_{q \in Q} (h_{\V(\D)}(\xi)_q + F_q)  = h_{\V(\D)}(\xi)_r + F_r =  h_{\V(\D)}(\xi)_r = \sizemod2(\xi) \enspace. 
\end{align*}

Next we construct the Nerode algebra $\Ne(\D)=(Q_\Ne,\theta_\Ne,F_\Ne)$. By Proposition \ref{prop:Nerode=image}, we have 
      \[Q_\Ne= \im(h_{\V(\D)}) = \{ [0,\infty,2], [\infty, 0, 3]\},
             \]
             where the component at the left (middle, and right) is the $e$-component (respectively, $o$-component, and $r$-component) of the vectors.  Thus $\Ne(\D)$ is finite. Let us abbreviate the vectors
             \([0,\infty,2]\) and \([\infty, 0, 3]\)
by $E$ and $O$, respectively.
             Moreover, we have 
             \begin{align*}
\theta(\alpha)() &=  O, \
               \theta(\gamma)(O) = E, \
               \theta(\gamma)(E) = O, \\
               \theta(\sigma)(E,E) &= O, \ \theta(\sigma)(E,O) = E, \ \theta(\sigma)(O,E) = E, \ \theta(\sigma)(O,O) = O, 
             \end{align*}
             and
             \begin{align*}
               (F_\Ne)_E &= \min( E_q + F_q \mid q \in Q) = \min ( 0 + \infty, \infty + \infty, 2 + 0) = 2,\\
               (F_\Ne)_O &= \min( O_q + F_q \mid q \in Q) = \min ( \infty + \infty, 0 + \infty, 3 + 0) = 3\enspace.
             \end{align*}
             We note that the Nerode algebra $\Ne(\D)$ and the derivative $(\Sigma,\mathrm{TSR})$-algebra $\der(\sizemod2)$ of Example \ref{ex:size-mod2-cont} are isomorphic. (We will deal with the general relation between the Nerode algebra and derivative algebras in Theorem \ref{thm:characterizatio-of-Nerode-automaton}.)
             
             Since $\Ne(\D)$ is finite, we construct the crisp-deterministic $(\Sigma,\mathrm{TSR})$-wta $\rel(\Ne(\D)) =(Q_\Ne,\delta_\Ne,F_\Ne)$ by letting 
 $(\delta_\Ne)_0(\varepsilon,\alpha,E) = \infty$ and $(\delta_\Ne)_0(\varepsilon,\alpha,O) = 0$ , and for every $q_1,q_2,q \in Q_\Ne$ we let
   \[
    (\delta_\Ne)_1(q_1,\gamma,q) =
    \begin{cases}
      0 & \text{ if $q_1\ne q$}\\
      \infty & \text{ otherwise},
      \end{cases}
    \]
    
  \[
    (\delta_\Ne)_2(q_1q_2,\sigma,q) =
    \begin{cases}
      0 & \text{ if ($q_1=q_2$ and $q=O$) or ($q_1\not=q_2$ and $q=E$)}\\
      \infty & \text{ otherwise}.
      \end{cases}
    \]
    We realize that $\rel(\Ne(\D))$ and the crisp-deterministic wta of Example \ref{ex:size-mod2} are essentially the same.
    \hfill $\Box$
      \end{ex}

\subsection{Properties of the Nerode algebra}

In this section we show two properties of the Nerode $(\Sigma,B)$-algebra $\Ne(\A)$, cf. Theorem 
\ref{theo:Nerode_conditions}  and Theorem \ref{thm:characterizatio-of-Nerode-automaton}. For this, we introduce some preparatory concepts.

Let $\A=(Q,\delta,F)$ be a $(\Sigma,B)$-wta. For every $F':Q\to B$,  the $(\Sigma,B)$-wta $\B=(Q,\delta,F')$ is a \emph{final variant of $\A$}. 
We define $\mathrm{FV}(\A)=\{\B \mid \B \text{ is a final variant of } \A\}$. Moreover, we denote  the set $\{\sem{\B}^\init\mid \B \in \mathrm{FV}(\A)\}$ by $\sem{\mathrm{FV}(\A)}^\init$.
In a similar way, we define final variants of $(\Sigma,B)$-algebras. Let $\K=(Q,\theta,F)$ be a $(\Sigma,B)$-algebra. For every $F':Q\to B$, the $(\Sigma,B)$-algebra $F'(\K)=(Q,\theta,F')$ is a \emph{final variant of~$\K$}.

In addition, let $q\in Q$. We define the mapping $h_{\V(\A)}^q: T_\Sigma \to B$ by
\[h_{\V(\A)}^q(\xi) = h_{\V(\A)}(\xi)_q\]
for every $\xi \in T_\Sigma$. Then, let $\A^q=(Q,\delta,F^q)$ be the final variant of $\A$ defined by $(F^q)_q = \1$ and $(F^q)_p = \0$ for every $p \in Q \setminus \{q\}$. It is obvious that $h_{\V(\A)}^q=\sem{\A^q}^\init$, hence $h_{\V(\A)}^q \in \sem{\mathrm{FV}(\A)}^\init$.

In the following we give some characterizations for the fact that $\Ne(\A)$ is finite.

\begin{theo} \label{theo:Nerode_conditions} (cf. \cite[Thm.~6.3]{cirdroignvog10})
Let $\A=(Q,\delta,F)$ be a $(\Sigma,B)$-wta. Then the following statements are equivalent.
\begin{enumerate}
\item[(i)] $\Ne(\A)$ is finite.
\item[(ii)] For each final variant $\B$ of $\A$, the Nerode algebra $\Ne(\B)$ is finite and $\sem{\B}^\init$ is i-recognizable by a final variant of $\rel(\Ne(\A))$.
\item[(iii)] Each $r \in \sem{\mathrm{FV}(\A)}^\init$ is i-recognizable by some crisp-deterministic $(\Sigma,B)$-wta.
\end{enumerate}
\end{theo}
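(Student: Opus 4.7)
I would prove the theorem by showing the cycle (i) $\Rightarrow$ (ii) $\Rightarrow$ (iii) $\Rightarrow$ (i). The main observation I would use throughout is that the vector algebra $\V(\A) = (B^Q, \delta_\A)$ depends only on $Q$ and $\delta$, not on the root weight vector $F$. Consequently, the carrier set $Q_\Ne$ and interpretation $\theta_\Ne$ of $\Ne(\A)$ are shared by $\Ne(\B)$ for every $\B \in \mathrm{FV}(\A)$; only the induced root weight vector $F_\Ne$ changes with $F$. Similarly, the notion of crisp-determinism depends only on the transition mappings, so final variants of a crisp-deterministic wta are themselves crisp-deterministic.

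For (i) $\Rightarrow$ (ii), let $\B = (Q, \delta, F')$ be a final variant of $\A$. Since $Q_\Ne$ is the same for $\A$ and $\B$, finiteness of $\Ne(\A)$ immediately gives finiteness of $\Ne(\B)$. Applying Theorem \ref{th:cd-init} to $\B$ yields $\sem{\B}^\init = \sem{\rel(\Ne(\B))}^\init$. Because $\Ne(\B)$ and $\Ne(\A)$ agree on the carrier and on $\theta_\Ne$, the crisp-deterministic wta $\rel(\Ne(\B))$ agrees with $\rel(\Ne(\A))$ on states and transitions and differs only in root weights, i.e.\ $\rel(\Ne(\B))$ is a final variant of $\rel(\Ne(\A))$.

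For (ii) $\Rightarrow$ (iii), any $r \in \sem{\mathrm{FV}(\A)}^\init$ equals $\sem{\B}^\init$ for some $\B \in \mathrm{FV}(\A)$, and by (ii) this is i-recognized by a final variant of $\rel(\Ne(\A))$; since the latter is crisp-deterministic, so is this final variant, and we are done.

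For (iii) $\Rightarrow$ (i), recall that for each $q \in Q$, the final variant $\A^q$ (with root weights $\mathbb{1}$ at $q$ and $\mathbb{0}$ elsewhere) satisfies $\sem{\A^q}^\init = h_{\mathrm{V}(\A)}^q$, so $h_{\mathrm{V}(\A)}^q \in \sem{\mathrm{FV}(\A)}^\init$. By (iii), each $h_{\mathrm{V}(\A)}^q$ is i-recognized by a crisp-deterministic $(\Sigma,B)$-wta, and Lemma \ref{lm:cdwta_finite_image} then forces $\im(h_{\mathrm{V}(\A)}^q)$ to be finite. Using Proposition \ref{prop:Nerode=image} and the fact that a vector in $Q_\Ne = \im(h_{\mathrm{V}(\A)})$ is determined by its components, we obtain the inclusion $Q_\Ne \subseteq \prod_{q \in Q} \im(h_{\mathrm{V}(\A)}^q)$. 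Since $Q$ is finite and each factor is finite, the right-hand side is finite, whence $\Ne(\A)$ is finite. There is no real obstacle here; the only subtlety is the book-keeping around the phrase \emph{final variant of $\rel(\Ne(\A))$}, which is what makes the equivalence tight rather than merely stating that some crisp-deterministic wta exists.
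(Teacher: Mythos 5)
Your proposal is correct and follows essentially the same route as the paper's proof: (i)~$\Rightarrow$~(ii) via the observation that $\Ne(\B)$ is a final variant of $\Ne(\A)$ together with Lemma~\ref{lm:cdwta_semantics} (equivalently Theorem~\ref{th:cd-init}), (ii)~$\Rightarrow$~(iii) since final variants of the crisp-deterministic $\rel(\Ne(\A))$ are crisp-deterministic, and (iii)~$\Rightarrow$~(i) via the $\A^q$, Lemma~\ref{lm:cdwta_finite_image}, and the bound $|\im(h_{\mathrm{V}(\A)})|\le\prod_{q\in Q}|\im(h_{\mathrm{V}(\A)}^q)|$. No gaps.
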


\begin{proof} (i) $\Rightarrow$ (ii): Let $\sem{\B}^\init \in \sem{\mathrm{FV}(\A)}^\init$ for some
final variant $\B=(Q,\delta,F')$ of $\A$.
By Lemma \ref{prop:A_N_equals_A_init}, $\sem{\B}^\init=\sem{\Ne(\B)}$. By our assumption, $\Ne(\A)$ is finite.
Moreover, $\Ne(\B)=(Q_\Ne,\theta_\Ne,F'_\Ne)$ for some $F'_\Ne : Q_\Ne \to B$, hence
$\Ne(\B)$ is finite. Since $\Ne(\B)$ is a final variant of $\Ne(\A)$, also $\rel(\Ne(\B))$ is a final variant of $\rel(\Ne(\A))$.
Moreover, $\sem{\Ne(\B)} = \sem{\rel(\Ne(\B))}^\init$ by Lemma \ref{lm:cdwta_semantics}, which proves the statement.

(ii) $\Rightarrow$ (iii): This implication is obvious because, by assumption (ii), $\Ne(\A)$ is finite and hence $\rel(\Ne(\A))$ is a crisp-deterministic wta.

(iii) $\Rightarrow$ (i): Let $q\in Q$. As we saw, $h_{\V(\A)}^q=\sem{\A^q}^\init$. By our assumption (iii), $h_{\V(\A)}^q$ is i-recognizable by a crisp-deterministic wta. Hence, $h_{\V(\A)}^q$ has a finite image by the implication (i) $\Rightarrow$ (iv) of Lemma \ref{lm:cdwta_finite_image} (by letting $r= h_{\V(\A)}^q$). 
Moreover. 
\[|\im(h_{\V(\A)}) |\le \prod_{q \in Q}|\im(h_{\V(\A)}^q)|,\]
hence $\Ne(\A)$ is finite.
\end{proof}

Before showing the second property of  $\Ne(\A)$, we exploit Theorem  \ref{theo:Nerode_conditions}
and show that the reverse of Theorem \ref{th:cd-init} does not hold. This theorem says in particular that, for each $(\Sigma,B)$-wta $\A$, if the Nerode algebra $\Ne(\A)$ is finite, then $\sem{\A}^\init$ is i-recognizable by a crisp-deterministic wta. However, the following also holds.

\begin{lm}\label{lm:N(A)-not-finite} There is a  $(\Sigma,\mathrm{TSR})$-wta $\A$ such that $\sem{\A}^\init$ is i-recognizable by a crisp-deterministic $(\Sigma,\mathrm{TSR})$-wta and $\Ne(\A)$ is not finite. 
\end{lm}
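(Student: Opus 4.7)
The plan is to exhibit a $(\Sigma,\mathrm{TSR})$-wta $\A$ whose vector algebra has infinite image but whose root weight vector consists only of $\mathbb{0} = \infty$, so that $\sem{\A}^\init$ is the constant mapping $\infty$. Take $\Sigma = \{\gamma^{(1)}, \alpha^{(0)}\}$ and define $\A = (\{q\}, \delta, F)$ by setting $\delta_0(\varepsilon,\alpha,q) = 0$, $\delta_1(q,\gamma,q) = 1$, and $F_q = \infty$.

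First, a straightforward induction on $n$ shows that $h_{\mathrm{V}(\A)}(\gamma^n \alpha)_q = n$ for every $n \in \mathbb{N}$; hence the vectors $h_{\mathrm{V}(\A)}(\gamma^n \alpha) \in B^{\{q\}}$ are pairwise distinct. By Proposition \ref{prop:Nerode=image}, $Q_\Ne = \im(h_{\mathrm{V}(\A)})$, so $\Ne(\A)$ is not finite.

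Next, $\sem{\A}^\init(\gamma^n\alpha) = h_{\mathrm{V}(\A)}(\gamma^n\alpha)_q + F_q = n + \infty = \infty$ for every $n \in \mathbb{N}$, so $\sem{\A}^\init = \infty \otimes \1_{(\mathrm{TSR},T_\Sigma)}$. Since $T_\Sigma$ is trivially a recognizable $\Sigma$-tree language, $\sem{\A}^\init$ is a $(\Sigma,\mathrm{TSR})$-recognizable step mapping, and the implication (iv) $\Rightarrow$ (i) of Lemma \ref{lm:cdwta_finite_image} supplies a crisp-deterministic $(\Sigma,\mathrm{TSR})$-wta that i-recognizes it; concretely, the one-state wta with all $\delta'$-entries equal to $\1 = 0$ and root weight $\infty$ already does the job.

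No real difficulty arises in this proof: the construction hinges on the fact that in $\mathrm{TSR}$ multiplication by $\mathbb{0}$ annihilates, so the unboundedly many distinct values accumulated by the vector algebra of $\A$ become invisible after multiplication by the root weight $F_q = \infty$.
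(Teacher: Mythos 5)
Your proof is correct, and the example is in spirit the same as the paper's: a one-state wta that accumulates $\size$ in the homomorphism image but whose root weight $F_q=\infty=\0$ annihilates everything, so that $\sem{\A}^\init=\widetilde{\infty}$. (The paper reuses the wta $\C$ of Example~\ref{ex:size} over $\{\sigma^{(2)},\gamma^{(1)},\alpha^{(0)}\}$ with $F_q$ changed to $\infty$; your monadic variant with $\delta_0(\varepsilon,\alpha,q)=0$ works just as well, since the statement is existential.) Where you genuinely diverge is in showing that $\Ne(\A)$ is infinite: you compute $h_{\mathrm{V}(\A)}(\gamma^n\alpha)_q=n$ directly and invoke Proposition~\ref{prop:Nerode=image}, which is the most elementary route. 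The paper instead argues indirectly via the equivalence (i)~$\Leftrightarrow$~(iii) of Theorem~\ref{theo:Nerode_conditions}: the final variant $\C$ of $\A$ (with $F_q=0$) has $\sem{\C}^\init=\size$, whose image is infinite, so by Lemma~\ref{lm:cdwta_finite_image} it is not i-recognizable by any crisp-deterministic wta, whence $\Ne(\A)$ cannot be finite. Your argument is shorter and self-contained; the paper's buys an illustration of how finiteness of $\Ne(\A)$ is tied to \emph{all} final variants, which is the conceptual point the lemma is meant to make (the root weights alone can hide an infinite Nerode algebra). Both are valid proofs of the statement.
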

\begin{proof}
  Let $\A$ be the $(\Sigma,\mathrm{TSR})$-wta of Example \ref{ex:size} with the modification that $F_q=\infty$. (Note that $\A$ is even bu-deterministic.) Then we have $\sem{\A}^\init=\widetilde{\infty}$, where $\widetilde{\infty}$ is the weighted tree language which takes each tree to $\infty$.  Of course, $\sem{\A}^\init$ is i-recognizable by some crisp-deterministic $(\Sigma,\mathrm{TSR})$-wta.
  
Moreover, $\Ne(\A)$ is not finite because  Theorem \ref{theo:Nerode_conditions}(iii) does not hold for $\A$. 
In fact,  the bu-deterministic $(\Sigma,\mathrm{TSR})$-wta $\C$  of Example \ref{ex:size} is a final variant of $\A$  (the only difference is that $F_q=0$ in $\C$) and $\sem{\C}^\init=\size$. Since $\im(\size)$ is infinite, by Lemma \ref{lm:cdwta_finite_image} it is not true that $\sem{\C}^\init$ is i-recognizable by some crisp-deterministic $(\Sigma,\mathrm{TSR})$-wta.
\end{proof}

Next we give a representation of the Nerode algebra $\Ne(\A)$ of a wta $\A=(Q,\delta,F)$ in terms of the family $(h_{\V(\A)}^q\mid q\in Q)$ of weighted tree languages.

\begin{theo} \label{thm:characterizatio-of-Nerode-automaton} (cf. \cite[Thm.~6.5]{cirdroignvog10})
  Let  $\A=(Q,\delta,F)$ be a $(\Sigma,B)$-wta and $q_1,\ldots,q_n$ be an enumeration of the elements of $Q$. Then
\begin{quote}
there is a final variant $\C$ of  $\acc\big(\Pi\big((\der(h_{\V(\A)}^{q_i}) \mid i \in [n])\big)\big)$ such that $\Ne(\A)\cong\C$.
\end{quote}
\end{theo}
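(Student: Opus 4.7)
The plan is to exhibit an explicit bijection $\varphi$ between the carrier sets of $\Ne(\A)$ and of $\acc\bigl(\Pi\bigl((\der(h_{\mathrm{V}(\A)}^{q_i}) \mid i \in [n])\bigr)\bigr)$, show that $\varphi$ is a $\Sigma$-algebra isomorphism, and then choose the root-weight vector $F_\C$ on the target so that $\varphi$ preserves it. First, let $\C' = \acc\bigl(\Pi\bigl((\der(h_{\mathrm{V}(\A)}^{q_i}) \mid i \in [n])\bigr)\bigr)$. Combining \eqref{eq:direct_product_homomorphism} with \eqref{eq:derivative-homomorphism} and Proposition \ref{prop:smallest-subalgebra}, the carrier set of $\C'$ equals $\{(\xi^{-1}h_{\mathrm{V}(\A)}^{q_1},\ldots,\xi^{-1}h_{\mathrm{V}(\A)}^{q_n}) \mid \xi \in T_\Sigma\}$, while by Proposition \ref{prop:Nerode=image} the carrier of $\Ne(\A)$ is $\{h_{\mathrm{V}(\A)}(\xi) \mid \xi \in T_\Sigma\}$. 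This suggests the candidate
\[
  \varphi \colon Q_\Ne \to Q_{\C'}, \qquad \varphi(h_{\mathrm{V}(\A)}(\xi)) = (\xi^{-1}h_{\mathrm{V}(\A)}^{q_1},\ldots,\xi^{-1}h_{\mathrm{V}(\A)}^{q_n}).
\]

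Next I would verify that $\varphi$ is well defined and injective. Both properties reduce to showing that, for $\xi,\xi' \in T_\Sigma$, one has $h_{\mathrm{V}(\A)}(\xi) = h_{\mathrm{V}(\A)}(\xi')$ if and only if $\xi^{-1}h_{\mathrm{V}(\A)}^{q_i} = (\xi')^{-1}h_{\mathrm{V}(\A)}^{q_i}$ for every $i \in [n]$. The ``if'' direction is immediate by evaluating both sides at the context $\Box$, giving $h_{\mathrm{V}(\A)}(\xi)_{q_i}=h_{\mathrm{V}(\A)}(\xi')_{q_i}$ for every $i$. For the ``only if'' direction, Lemma \ref{lm:hom-decomposition} applied inside $\V(\A)$ yields
\[
  (\xi^{-1}h_{\mathrm{V}(\A)}^{q_i})(c) = h_{\mathrm{V}(\A)}(c[\xi])_{q_i} = c^{\V(\A)}(h_{\mathrm{V}(\A)}(\xi))_{q_i}
\]
for every $c \in C_\Sigma$, so the whole tuple depends only on $h_{\mathrm{V}(\A)}(\xi)$. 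Surjectivity of $\varphi$ is immediate from the description of the carriers.

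Then I would check that $\varphi$ is a $\Sigma$-algebra homomorphism. For any $\sigma \in \Sigma^{(k)}$ and $v_j = h_{\mathrm{V}(\A)}(\xi_j)$ with $j \in [k]$, the operation in $\Ne(\A)$ sends $(v_1,\ldots,v_k)$ to $h_{\mathrm{V}(\A)}(\sigma(\xi_1,\ldots,\xi_k))$ (as noted in the proof of Lemma \ref{prop:A_N_equals_A_init}), while on the target the product operation together with the definition of $\der$ gives $(\sigma(\xi_1,\ldots,\xi_k)^{-1}h_{\mathrm{V}(\A)}^{q_i})_{i \in [n]}$; by construction these are matched by $\varphi$. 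Finally, I would define the final variant $\C$ of $\C'$ by
\[
  (F_\C)_{(\xi^{-1}h_{\mathrm{V}(\A)}^{q_1},\ldots,\xi^{-1}h_{\mathrm{V}(\A)}^{q_n})} = \bigoplus_{i=1}^n (\xi^{-1}h_{\mathrm{V}(\A)}^{q_i})(\Box) \otimes F_{q_i},
\]
which is well defined (its value depends only on $(h_{\mathrm{V}(\A)}(\xi)_{q_i})_{i\in [n]}$) and agrees with $(F_\Ne)_{h_{\mathrm{V}(\A)}(\xi)}$; hence $F_\Ne = F_\C \circ \varphi$, completing the isomorphism $\Ne(\A) \cong \C$. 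The only point that requires care is the well-definedness/injectivity argument for $\varphi$, and Lemma \ref{lm:hom-decomposition} handles it cleanly; the rest is routine unwinding of the definitions of direct product, accessible part, and derivative algebra.
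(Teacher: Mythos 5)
Your proposal is correct and follows essentially the same route as the paper: the same map $\varphi(h_{\mathrm{V}(\A)}(\xi)) = (\xi^{-1}h_{\mathrm{V}(\A)}^{q_1},\ldots,\xi^{-1}h_{\mathrm{V}(\A)}^{q_n})$ (the paper writes it as $\varphi(h_{\mathrm{V}(\A)}(\xi)) = h_\Pi(\xi)$, which is the same tuple by \eqref{eq:direct_product_homomorphism} and \eqref{eq:derivative-homomorphism}), the same use of Lemma \ref{lm:hom-decomposition} in $\V(\A)$ for well-definedness and evaluation at $\Box$ for injectivity, and the same final-variant construction (your explicit formula for $F_\C$ coincides with the paper's $(F_\varphi)_q = (F_\Ne)_{\varphi^{-1}(q)}$). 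The only difference is presentational: the paper packages well-definedness and injectivity into a single chain of biconditionals, whereas you argue the two directions separately.
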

\begin{proof}
Let $r_i = h_{\V(\A)}^{q_i}$ and $\der(r_i)=(Q_{i},\theta_{i},F_{i})$ be the derivative $(\Sigma,B)$-algebra of $r_i$ for every $i\in[n]$.

Next, let $\Pi\big((\der(r_i) \mid i \in [n])\big)=(Q_\Pi,\delta_\Pi,F_\Pi)$ be the direct product of $(\der(r_i)\mid i\in [n])$. We abbreviate the $\Sigma$-homomorphism $h_{\Pi\big((\der(r_i) \mid i \in [n])\big)}$ by $h_\Pi$.

Let us recall that $\V(\A)=(B^Q,\delta_\A)$ and that $h_{\V(\A)}$ is the unique homomorphism from $T_\Sigma$ to $B^Q$.
For each context $c\in C_\Sigma$, we write just $c^\A$ for the mapping $c^{B^Q} : B^Q \to B^Q$ defined on page \pageref{page:defintion-c^A}.  Then we define the mapping $\varphi: Q_\Ne \to Q_\Pi$ by $\varphi(h_{\V(\A)}(\xi)) = h_\Pi(\xi)$ for every $\xi \in T_\Sigma$. For any $\xi,\zeta \in T_\Sigma$ we have that
\begin{align*}
 & h_{\V(\A)}(\xi) = h_{\V(\A)}(\zeta)\\
  \iff & (\forall c \in C_\Sigma)\ c^\A(h_{\V(\A)}(\xi)) = c^\A(h_{\V(\A)}(\zeta))
         \hspace{12mm} (\text{because $\Box\in C_\Sigma$ and $\Box^\A$ is the identity})\\
  \iff & (\forall c \in C_\Sigma)\ h_{\V(\A)}(c[\xi]) = h_{\V(\A)}(c[\zeta])
         \hspace*{21mm}(\text{by Lemma \ref{lm:hom-decomposition} with $(A,\theta)=(B^Q,\delta_\A)$)}\\
\iff & (\forall c \in C_\Sigma)(\forall i \in [n])\ h_{\V(\A)}(c[\xi])_{q_i} = h_{\V(\A)}(c[\zeta])_{q_i}\\
\iff & (\forall c \in C_\Sigma)(\forall i \in [n])\ r_i(c[\xi]) = r_i(c[\zeta])
    \hspace*{32mm}(\text{because $(\forall i \in [n])\ r_i = h_{\V(\A)}^{q_i}$})\\
  \iff & (\forall i \in [n])(\forall c \in C_\Sigma)\ (\xi^{-1}r_i)(c) = (\zeta^{-1}r_i)(c)
         \hspace*{7mm}(\text{by the definition of $(\xi^{-1}r_i)$ and $(\zeta^{-1}r_i)$})\\
  \iff & (\forall i \in [n])\ \xi^{-1}r_i = \zeta^{-1}r_i\\
  \iff & (\forall i \in [n])\ h_{\der(r_i)}(\xi) = h_{\der(i)}(\zeta)
         \hspace*{75mm}(\text{by \eqref{eq:derivative-homomorphism}})\\
  \iff & h_\Pi(\xi) = h_\Pi(\zeta)
         \hspace*{102mm}(\text{by  (\ref{eq:direct_product_homomorphism})})\\
  \iff & \varphi(h_{\V(\A)}(\xi)) = \varphi(h_{\V(\A)}(\zeta))
         \hspace*{60mm}(\text{by the definition of $\varphi$})
\end{align*}
and thus, $\varphi$ is well-defined and injective. Now, we define the $(\Sigma,B)$-algebra 
\[\K_\varphi=\big(\im(\varphi),\theta_\varphi,F_\varphi\big)\]
where, for every $k \in \N$, $\sigma \in \Sigma^{(k)}$, $q,q_1,\ldots,q_k \in \im(\varphi)$, we have $\theta_\varphi(\sigma)(q_1,\ldots,q_k) = \theta_\Pi(\sigma)(q_1,\ldots,q_k)$, and $(F_\varphi)_q= (F_\Ne)_{\varphi^{-1}(q)}$. (Note that $\im(\varphi)$ is closed under operations of $\{\theta_\Pi(\sigma) \mid \sigma \in \Sigma\}$.)
Then $\K_\varphi$ is a final variant of the accessible part of $\Pi\big((\der(r_i) \mid i \in [n])\big)$.

Finally, we show that $\varphi$ is a homomorphism from $\Ne(\A)$ to $\K_\varphi$. Let $k\in \N$, $\sigma\in \Sigma^{(k)}$, and $\xi_1,\ldots,\xi_k\in T_\Sigma$.  Then
\begin{align*}
& \varphi\big(\theta_\Ne(\sigma)(h_{\V(\A)}(\xi_1),\ldots,h_{\V(\A)}(\xi_k))\big) =\varphi(h_{\V(\A)}(\sigma(\xi_1,\ldots, \xi_k)))=h_\Pi(\sigma(\xi_1,\ldots, \xi_k)) \\
= \,& \theta_\Pi(\sigma)\big(h_\Pi(\xi_1),\ldots,h_\Pi(\xi_k) \big) = \theta_\Pi(\sigma)\big(\varphi(h_{\V(\A)}(\xi_1)),\ldots,\varphi(h_{\V(\A)}(\xi_k)) \big)\enspace.
\end{align*}
With this we proved that $\Ne(\A)$ and $\K_\varphi$ are isomorphic.
\end{proof}

\begin{cor} (cf. \cite[Thm.~6.5]{cirdroignvog10}) 
Let  $\A=(Q,\delta,F)$ be a $(\Sigma,B)$-wta and $q_1,\ldots,q_n$ be an enumeration of the elements of $Q$. Then $\Ne(\A)$ and a final variant of a subdirect product of $\big(\der(h_{\V(\A)}^{q_i}) \mid i \in [n]\big)$ are isomorphic.
\end{cor}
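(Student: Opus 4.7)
The plan is to derive the corollary as a direct consequence of Theorem~\ref{thm:characterizatio-of-Nerode-automaton} together with Observation~\ref{obs:subdirect-product}. Theorem~\ref{thm:characterizatio-of-Nerode-automaton} already delivers that $\Ne(\A)$ is isomorphic to a final variant $\C$ of $\acc\big(\Pi\big((\der(h_{\mathrm{V}(\A)}^{q_i}) \mid i \in [n])\big)\big)$, so the only thing left to verify is that this accessible part actually is a subdirect product of the family $(\der(h_{\mathrm{V}(\A)}^{q_i}) \mid i \in [n])$.

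First, I would recall the observation made right after equation~\eqref{eq:derivative-homomorphism}: for every weighted tree language $r : T_\Sigma \to B$, the derivative algebra $\der(r)$ is accessible, since $h_{\der(r)}(\xi) = \xi^{-1}r$ and by definition every state of $\der(r)$ has this form. Applying this to $r = h_{\mathrm{V}(\A)}^{q_i}$ for each $i \in [n]$, I conclude that every factor $\der(h_{\mathrm{V}(\A)}^{q_i})$ of the direct product is accessible.

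Next, I would invoke Observation~\ref{obs:subdirect-product}, which states that the accessible part of a direct product $\Pi(\K)$ is a subdirect product of $\K$ exactly when every component $\K_i$ is accessible. Applied to the family $\K = (\der(h_{\mathrm{V}(\A)}^{q_i}) \mid i \in [n])$, this yields that $\acc\big(\Pi\big((\der(h_{\mathrm{V}(\A)}^{q_i}) \mid i \in [n])\big)\big)$ is a subdirect product of $\big(\der(h_{\mathrm{V}(\A)}^{q_i}) \mid i \in [n]\big)$.

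Finally, combining these with Theorem~\ref{thm:characterizatio-of-Nerode-automaton}, the final variant $\C$ provided by that theorem is in particular a final variant of a subdirect product of $\big(\der(h_{\mathrm{V}(\A)}^{q_i}) \mid i \in [n]\big)$, and $\Ne(\A) \cong \C$. There is no real obstacle here, since the corollary is essentially a repackaging of the theorem once one notes that each derivative algebra is accessible; the only care needed is to keep track of the fact that ``final variant'' commutes with the passage from the accessible part of the direct product to a subdirect product, which is immediate because only the root-weight mapping $F$ is altered while the underlying carrier and $\Sigma$-interpretation are preserved.
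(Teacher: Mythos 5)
Your proposal is correct and matches the paper's own proof exactly: the paper likewise derives the corollary from Theorem~\ref{thm:characterizatio-of-Nerode-automaton} via Observation~\ref{obs:subdirect-product}, using that each $\der(h_{\mathrm{V}(\A)}^{q_i})$ is accessible. Your write-up merely spells out the same steps in more detail.
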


\begin{proof}
It follows from Theorem \ref{thm:characterizatio-of-Nerode-automaton} by Observation \ref{obs:subdirect-product}, because $\der(h_{\V(\A)}^q)$ is accessible for each $q \in Q$.
\end{proof}

Let $\A=(Q,\delta,F)$ be a $(\Sigma,B)$-wta such that $\Ne(\A)$ is finite. By  (i) $\Rightarrow$ (ii) of Theorem \ref{theo:Nerode_conditions},
$\rel(\Ne(\A))$ has the property that, for each $q\in Q$, the weighted tree language $h_{\V(\A)}^q$  is i-recognizable  by some final variant of $\rel(\Ne(\A))$ 
(because $h_{\V(\A)}^q\in\sem{\mathrm{FV}(\A)}^\init$). In the following we show that $\rel(\Ne(\A))$ is minimal among all
crisp-deterministic wta which have this property.

\begin{theo} \label{theo:minimal}(cf. \cite[Thm.~6.6]{cirdroignvog10})
Let $\A=(Q,\delta,F)$ be a $(\Sigma,B)$-wta such that $\Ne(\A)$ is finite. Then  $\rel(\Ne(\A))$ is minimal (with respect to the number of states) in the set
  \[
   U_\A = \{ \text{ crisp-deterministic wta } \B \mid (\forall q\in Q)(\exists \B'\in \mathrm{FV}(\B)): h_{\V(\A)}^q= \sem{\B'}^\init\}\enspace. 
  \]
\end{theo}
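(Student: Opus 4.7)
The plan is to establish the two ingredients of minimality separately. First I would verify that $\rel(\Ne(\A))$ itself lies in $U_\A$: for every $q \in Q$, the final variant $\A^q$ of $\A$ (defined in the preparation before Theorem \ref{theo:Nerode_conditions}) satisfies $h_{\mathrm{V}(\A)}^q = \sem{\A^q}^\init$, and by the implication (i)$\Rightarrow$(ii) of Theorem \ref{theo:Nerode_conditions} the weighted tree language $\sem{\A^q}^\init$ is i-recognizable by some final variant of $\rel(\Ne(\A))$. Hence the claim is not vacuous.

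Then I would prove the lower bound on the number of states. Fix an arbitrary $\B=(P,\delta',G) \in U_\A$ and let $\K_\B=\rel(\B)=(P,\theta_\B,G)$ be the finite $(\Sigma,B)$-algebra related to $\B$. For every $q \in Q$ choose a final variant $\B^q=(P,\delta',G^q)$ of $\B$ witnessing $h_{\mathrm{V}(\A)}^q = \sem{\B^q}^\init$. Since $\B^q$ is crisp-deterministic and is related to the final variant $G^q(\K_\B)=(P,\theta_\B,G^q)$ of $\K_\B$, Lemma \ref{lm:cdwta_semantics} yields
\[
h_{\mathrm{V}(\A)}^q(\xi) \;=\; \sem{\B^q}^\init(\xi) \;=\; \sem{G^q(\K_\B)}(\xi) \;=\; G^q(h_{\K_\B}(\xi))
\]
for every $\xi \in T_\Sigma$ and every $q \in Q$.

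The key step is now to exhibit a surjection from a subset of $P$ onto $Q_\Ne$. Define $\psi:\im(h_{\K_\B}) \to Q_\Ne$ by $\psi(h_{\K_\B}(\xi)) = h_{\mathrm{V}(\A)}(\xi)$. Well-definedness follows from the displayed equation: if $h_{\K_\B}(\xi)=h_{\K_\B}(\zeta)$, then for every $q \in Q$ one gets $h_{\mathrm{V}(\A)}(\xi)_q = G^q(h_{\K_\B}(\xi)) = G^q(h_{\K_\B}(\zeta)) = h_{\mathrm{V}(\A)}(\zeta)_q$, i.e.\ $h_{\mathrm{V}(\A)}(\xi)=h_{\mathrm{V}(\A)}(\zeta)$. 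Surjectivity is immediate from Proposition \ref{prop:Nerode=image}, which gives $Q_\Ne = \im(h_{\mathrm{V}(\A)})$. Consequently $|P| \ge |\im(h_{\K_\B})| \ge |Q_\Ne|$, which is the number of states of $\rel(\Ne(\A))$.

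The main (modest) obstacle is the bookkeeping around final variants: one has to be careful to package the family $(G^q \mid q \in Q)$ of root-weight vectors on the common state set $P$ so that the well-definedness argument for $\psi$ really reduces to the component-wise equality of $h_{\mathrm{V}(\A)}(\xi)$ and $h_{\mathrm{V}(\A)}(\zeta)$ in the vector algebra $\V(\A)$. Once Lemma \ref{lm:cdwta_semantics} is invoked to reroute the semantics of $\B^q$ through $h_{\K_\B}$, the rest of the argument is routine.
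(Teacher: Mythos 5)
Your proposal is correct and follows essentially the same route as the paper: a surjection onto $Q_\Ne$ defined by $h_{\K_\B}(\xi)\mapsto h_{\mathrm{V}(\A)}(\xi)$, with well-definedness read off component-wise from $h_{\mathrm{V}(\A)}(\xi)_q=G^q(h_{\K_\B}(\xi))$ and surjectivity from Proposition \ref{prop:Nerode=image}. The only (cosmetic) difference is that you restrict the domain to $\im(h_{\K_\B})$ instead of first arguing that one may assume the related algebra is accessible, which is a slightly tidier way to handle the same point.
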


\begin{proof} As we saw, $\rel(\Ne(\A))\in U_\A$. Now let $\B=(Q',\delta',F')$ be an arbitrary crisp-deterministic $(\Sigma,B)$-wta in $U_\A$. 
By definition, for every $q \in Q$ there is a final variant $\B' = (Q',\delta',F'_q)$ of $\B$ with $\sem{\B'} = h_{\V(\A)}^q$. We will give a surjective mapping $\varphi: Q' \to Q_\Ne$. 

  Let $\rel(\B')=(Q', \theta',F'_q)$ be the $(\Sigma,B)$-algebra related to $\B'$. We can assume that $\rel(\B')$ is accessible, because otherwise if there is a state $q \in Q'$ which is not accessible, then there is a final variant $\B''$ of $\B$ with less states than $\B'$.

We define a mapping $\varphi$ by $\varphi(q') = h_{\V(\A)}(\xi)$ for each $q' \in Q'$, where $\xi \in T_\Sigma$ is such that $q'=h_{\rel(B')}(\xi)$. Then $\varphi$ is well-defined, which can be seen as follows. Let $q' \in Q'$ and $\xi,\zeta \in T_\Sigma$ such that $h_{\rel(B')}(\xi)=q'=h_{\rel(B')}(\zeta)$. Then \[h_{\V(\A)}(\xi)_q = (F'_q)_{h_{\rel(B')}(\xi)} = (F'_q)_{h_{\rel(B')}(\zeta)} = h_{\V(\A)}(\zeta)_q,\]
for every $q \in Q$ and hence, $h_{\V(\A)}(\xi) = h_{\V(\A)}(\zeta)$. In addition, $\varphi$ is surjective because $\rel(\B')$ is accessible. Thus, we conclude that $|Q_\Ne| \leq |Q'|$. Therefore, we have proved that $\rel(\Ne(\A))$ is a minimal crisp-deterministic  $(\Sigma,B)$-wta in the set $U_\A$. 
\end{proof}


\begin{algorithm}[t]
    \small
    \KwIn{a $(\Sigma,B)$-wta $\A=(Q,\delta,F)$}
    \KwOut{the crisp-deterministic $(\Sigma,B)$-wta $\rel(\Ne(\A))$}
    \BlankLine
    \Var $i \in \N$\;
    \hspace*{15mm} family $(G_i \mid i \in \mathbb{N})$ where $G_i=(V_i,E_i)$ is a $\Sigma$-hypergraph\label{line:var-family}\;
    \hspace*{15mm} $X \subseteq B^Q,\quad Y \subseteq \{(B^Q)^k \times \Sigma^{(k)} \times B^Q \mid k \in [0, \maxrk(\Sigma)]\},\quad v \in B^Q$\;
    \BlankLine
    $V_0 \leftarrow \emptyset$ and $E_0 \leftarrow \emptyset$ \Comment*[r]{\textrm{\% this forms the hypergraph $G_0$}}
    $i\leftarrow 0$\;
    
    \Repeat
    {
        $G_i = G_{i-1}$
    }
    {
        $X\leftarrow\emptyset$ and $Y\leftarrow\emptyset$\;
        \ForEvery{$k \in [0, \maxrk(\Sigma)]$, $\sigma \in \Sigma^{(k)}$, and $v_1,\ldots,v_k \in V_i$}
        {
            $v \leftarrow \delta_\A(\sigma)(v_1,\ldots,v_k)$, 
            $X \leftarrow X \cup \{v\}$, and 
            $Y \leftarrow Y \cup \{\langle v_1, \ldots, v_k,\sigma, v \rangle\}$\;
        }
        $V_{i+1} \leftarrow V_i \cup X$ and $E_{i+1} \leftarrow E_i \cup Y$\Comment*[r]{\textrm{\% this forms the hypergraph $G_{i+1}$}}
        $i \leftarrow i+1$\;
    }
    output the crisp-deterministic wta $(Q_\Ne,\delta_\Ne,F_\Ne)$ with\;
    \begin{itemize}
      \item $Q_\Ne = V_i$,
    \item $\delta_\Ne = ((\delta_\Ne)_k \mid k \in \N)$ and  $\supp((\delta_\Ne)_k)= \{(v_1\ldots v_k,\sigma,v)\mid  \langle v_1, \ldots, v_k,\sigma, v \rangle\in E_i\}$ and $\im((\delta_\Ne)_k) \in \{\0,\1\}$ and
      \item $F_\Ne: Q_\Ne \to B$ such that  $(F_\Ne)_v \leftarrow v \otimes F$ for each $v \in Q_\Ne$
      \end{itemize}
    \caption{Construction of the crisp-deterministic $(\Sigma,B)$-wta $\rel(\Ne(\A))$}
    \label{alg:construct-rel-Ne(A)}
    \end{algorithm}

\subsection{Sufficient conditions for finiteness and  the algorithmic construction of the Nerode algebra}

Next we give sufficient conditions for the strong bimonoid $B$ and the wta $\A$ which guarantee that the Nerode algebra $\Ne(\A)$ is finite. Moreover, we give an algorithm to construct the  crisp-deterministic wta which is i-equivalent to $\A$.

    \begin{cor}\label{cor:cd-init} Let $\A$ be a $(\Sigma,B)$-wta. If
      \begin{itemize}
      \item $B$ is locally finite or
      \item $B$ is multiplicatively locally finite and $\A$ is bu-deterministic,
      \end{itemize}
      then the Nerode $(\Sigma,B)$-algebra $\Ne(\A)$ is finite and for the crisp-deterministic wta $\rel(\Ne(\A))$ we have $\sem{\A}^\init= \sem{\rel(\Ne(\A))}^\init$.
  \end{cor}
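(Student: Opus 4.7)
The plan is to reduce both cases to the hypothesis of Theorem~\ref{th:cd-init}, namely that the Nerode algebra $\Ne(\A)$ is finite; once finiteness is established, the equality $\sem{\A}^\init=\sem{\rel(\Ne(\A))}^\init$ is immediate from that theorem. By Proposition~\ref{prop:Nerode=image}, we have $Q_\Ne=\im(h_{\V(\A)})\subseteq B^Q$, so it suffices to bound the number of distinct vectors $h_{\V(\A)}(\xi)$ as $\xi$ ranges over $T_\Sigma$. Let $\Delta=\im(\delta)$, which is finite because $Q$ is finite and $\Sigma$ is a ranked alphabet.

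For the first bullet, I would show by induction on $\xi$, using the defining formula for $\delta_\A(\sigma)$, that every component $h_{\V(\A)}(\xi)_q$ lies in the subalgebra $S=\langle \Delta\rangle_{\{\oplus,\otimes,\0,\1\}}$ of $B$ generated by $\Delta$ (together with the constants~$\0$ and $\1$). Indeed, the formula expresses $\delta_\A(\sigma)(v_1,\ldots,v_k)_q$ as an $\oplus$-sum of $\otimes$-products of entries $(v_i)_{q_i}$ and values $\delta_k(q_1\ldots q_k,\sigma,q)\in\Delta$, so the closure is immediate. Since $B$ is locally finite and $\Delta$ is finite, $S$ is finite, hence $Q_\Ne\subseteq S^Q$ is finite.

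For the second bullet, the key is that local finiteness of $(B,\otimes,\1)$ alone does not guarantee that $\oplus$-sums stay in the multiplicatively generated subalgebra; here the assumption that $\A$ is bu-deterministic removes the problem. By Lemma~\ref{lm:properties-bu-wta}(iii), for every $\xi\in T_\Sigma$ either $h_{\V(\A)}(\xi)$ is the zero $Q$-vector, or there is a unique $q\in Q$ with $h_{\V(\A)}(\xi)_q\neq\0$ and this nonzero value equals $\wt_\A(\rho)$ for the unique run $\rho\in R_\A(q,\xi)$ of nonzero weight. Unfolding the recurrence~\eqref{eq:run-weight}, $\wt_\A(\rho)$ is an iterated $\otimes$-product of elements of $\Delta$, hence lies in the multiplicative subalgebra $T=\langle \Delta\rangle_{\{\otimes,\1\}}$, which is finite by multiplicative local finiteness. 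Consequently each vector in $Q_\Ne$ is either $\0$ or is supported on a single coordinate with value in $T$, giving $|Q_\Ne|\leq 1+|Q|\cdot|T|<\infty$.

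The main obstacle is conceptual rather than technical: one must notice that in the locally finite case the entire two-operation closure is needed (because general $\oplus$-sums appear in $\delta_\A(\sigma)$), whereas multiplicative local finiteness is only strong enough once bu-determinism collapses the sum in Lemma~\ref{lm:properties-bu-wta} to a single summand. With that observation in place, the finiteness arguments in both cases are a short induction combined with an invocation of Theorem~\ref{th:cd-init}.
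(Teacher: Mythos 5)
Your proof is correct and follows essentially the same route as the paper: bound $\im(h_{\V(\A)})=Q_\Ne$ inside $H^Q$ for a finite subset $H\subseteq B$ (the full generated subalgebra of $B$ in the locally finite case, the multiplicatively generated part in the bu-deterministic case), then conclude via Proposition~\ref{prop:Nerode=image} and Theorem~\ref{th:cd-init}. Your second case is in fact slightly more explicit than the paper's, which merely asserts $h_{\V(\A)}(\xi)\in\langle\im(\delta)\rangle_{\{\otimes\}}^Q$ from bu-determinism where you correctly trace this back to Lemma~\ref{lm:properties-bu-wta}(iii).
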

  \begin{proof} Let $\A=(Q,\delta,F)$. 
    First we consider the case that $B$ is locally finite. Since  $\im(\delta)$ is finite and $B$ is locally finite, the carrier set $H$ of the subalgebra of $B$ generated by $\im(\delta)$ is finite. Since $\im(h_{\V(\A)})\subseteq H^Q$, also $\im(h_{\V(\A)})$ is finite. Thus, by Proposition \ref{prop:Nerode=image}, also $\Ne(\A)$ is finite. Then the result follows from Theorem~\ref{th:cd-init}.

    Second  we consider the case that $B$ is multiplicatively locally finite and $\A$ is bu-deterministic.
    Let $H=\langle \im(\delta) \rangle_{\{\otimes\}}$. Due to the fact that $\A$ is bu-deterministic we have $h_{\V(\A)}(\xi)\in H^Q$ for each $\xi \in T_\Sigma$. Thus $\im(h_{\V(\A)})$ is finite. Then we can finish as in the first case. 
    \end{proof}

Finally, we present the generalization (cf. Algorithm \ref{alg:construct-rel-Ne(A)})  of \cite[Algorithm~6.4]{cirdroignvog10} which we can use to construct $\rel(\Ne(\A))$ for a $(\Sigma,B)$-wta $\A$ if $\Ne(\A)$ is finite.
If Algorithm \ref{alg:construct-rel-Ne(A)} is given a $(\Sigma,B)$-wta $\A=(Q,\delta,F)$ as input and it terminates, then it outputs the crisp-deterministic $(\Sigma,B)$-wta $\rel(\Ne(\A))$. Algorithm \ref{alg:construct-rel-Ne(A)} terminates of input $\A$ if and only if $\Ne(\A)$ is finite.


\section{Crisp-determinization for the run semantics}\label{sect:run-semantics}

We introduce the concept of finite order property for a $(\Sigma,B)$-wta $\A$. Then we prove that if $\A$ has the finite order property, then we can construct the crisp-deterministic wta $\mathcal{R}(\A)$ which is r-equivalent to $\A$ (cf. Theorem~\ref{theo:A_pi_equals_A_run}). We show that if $B$ is bi-locally finite, then $\A$ has the finite order property (cf. Corollary~\ref{thm:det-run}). Next we present an algorithm  which, given  an arbitrary wta $\A$ with the finite order property,  delivers the crisp-deterministic wta $\mathcal{R}(\A)$ (cf. Algorithm~\ref{alg:construct-R(A)}). Lastly, we relate the number of states of $\mathcal{N}(\A)$ defined in Section~\ref{sect:initial-algebra-semantics} and $\mathcal{R}(\A)$ (cf. Theorem~\ref{thm:relating}).

\subsection{Finite order property implies crisp-determinization}

The following concepts are generalizations of the corresponding ones in \cite[Sect.~8]{cirdroignvog10} to the tree case.
Let   $\A=(Q,\delta,F)$ be a $(\Sigma,B)$-wta and let $H_\A$ denote the set $\langle \im(\delta) \rangle_{\{\otimes\}}$. Then $\A$ has the {\em finite order property} if 
\begin{compactitem}
\item[-] the set $H_\A$ is finite, and 
\item[-] each element $b \in H_\A \otimes \im(F)$ has a finite order in $(B,\oplus,\0)$. 
\end{compactitem}
If, \textit{e.g.}, $B$ is bi-locally finite, then each $(\Sigma,B)$-wta has the finite order property.

\begin{ex}\rm \label{ex:ex-finite-order-prop} The $(\Sigma,\mathrm{TSR})$-wta $\A=(Q,\delta,F)$ of Example \ref{ex:size-mod2} has the finite order property, because $H_\A= \langle \im(\delta) \rangle_{\{+\}} = \langle \{0,\infty\} \rangle_{\{+\}} = \{0,\infty\}$ is finite and each element in
  \[
    \{a+b \mid a \in H_\A, b \in \im(F)\} = \{a+b \mid a \in \{0,\infty\}, b \in \{2,3\}\} = \{2,3,\infty\}
    \]
    has finite order in $(\N,\min,\infty)$, because $\min$ is idempotent.

    The $(\Sigma,\mathrm{TSR})$-wta $\A=(Q,\delta,F)$ of Example \ref{ex:size-mod2-non-det} does not have the finite order property, because, \textit{e.g.}, $2 \in \im(\delta)$ and $\langle \{2\}\rangle_{\{+\}}$ is infinite, and thus $H_\A$ is infinite too.
    \hfill $\Box$
  \end{ex}

\begin{quote}\em In this subsection, let $\A=(Q,\delta,F)$ be a $(\Sigma,B)$-wta and we assume that $\A$ has the finite order property.
\end{quote}

Let $\xi\in T_\Sigma$. For each $\rho \in R_\A(\xi)$, we have $\wt_\A(\rho)\otimes F_{\rho(\varepsilon)} \in H_\A \otimes \im(F)$.
Thus $\sem{\A}^\run(\xi)$ is a sum over the finite set $H_\A \otimes \im(F)$. The fact that each element of $H_\A \otimes \im(F)$ has a finite order guarantees that any sum over $H_\A \otimes \im(F)$ is equal to a finite sum over this set. In the following we formalize this phenomenon.

We denote by $\lcm(K)$ the {\em least common multiple of $K$} for each finite subset $K \subseteq \N_+$. We define the integers
\[i_\A = \max\{i(b) \in \N_+ \mid b \in H_\A \otimes \im(F)\} \text{ and } p_\A=\lcm\{p(b) \in \N_+ \mid b \in H_\A \otimes \im(F)\},\] 
respectively, where $i(b)$ is the index of $b$ and $p(b)$ is the period of $b$ in $(B,\oplus,\0)$ for each $b \in H_\A \otimes \im(F)$. 

For each $n\in \N$, we define the number $J_\A(n)\in [0, i_\A+p_\A-1]$ by
\[J_\A(n) =
\begin{cases}
n &\text{if $n< i_\A$,}\\
i_\A+\big((n-i_\A)\!\!\!\mod p_\A\big) &\text{if $n \ge i_\A$,}
\end{cases}
\]
where $(n-i_\A)\!\!\!\mod p_\A$ is the remainder when $n - i_\A$ is divided by $p_\A$. In the first case $J_\A(n) < i_\A$, and in the second case $i_\A \le J_\A(n) \leq i_\A + p_\A - 1$, so $J_\A(n)$ is well-defined. In both cases $n \equiv_{p_\A} J_\A(n)$, where $\equiv_{p_\A}$ denotes the congruence modulo $p_\A$. Moreover, for every $b\in H_\A$, $b'\in \im(F)$, and $n\in \N$,
we have $n(b\otimes b')=J_\A(n)(b\otimes b')$.

Since $H_\A$ is finite by assumption,  the set $Q \times H_\A$ is also finite. We define the set of $q$-runs on $\xi$ of which the weight is $b$ by
\[R_\A(q,\xi,b)=\{\rho \in R_\A(q,\xi) \mid \wt_\A(\rho) = b\}\]
for every $(q,b) \in Q \times H_\A$ and $\xi \in T_\Sigma$.

Moreover, for every $\xi\in T_\Sigma$, let us define $p_\xi: Q \times H_\A \to \N$ and $\pi_\xi: Q \times H_\A \to [0, i_\A+p_\A-1]$ by 
\[p_\xi(q,b) = |R_\A(q,\xi,b)| \text{ and } \pi_\xi(q,b)=J_\A(p_\xi(q,b)).\]
Then for every $\xi\in T_\Sigma$, $(q,b) \in Q \times H_\A$, and $b' \in \im(F)$, we have
\begin{equation} \label{eq:pi_and_p_equal}
\pi_\xi(q,b)(b\otimes b') = p_\xi(q,b)(b\otimes b').
\end{equation}

Now we define  the crisp-deterministic $(\Sigma,B)$-wta $\R(\A) = (Q_\R,\delta_\R,F_\R)$ where
\begin{compactitem}
\item $Q_\R=\{\pi_\xi \mid \xi \in T_\Sigma\}$,
\item  $(\delta_\R)_k(\pi_{\xi_1} \ldots \pi_{\xi_k}, \sigma, \pi_\xi) =
\begin{cases}
\1 &\text{if $\xi=\sigma(\xi_1,\ldots,\xi_k)$}\\
\0 &\text{otherwise,}
\end{cases}$ \\
for every $k\in \N$, $\xi,\xi_1,\ldots,\xi_k \in T_\Sigma$, and $\sigma \in \Sigma^{(k)}$, and
\item $(F_\R)_{\pi_\xi}=\bigoplus_{(q,b) \in Q \times H_\A} \pi_\xi(q,b)(b\otimes F_q)$ for every $\xi \in T_\Sigma$.
\end{compactitem}

Since $|Q_\R| \leq (i_\A+p_\A)^{|Q| \cdot |H_\A|}$, the set $Q_\R$ is finite. If, in particular, each element in $H_\A \otimes \im(F)$ is additively idempotent, then we have  $|Q_\R| \leq 2^{|Q| \cdot |H_\A|}$.

Next we prove that $\delta_\R$ is well defined. For this we need some preparations.  For every $(q,b) \in Q \times H_\A$ and $\xi=\sigma(\xi_1,\ldots,\xi_k) \in T_\Sigma$, we define
\begin{equation}\label{eq:S_runs}
\begin{aligned}
S_\xi(q,b)=\{& (q_1,b_1)\ldots(q_k,b_k) \in (Q \times H_\A)^{k} \mid (\forall i \in [k]): \pi_{\xi_i}(q_i,b_i) > 0, \\ & \Big(\bigotimes_{i=1}^{k}b_i\Big)\otimes\delta_k(q_1 \ldots q_k,\sigma,q) = b\}\enspace,
\end{aligned}
\end{equation}where, and in the rest of this section, $(q_1,b_1)\ldots(q_k,b_k)$ abbreviates $((q_1,b_1),\ldots,(q_k,b_k))$.

\begin{lm} \label{lm:runs} (cf. \cite[Lm.~8.1]{cirdroignvog10})
For every $(q,b) \in Q \times H_\A$ and $\xi=\sigma(\xi_1,\ldots,\xi_k) \in T_\Sigma$, we have 
\begin{equation} \label{eq:runs}
p_\xi(q,b) = \sum_{(q_1,b_1)\ldots(q_k,b_k)\in S_\xi(q,b)} \,\prod_{i=1}^{k} p_{\xi_i}(q_i,b_i).
\end{equation}
\end{lm}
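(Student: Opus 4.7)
The plan is to partition $R_\A(q,\xi,b)$ according to the ``signature'' that each run induces at the children of the root, then count each partition class, and finally show that the positivity constraint in the definition of $S_\xi(q,b)$ is cosmetic.

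First I would set up the signature map. Let $\xi=\sigma(\xi_1,\ldots,\xi_k)$ and let $\rho\in R_\A(q,\xi,b)$. For each $i\in[k]$ set $q_i=\rho(i)\in Q$ and $b_i=\wt_\A(\rho|_i)\in H_\A$; the fact that $b_i\in H_\A$ follows from the defining equation \eqref{eq:run-weight} of $\wt_\A$ by a straightforward induction on $\xi_i$, since $H_\A=\langle\im(\delta)\rangle_{\{\otimes\}}$. By \eqref{eq:run-weight},
\[\wt_\A(\rho) = \Bigl(\bigotimes_{i=1}^{k}b_i\Bigr)\otimes \delta_k(q_1\ldots q_k,\sigma,q),\]
so the condition $\wt_\A(\rho)=b$ is precisely the second clause in \eqref{eq:S_runs}. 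Conversely, a run on $\xi$ is determined by its root state together with its $k$ immediate subruns, so the assignment $\rho \mapsto \bigl((q_1,b_1)\ldots(q_k,b_k),\,(\rho|_1,\ldots,\rho|_k)\bigr)$ is a bijection from $R_\A(q,\xi,b)$ onto
\[\bigsqcup_{(q_1,b_1)\ldots(q_k,b_k)\in T_\xi(q,b)}\ \prod_{i=1}^{k}R_\A(q_i,\xi_i,b_i),\]
where $T_\xi(q,b)$ denotes the set of tuples in $(Q\times H_\A)^k$ satisfying only the weight-equality clause (no positivity). Taking cardinalities,
\[p_\xi(q,b) \;=\; \sum_{(q_1,b_1)\ldots(q_k,b_k)\in T_\xi(q,b)}\ \prod_{i=1}^{k} p_{\xi_i}(q_i,b_i).\]

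Second, I would show that restricting the sum to $S_\xi(q,b)\subseteq T_\xi(q,b)$ changes nothing. The difference $T_\xi(q,b)\setminus S_\xi(q,b)$ consists of tuples where $\pi_{\xi_i}(q_i,b_i)=0$ for some $i$. Since $i_\A\in\N_+$, inspection of the definition of $J_\A$ yields $J_\A(n)=0$ iff $n=0$, hence $\pi_{\xi_i}(q_i,b_i)=0$ iff $p_{\xi_i}(q_i,b_i)=0$. Therefore every tuple in $T_\xi(q,b)\setminus S_\xi(q,b)$ has a zero factor in the associated product and contributes $0$ to the sum, and \eqref{eq:runs} follows.

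The only mild subtlety is the last step: one must be sure that $\pi$ vanishes exactly when $p$ vanishes, so that the definition of $S_\xi(q,b)$ (with $\pi_{\xi_i}(q_i,b_i)>0$) cuts out precisely the terms that contribute nothing. Everything else is a direct counting argument on the tree structure of runs and is completely routine.
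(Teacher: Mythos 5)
Your proof is correct and takes essentially the same route as the paper's: you partition $R_\A(q,\xi,b)$ according to the signature $(q_1,b_1)\ldots(q_k,b_k)$ that a run induces at the children of the root and count each class as $\prod_{i=1}^k p_{\xi_i}(q_i,b_i)$, exactly as in the paper's decomposition via the sets $P_{\xi_1,\ldots,\xi_k}(\cdot)$. The only (immaterial) difference is in handling the positivity clause of $S_\xi(q,b)$: the paper sums directly over $S_\xi(q,b)$, implicitly using that a realized signature has all $p_{\xi_i}(q_i,b_i)\ge 1$ and hence all $\pi_{\xi_i}(q_i,b_i)>0$, whereas you first sum over the larger index set without that clause and then observe, via $J_\A(n)=0$ iff $n=0$, that the omitted terms contribute nothing.
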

\begin{proof}
For each $(q_1,b_1)\ldots(q_k,b_k) \in (Q \times H_\A)^k$ let us define the set $P_{\xi_1,\ldots,\xi_k}\left((q_1,b_1)\ldots(q_k,b_k)\right)$ by
\[P_{\xi_1,\ldots,\xi_k}\left((q_1,b_1)\ldots(q_k,b_k)\right) = \{\rho \in R_\A(q,\xi) \mid (\forall i \in [k]): \rho|_i \in R_\A(q_i,\xi_i,b_i)\}.\] 
Then we have 
\[R_\A(q,\xi,b) = \bigcup_{(q_1,b_1)\ldots(q_k,b_k)\in S_\xi(q,b)} P_{\xi_1,\ldots,\xi_k}\big((q_1,b_1)\ldots(q_k,b_k)\big).\]
Clearly, $|P_{\xi_1,\ldots,\xi_k}\left((q_1,b_1)\ldots(q_k,b_k)\right)|=\prod_{i=1}^k|R_\A(q_i,\xi_i,b_i)|=\prod_{i=1}^k p_{\xi_i}(q_i,b_i)$. Moreover,
$P_{\xi_1,\ldots,\xi_k}\left((q_1,b_1)\ldots(q_k,b_k)\right) \cap P_{\xi_1,\ldots,\xi_k}\left((p_1,c_1)\ldots(p_k,c_k)\right) = \emptyset$ if there is an $i\in[k]$ with $(q_i,b_i) \neq (p_i,c_i)$.
Then (\ref{eq:runs}) follows. 
\end{proof}

Now we show that $(\delta_\R)_k$ is well defined for each $k\in[n]$. For this, let $\xi=\sigma(\xi_1,\ldots,\xi_k)$ and $\zeta=\sigma(\zeta_1,\ldots,\zeta_k)\in T_\Sigma$ such that $\pi_{\xi_i} = \pi_{\zeta_i}$ for each $i \in [k]$. It suffices to show
that $\pi_\xi=\pi_\zeta$. For this, let $(q,b) \in Q \times H_\A$. By (\ref{eq:S_runs}) we obtain that $S_\xi(q,b) = S_\zeta(q,b)$. Let us abbreviate $S_\xi(q,b)$ and  $S_\zeta(q,b)$ by $S$. Then by Lemma \ref{lm:runs} we obtain that
\[p_\xi(q,b) = \sum_{(q_1,b_1)\ldots(q_k,b_k)\in S} \prod_{i=1}^{k} p_{\xi_i}(q_i,b_i)
\text{ and }
p_\zeta(q,b) = \sum_{(q_1,b_1)\ldots(q_k,b_k)\in S} \prod_{i=1}^{k} p_{\zeta_i}(q_i,b_i).
\]
Next, we suppose that $\pi_\xi(q,b) < i_\A$. Then for every $(q_1,b_1)\ldots(q_k,b_k)\in S$ we have that
\[
\sum_{(q_1,b_1)\ldots(q_k,b_k)\in S} \prod_{i=1}^{k} p_{\xi_i} (q_i,b_i)  =p_\xi(q,b) = \pi_\xi(q,b) < i_\A,
\]

so $p_{\xi_i}(q_i,b_i) = \pi_{\xi_i}(q_i,b_i) = \pi_{\zeta_i}(q_i,b_i)$ and
$p_{\zeta_i}(q_i,b_i) = \pi_{\zeta_i}(q_i,b_i) = p_{\xi_i}(q_i,b_i)$ for each $i \in [k]$. Therefore,
\begin{equation*}
\begin{aligned}
p_\zeta(q,b) = \sum_{(q_1,b_1)\ldots(q_k,b_k)\in S} \prod_{i=1}^{k} p_{\zeta_i} (q_i,b_i) = \sum_{(q_1,b_1)\ldots(q_k,b_k)\in S} \prod_{i=1}^{k} p_{\xi_i} (q_i,b_i) = p_\xi(q,b),
\end{aligned}
\end{equation*}
which yields
\begin{equation} \label{eq:pi_xi_equal_pi_zeta}
\pi_\xi(q,b) = p_\xi(q,b) = p_\zeta(q,b) = \pi_\zeta(q,b).
\end{equation}
Similarly we prove that $\pi_\zeta(q,b) < i_\A$ implies (\ref{eq:pi_xi_equal_pi_zeta}).

Assume now that $i_\A \leq \pi_\xi(q,b), \pi_\zeta(q,b) \leq i_\A + p_\A - 1$. For any $(q_1,b_1)\ldots(q_k,b_k) \in S$ and all $i \in [k]$
\[p_{\xi_i}(q_i,b_i) \equiv_{p_\A} \pi_{\xi_i}(q_i,b_i) = \pi_{\zeta_i}(q_i,b_i) \equiv_{p_\A} p_{\zeta_i}(q_i,b_i).\]
Therefore,
\begin{equation} \label{eq:pi_xi_equiv_pi_zeta}
\begin{aligned}
\pi_\xi(q,b) 
\equiv_{p_\A} p_\xi(q,b)
&= \sum_{(q_1,b_1)\ldots(q_k,b_k)\in S}\;\prod_{i=1}^k p_{\xi_i}(q_i,b_i)\\
&\equiv_{p_\A}  \sum_{(q_1,b_1)\ldots(q_k,b_k)\in S}\;\prod_{i=1}^k p_{\zeta_i}(q_i,b_i) = p_\zeta(q,b) \equiv_{p_\A} \pi_\zeta(q,b).
\end{aligned}
\end{equation}
In (\ref{eq:pi_xi_equiv_pi_zeta}), we use that if $c \equiv_{p_\A} c'$ and $d \equiv_{p_\A} d'$, then $(c + d) \equiv_{p_\A} (c' + d')$ and $(c \cdot d) \equiv_{p_\A} (c' \cdot d')$.
Since $i_\A \leq \pi_\xi(q,b), \pi_\zeta(q,b) \leq i_\A + p_\A - 1$, we conclude that $\pi_\xi(q,b) = \pi_\zeta(q,b)$. Thus $(\delta_\R)_k$ is well defined.

\begin{theo} \label{theo:A_pi_equals_A_run} (cf. \cite[Thm.~8.2]{cirdroignvog10}) Let $\A=(Q,\delta,F)$ be a $(\Sigma,B)$-wta. If $\A$ has the finite order property, then $\sem{\R(\A)}^\run = \sem{\A}^\run$.
\end{theo}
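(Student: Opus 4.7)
My plan is to compute both sides of the desired equality in terms of the counting functions $p_\xi$ and $\pi_\xi$ and then invoke equation~(\ref{eq:pi_and_p_equal}) to equate them.

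First I would handle the left-hand side $\sem{\R(\A)}^{\run}(\xi)$. Since $\R(\A)$ is crisp-deterministic, Theorem~\ref{theo:run-sem=initial-sem} tells us $\sem{\R(\A)}^{\run} = \sem{\R(\A)}^{\init}$, and a straightforward induction on $\xi$ using the definition of $(\delta_\R)_k$ yields $h_{\mathrm{V}(\R(\A))}(\xi)_{\pi_\xi}=\1$ and $h_{\mathrm{V}(\R(\A))}(\xi)_{q'}=\0$ for every $q'\ne\pi_\xi$. Consequently
\[
\sem{\R(\A)}^{\run}(\xi) \;=\; (F_\R)_{\pi_\xi} \;=\; \bigoplus_{(q,b)\in Q\times H_\A}\pi_\xi(q,b)\,(b\otimes F_q).
\]

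Next I would rewrite $\sem{\A}^{\run}(\xi)$. Observe that $\wt_\A(\rho)\in H_\A$ for every run $\rho$, because $H_\A=\langle\im(\delta)\rangle_{\{\otimes\}}$ is closed under $\otimes$ and $\wt_\A(\rho)$ is a product of entries of $\delta$ by~(\ref{eq:run-weight}). Hence grouping runs on $\xi$ by the pair $(\rho(\varepsilon),\wt_\A(\rho))\in Q\times H_\A$ partitions $R_\A(\xi)$ into the finitely many sets $R_\A(q,\xi,b)$, whose cardinalities are $p_\xi(q,b)$. Using commutativity of $\oplus$, this yields
\[
\sem{\A}^{\run}(\xi) \;=\; \bigoplus_{(q,b)\in Q\times H_\A}\;\bigoplus_{\rho\in R_\A(q,\xi,b)} b\otimes F_q \;=\; \bigoplus_{(q,b)\in Q\times H_\A} p_\xi(q,b)\,(b\otimes F_q).
\]

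Finally I would close the argument by applying~(\ref{eq:pi_and_p_equal}), which states precisely that $p_\xi(q,b)(b\otimes F_q)=\pi_\xi(q,b)(b\otimes F_q)$ for every $(q,b)\in Q\times H_\A$ (here $F_q\in\im(F)$, so the hypothesis of the equation is met). Summing these equalities over $(q,b)\in Q\times H_\A$ gives $\sem{\A}^{\run}(\xi)=\sem{\R(\A)}^{\run}(\xi)$, as required. I do not expect a genuine obstacle here: the conceptual work was already done in setting up $\pi_\xi$, showing it is well defined (which required $\A$ to have the finite order property so that $H_\A$ is finite and each $b\otimes F_q$ has finite order), and in the counting lemma~\ref{lm:runs}; the current theorem is the immediate pay-off of that setup, reducing to a bookkeeping computation that folds the infinite-looking sum over $R_\A(\xi)$ into a finite sum indexed by $Q\times H_\A$.
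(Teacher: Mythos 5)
Your proposal is correct and follows essentially the same route as the paper: use crisp-determinism to pass to the initial algebra semantics, show by induction that the state reached on $\xi$ is $\pi_\xi$ (the paper phrases this as $h_{\rel(\R(\A))}(\xi)=\pi_\xi$ via Lemma~\ref{lm:cdwta_semantics}), rewrite $\sem{\A}^\run(\xi)$ by partitioning $R_\A(\xi)$ into the sets $R_\A(q,\xi,b)$, and conclude with equation~(\ref{eq:pi_and_p_equal}). No gaps.
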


\begin{proof}
Let $\rel(\R(\A)) = (Q_\R,\theta_\R,F_\R)$.
We show that $h_{\rel(\R(\A))}(\xi) = \pi_\xi$ for every $\xi\in T_\Sigma$. For this, let
 $\xi=\sigma(\xi_1,\ldots,\xi_k)\in T_\Sigma$. Then
\begin{align*}
h_{\rel(\R(\A))}(\xi)
&= h_{\rel(\R(\A))}(\sigma(\xi_1,\ldots,\xi_k))= \theta_\R(\sigma)(h_{\rel(\R(\A))}(\xi_1), \ldots, h_{\rel(\R(\A))}(\xi_k))\\
&= \theta_\R(\sigma)(\pi_{\xi_1}, \ldots, \pi_{\xi_k}) = \pi_\xi.
\end{align*}

Then for each $\xi \in T_\Sigma$ we have
\begin{equation}\label{eq:general-case}
\begin{aligned}
\sem{\R(\A)}^\run(\xi) &= \sem{\R(\A)}^\init(\xi) =^{(*)} \sem{\rel(\R(\A))}(\xi)  \\
&= (F_\R\circ h_{\rel(\R(\A))})(\xi) = (F_\R)_{h_{\rel(\R(\A))}(\xi)} = (F_\R)_{\pi_\xi} = \bigoplus_{(q,b) \in Q \times H_\A} \pi_\xi(q,b)(b \otimes F_q)\\
&=^{(**)} \bigoplus_{(q,b) \in Q \times H_\A} \left(\bigoplus_{\rho \in R_\A(q,\xi,b)} \wt_\A(\rho) \otimes F_q\right) = \bigoplus_{\rho \in R_\A(\xi)} \wt_\A(\rho) \otimes F_{\rho(\varepsilon)} = \sem{\A}^\run(\xi),
\end{aligned}
\end{equation}
where $(*)$ is due to Lemma \ref{lm:cdwta_semantics} and  $(**)$ is justified as follows.
By \eqref{eq:pi_and_p_equal} and the definitions of $p_\xi(q,b)$ and $R_\A(q,\xi,b)$, we have 
\[\pi_\xi(q,b)(b \otimes F_q) = p_\xi(q,b)(b \otimes F_q) = \bigoplus_{\rho \in R_\A(q,\xi,b)} \wt_\A(\rho) \otimes F_q \enspace.\]
This  proves $\sem{\R(\A)}^\run = \sem{\A}^\run$.
 \end{proof}

\begin{ex}
 We give a wta which has the finite order property, and construct a crisp-deterministic wta which is r-equivalent to that wta.
For this, let $\Sigma=\{\sigma^{(2)}, \gamma^{(1)}, \alpha^{(0)}\}$. As weight structure, we use the tropical semiring $\mathrm{TSR}=(\N_\infty,\min,+,\infty,0)$. The 
$(\Sigma,\mathrm{TSR})$-wta $\E=(Q,\delta,F)$ is defined as follows.
\begin{compactitem}
\item $Q=\{q_0,q_1,q_2\}$,
\item $\delta_0(\varepsilon,\alpha,q_0)=\delta(\varepsilon,\alpha,q_1)=\delta_1(q_0,\gamma,q_1)=\delta_2(q_1q_1,\sigma,q_2)=0$,
\item for every other $k\in \N$, $\eta \in \Sigma^{(k)}$, and $p,p_1,\ldots,p_k \in Q$, we have $\delta_k(p_1\ldots p_k,\eta,p)=\infty$, and
\item $F_{q_0}=F_{q_1}=F_{q_2}=1$.
\end{compactitem}
Figure \ref{fig:fin-ord-prop} shows the $\Sigma$-hypergraph for $\E$.  We have
    \[
    \sem{\E}^\run(\xi) =
    \begin{cases}
      1 & \text{if  $\xi \in \{\alpha,\gamma\alpha,\sigma(\alpha,\alpha), \sigma(\alpha,\gamma\alpha), \sigma(\gamma\alpha,\alpha), \sigma(\gamma\alpha,\gamma\alpha)\}$ }\\
      \infty & \text{otherwise}\enspace.
      \end{cases}
    \]
    
    \begin{figure}[t]
\begin{center}
\begin{tikzpicture}
\tikzset{node distance=7em, scale=0.6, transform shape}
\node[state, rectangle] (1) {$\alpha$};
\node[state, right of=1] (2) {$q_0$};
\node[state, rectangle, right of=2] (3) {$\gamma$};
\node[state, right of=3] (4)  {$q_1$};
\node[state, rectangle, below of=4] (5) {$\alpha$};
\node[state, rectangle, right of=4] (6) {$\sigma$};
\node[state, right of=6] (7) {$q_2$};

\tikzset{node distance=2em}
\node[above of=1] (w1) {0};
\node[above of=2] (w2) {1};
\node[above of=3] (w3) {0};
\node[above of=4] (w4) {1};
\node[above of=5] (w5) [right=0.05cm] {0};
\node[above of=6] (w6) {0};
\node[above of=7] (w7) {1};

\draw[->,>=stealth] (1) edge (2);
\draw[->,>=stealth] (2) edge[out=0, in=180, looseness=1.4] (3);
\draw[->,>=stealth] (3) edge[out=0, in=180, looseness=1.4] (4);
\draw[->,>=stealth] (5) edge (4);
\draw[->,>=stealth] (4) edge[out=30, in=150, looseness=1.4] (6);
\draw[->,>=stealth] (4) edge[out=-30, in=-150, looseness=1.4] (6);
\draw[->,>=stealth] (6) edge[out=0, in=180, looseness=1.4] (7);
\end{tikzpicture}
\end{center}

\caption{The $\Sigma$-hypergraph for the $(\Sigma,\mathrm{TSR})$-wta $\E=(Q,\delta,F)$ which is not bu-deterministic and which has the finite order property.}
\label{fig:fin-ord-prop}
\end{figure}
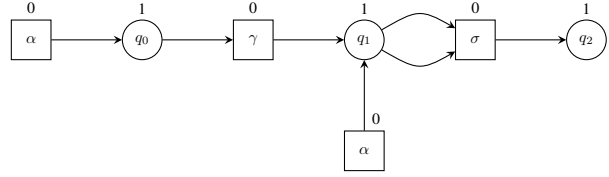

It is easy to see that $H_\E = \{0, \infty\}$. Thus, $\E$ has the finite order property because TSR is additively idempotent. 

Next we compute the values $p_\xi(q,b)$ and $\pi_\xi(q,b)$ for every $\xi \in T_\Sigma$ and $(q,b) \in Q \times H_\E$. Table \ref{table:p-pi-values} shows these values. By our computation, we have $\pi_{\sigma(\alpha,\alpha)}=\pi_{\sigma(\alpha,\gamma\alpha)}=\pi_{\sigma(\gamma\alpha,\alpha)}=\pi_{\sigma(\gamma\alpha,\gamma\alpha)}$, hence, we write $\pi_{\sigma(\alpha,\alpha)}$ for $\pi_{\sigma(\gamma\alpha,\alpha)}$, $\pi_{\sigma(\gamma\alpha,\gamma\alpha)}$, and $\pi_{\sigma(\gamma\alpha,\gamma\alpha)}$. Moreover, for each $\xi' \in (T_\Sigma \setminus \supp(\sem{\E}^\run))$, we have 
$\pi_{\xi'}=\pi_{\gamma^2\alpha}$, we write $\pi_{\gamma^2\alpha}$ for $\pi_{\xi'}$.

\begin{table}[t]
\centering
\begin{tabular}{l|l|cccccc}
& & $(q_0,0)$ & $(q_1,0)$ & $(q_2,0)$ & $(q_0,\infty)$ & $(q_1,\infty)$ & $(q_2,\infty)$\\
\hline
\multirow{2}{*}{$\alpha$}  & $p_\alpha$ & $1$ & $1$ & $0$ & $0$ & $0$ & $1$\\
& $\pi_\alpha$ &  $1$ & $1$ & $0$ & $0$ & $0$ & $1$\\
\hline
\multirow{2}{*}{$\gamma\alpha$}  & $p_{\gamma\alpha}$ & $0$ & $1$ & $0$ & $3$ & $2$ & $3$\\
& $\pi_{\gamma\alpha}$ &  $0$ & $1$ & $0$ & $1$ & $1$ & $1$\\
\hline
\multirow{2}{*}{$\sigma(\alpha,\alpha)$} & $p_{\sigma(\alpha,\alpha)}$ & $0$ & $0$ & $1$ & $9$ & $9$ & $8$\\
& $\pi_{\sigma(\alpha,\alpha)}$ &  $0$ & $0$ & $1$ & $1$ & $1$ & $1$\\
\hline
\multirow{2}{*}{$\sigma(\alpha,\gamma\alpha)$} & $p_{\sigma(\alpha,\gamma\alpha)}$ & $0$ & $0$ & $1$ & $27$ & $27$ & $26$\\
& $\pi_{\sigma(\alpha,\gamma\alpha)}$ &  $0$ & $0$ & $1$ & $1$ & $1$ & $1$\\
\hline
\multirow{2}{*}{$\sigma(\gamma\alpha,\alpha)$} & $p_{\sigma(\gamma\alpha,\alpha)}$ & $0$ & $0$ & $1$ & $27$ & $27$ & $26$\\
& $\pi_{\sigma(\gamma\alpha,\alpha)}$ &  $0$ & $0$ & $1$ & $1$ & $1$ & $1$\\
\hline
\multirow{2}{*}{$\sigma(\gamma\alpha,\gamma\alpha)$} & $p_{\sigma(\gamma\alpha,\gamma\alpha)}$ & $0$ & $0$ & $1$ & $81$ & $81$ & $80$\\
& $\pi_{\sigma(\gamma\alpha,\gamma\alpha)}$ &  $0$ & $0$ & $1$ & $1$ & $1$ & $1$\\
\hline
\multirow{2}{*}{$\xi' \in (T_\Sigma \setminus \supp(\sem{\E}^\run))$} & $p_{\xi'}$ & $0$ & $0$ & $0$ & $>\!0$ & $>\!0$ & $>\!0$\\
& $\pi_{\xi'}$ &  $0$ & $0$ & $0$ & $1$ & $1$ & $1$\\
\end{tabular}
\caption{ The values $p_\xi(q,b)$ and $\pi_\xi(q,b)$ for every $\xi \in T_\Sigma$ and $(q,b) \in Q \times H_\E$.}
\label{table:p-pi-values}
\end{table}

Then we construct the crisp-deterministic $(\Sigma,\mathrm{TSR})$-wta $\R(\E)=(Q_\R,\delta_\R,F_\R)$, where
\begin{compactitem}
\item $Q_\R=\{\pi_\alpha, \pi_{\gamma\alpha}, \pi_{\sigma(\alpha,\alpha)}, \pi_{\gamma^2\alpha}\}$,
\item 
\begin{itemize}
\item for each $\pi \in Q_\R$ we have\\
$(\delta_\R)_0(\varepsilon, \alpha,\pi)=\begin{cases}
0 &\text{if $\pi=\pi_\alpha$}\\
\infty &\text{otherwise},
\end{cases}$

\item for every $\pi, \pi_1 \in Q_\R$ we have\\
$(\delta_\R)_1(\pi_1,\gamma,\pi)=\begin{cases}
0 &\text{if ($\pi_1=\pi_\alpha$ and $\pi=\pi_{\gamma\alpha}$) or } \\
 &\text{($\pi_1 \neq \pi_\alpha$ and $\pi=\pi_{\gamma^2\alpha}$)}\\
\infty &\text{otherwise},
\end{cases}$

\item for every $\pi,\pi_1,\pi_2 \in Q_\R$ we have\\
$(\delta_\R)_2(\pi_1 \pi_2, \sigma, \pi) \begin{cases}
0 &\text{if ($\pi_1,\pi_2\in\{\pi_\alpha, \pi_{\gamma\alpha}\}$ and $\pi=\pi_{\sigma(\alpha,\alpha)}$) or}\\
 &\text{\big($\pi=\pi_{\gamma^2\alpha}$ and ($\pi_1 \not\in \{\pi_\alpha, \pi_{\gamma\alpha}\}$ and/or $\pi_2 \not\in \{\pi_\alpha, \pi_{\gamma\alpha}\}$)\big)}\\ 
\infty &\text{otherwise}\enspace.
\end{cases}$
\end{itemize}
\item for each $\pi \in Q_\R$ we have\\
$F_\R(\pi) = \begin{cases}
\infty &\text{if $\pi = \pi_{\gamma^2\alpha}$},\\
1 &\text{otherwise}.
\end{cases}$
\end{compactitem}

By Theorem \ref{theo:A_pi_equals_A_run}, $\R(\E)$ is a crisp-deterministic wta such that $\sem{\R(\E)}^\run = \sem{\E}^\run$.
\hfill $\Box$
\end{ex}

\subsection{Sufficient conditions for finite order property and the algorithmic construction of \texorpdfstring{$\R(\A)$}{R(A)}}

 \begin{cor} \label{thm:det-run} (cf. \cite[Thm.~11]{drostuvog10}) Let $\A$ be a $(\Sigma,B)$-wta. If
   $B$ is bi-locally finite,  then  we have that $\sem{\A}^\run= \sem{\R(\A)}^\run$.
   \end{cor}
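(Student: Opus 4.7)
The plan is to reduce the corollary to a direct application of Theorem~\ref{theo:A_pi_equals_A_run}; it suffices to show that bi-local finiteness of $B$ forces $\A$ to have the finite order property. Once that implication is in place, the conclusion $\sem{\A}^\run = \sem{\R(\A)}^\run$ is immediate.

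First I would verify that $\im(\delta)$ is a finite subset of $B$. This uses only the standing assumptions on a wta: $Q$ is finite, $\Sigma$ is a ranked alphabet (hence the union $\bigcup_{k \in [0,\maxrk(\Sigma)]} \Sigma^{(k)}$ is finite), and for each such $k$ the domain $Q^k \times \Sigma^{(k)} \times Q$ of $\delta_k$ is finite, so $\im(\delta_k)$ is finite and only finitely many $k$ contribute. Next, using that $(B,\otimes,\1)$ is locally finite, the set $H_\A = \langle \im(\delta) \rangle_{\{\otimes\}}$ generated by a finite set under multiplication is finite; this establishes the first clause of the finite order property.

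For the second clause, observe that $H_\A \otimes \im(F)$ is finite (product of two finite sets). For any fixed $b$ in this set, I would note that $\langle b \rangle = \{nb \mid n \in \N\}$ is contained in the subalgebra of $(B,\oplus,\0)$ generated by the finite set $\{b\}$, and the latter is finite by local finiteness of $(B,\oplus,\0)$. Hence every $b \in H_\A \otimes \im(F)$ has finite order in $(B,\oplus,\0)$, completing the verification that $\A$ has the finite order property.

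Having established both clauses, Theorem~\ref{theo:A_pi_equals_A_run} applies directly and yields $\sem{\A}^\run = \sem{\R(\A)}^\run$. There is no serious obstacle here; the only subtlety to be careful about is not to conflate the two local-finiteness assumptions — one is used to bound $H_\A$ (multiplicative), the other to bound the additive orders (additive) — and the proof is essentially the tree-level analogue of the remark made just after the definition of finite order property, that bi-local finiteness of $B$ already ensures it.
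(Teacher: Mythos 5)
Your proposal is correct and follows the same route as the paper: the paper's proof is the one-line observation that bi-local finiteness of $B$ implies the finite order property of $\A$ (as already remarked right after that property is defined), followed by an appeal to Theorem~\ref{theo:A_pi_equals_A_run}. You merely spell out the details of that observation (finiteness of $\im(\delta)$, hence of $H_\A$ via multiplicative local finiteness, and finite additive orders on $H_\A \otimes \im(F)$ via additive local finiteness), which is a faithful expansion of the paper's argument.
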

   \begin{proof} Since $B$ is bi-locally finite, $\A$ has the finite order property. Then the result follows from Theorem~\ref{theo:A_pi_equals_A_run} .
   \end{proof}

   We note that the condition in Corollary \ref{thm:det-run} (that $B$ is bi-locally finite) is different from the condition in Corollary \ref{cor:cd-init} (that $B$ is locally finite).  Clearly, each locally finite strong bimonoid is also bi-locally finite.
In the following we give an example of a bi-locally finite strong bimonoid which is not locally finite.

\begin{ex} \cite[Ex.~2.2]{cirdroignvog10}
For each $\lambda \in \mathbb{R}$ with $0 < \lambda< \frac{1}{2}$,  let $(B,\oplus,\odot,0,1)$ be the algebra, where
$B =\{0\} \cup \{b \in \mathbb{R}\mid \lambda \le b \le 1\}$, $a \oplus b = \min(a + b, 1)$, and $a\odot b$ is $a \cdot b$ if $a \cdot b \ge \lambda$ and $0$ otherwise, and where $+$ and $\cdot$ are the usual addition and multiplication of real numbers, respectively.
Then $B$ is a commutative strong bimonoid.

It is easy to see that $B$ is bi-locally finite. However, for $\lambda = \frac{1}{4}$ it is not locally finite which can be seen as follows \cite{dro19}. 
Let  $(b_i \mid i \in \N)$ such that $b_0= \frac{1}{2}$ and, if $i$ is odd, then $b_i = \frac{1}{2} \cdot b_{i-1}$, and if $i$ is even and $i\not=0$, then $b_i = \frac{1}{2} + b_{i-1}$. Then, \textit{e.g.}, $b_0= 1/2$, $b_1=1/4$, $b_2= 1/2+1/4 = 3/4$, $b_3= 3/8$, $b_4= 1/2+3/8 = 7/8$, $b_5= 7/16$, $b_6 = 1/2+7/16 = 15/16$, $b_7 = 15/32$,
$b_8=1/2+15/32 = 31/32$, $b_9=31/64$, $b_{10}= 1/2+31/64 = 63/64$, etc. (We note that the subsequences $(b_i \mid i \in \N, i \text{ is even})$ and  $(b_i \mid i \in \N, i \text{ is odd})$ converge to $1$ and $\frac{1}{2}$, respectively.) It is easy to see that $b_i \in \langle \{\frac{1}{2}\}\rangle_{\{\oplus,\odot\}}$ for each $i \in \N$, and that  $b_i \not= b_j$ for every $i,j \in \N$ with $i\not=j$. Hence $(b_i \mid i \in \N)$ is an infinite family of elements in $\langle \{\frac{1}{2}\}\rangle_{\{\oplus,\odot\}}$, and thus $B$ is not locally finite.

We note that $B$ is not a semiring, because $\odot$ is not right distributive. For instance, for $a=b=0.9$, and $c=\lambda$, we have $(a \oplus b) \odot c=\lambda$, while $(a \odot c) \oplus (b \odot c)=0$ because $a \odot c = b \odot c = 0$.
\hfill $\Box$
\end{ex}

\begin{cor} Let $\Sigma$ be a ranked alphabet such that  $|\Sigma^{(1)}|\geq 2$. The following two statements are equivalent.
\begin{enumerate}
\item[(i)] $B$ is bi-locally finite.
\item[(ii)] For every $(\Sigma,B)$-wta $\A$, the weighted tree language $\sem{\A}^\run$ is r-recognizable by a crisp-deterministic wta.
\end{enumerate}
\end{cor}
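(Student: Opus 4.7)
The plan is to verify (i) $\Rightarrow$ (ii) by invoking Corollary \ref{thm:det-run}, which already produces the crisp-deterministic wta $\R(\A)$ with $\sem{\R(\A)}^{\run} = \sem{\A}^{\run}$ whenever $B$ is bi-locally finite. The real work lies in (ii) $\Rightarrow$ (i). I will argue by contraposition: from the failure of bi-local finiteness of $B$ I will construct a $(\Sigma, B)$-wta $\A$ whose run semantics has infinite image. This is enough, because a crisp-deterministic wta $\B$ is bu-deterministic, so $\sem{\B}^{\run} = \sem{\B}^{\init}$ by Theorem \ref{theo:run-sem=initial-sem}, and Lemma \ref{lm:cdwta_finite_image} then forces $\im(\sem{\B}^{\run})$ to be finite. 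In what follows, fix $\gamma_1, \gamma_2 \in \Sigma^{(1)}$ and some $\alpha \in \Sigma^{(0)}$.

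Case 1, $(B, \oplus, \0)$ not locally finite. First I would note that then some element $b \in B$ must satisfy $|\langle b \rangle| = \infty$: otherwise, for any finite $X = \{b_1, \ldots, b_n\}$ the commutativity of $\oplus$ yields $\langle X \rangle_{\{\oplus\}} = \{m_1 b_1 \oplus \cdots \oplus m_n b_n \mid m_i \in \N\} \subseteq \{c_1 \oplus \cdots \oplus c_n \mid c_i \in \langle b_i \rangle\}$, which is finite, contradicting the hypothesis. Fix such a $b$ and take $\A = (\{p_1, p_2\}, \delta, F)$ with $\delta_0(\varepsilon, \alpha, p_i) = \1$, $\delta_1(p_i, \gamma_1, p_j) = \1$ for all $i, j \in \{1, 2\}$, $F_{p_1} = F_{p_2} = b$, and everything else $\0$. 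On $\gamma_1^m \alpha$ there are $2^{m+1}$ runs, each of weight $\1$, so $\sem{\A}^{\run}(\gamma_1^m \alpha) = 2^{m+1} b$; as $b$ has infinite order these values are pairwise distinct and $\im(\sem{\A}^{\run})$ is infinite.

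Case 2, $(B, \otimes, \1)$ not locally finite. Pick a finite $X = \{b_1, \ldots, b_n\} \subseteq B$ with $\langle X \rangle_{\{\otimes\}}$ infinite, and assume (WLOG, since $\0$ contributes only $\0$ to the generated submonoid) that $\0 \notin X$. Using the two unary symbols I encode the index $k \in \{0, \ldots, n-1\}$ by the block $\gamma_2 \gamma_1^k$: define $\A = (\{q_0, \ldots, q_{n-1}\}, \delta, F)$ with $\delta_0(\varepsilon, \alpha, q_0) = \1$, $\delta_1(q_k, \gamma_1, q_{k+1}) = \1$ for $0 \leq k < n-1$, $\delta_1(q_k, \gamma_2, q_0) = b_{k+1}$ for $0 \leq k \leq n-1$, $F_{q_0} = \1$, and all remaining entries $\0$. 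This $\A$ is bu-deterministic, and on the tree $\gamma_2 \gamma_1^{k_m} \gamma_2 \gamma_1^{k_{m-1}} \cdots \gamma_2 \gamma_1^{k_1} \alpha$ with each $k_j \in \{0, \ldots, n-1\}$ the unique nonzero run yields weight $b_{k_1+1} \otimes \cdots \otimes b_{k_m+1}$, so $\im(\sem{\A}^{\run}) \supseteq \langle X \rangle_{\{\otimes\}}$.

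The main obstacle is Case 2. With only a single unary symbol, a bu-deterministic wta realizes only powers $\{b^k \mid k \geq 0\}$ of a single weight $b$, and this set can be finite even when $\langle X \rangle_{\{\otimes\}}$ is infinite for $|X| \geq 2$. The positional block-encoding of $n$ indices into strings over two unary letters is precisely what the hypothesis $|\Sigma^{(1)}| \geq 2$ permits.
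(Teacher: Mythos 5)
Your proposal is correct and follows the route the paper intends: the forward direction is exactly Corollary \ref{thm:det-run}, and for the converse the paper merely defers to a generalization of \cite[Lm.~12]{drostuvog10} from strings to trees, which is precisely the argument you carry out (reduce to finiteness of $\im(\sem{\A}^\run)$ via Theorem \ref{theo:run-sem=initial-sem} and Lemma \ref{lm:cdwta_finite_image}, then witness a failure of additive local finiteness by the two-state all-$\1$ automaton with exponentially many runs, and a failure of multiplicative local finiteness by a bu-deterministic automaton that block-encodes the generators over the two unary symbols). The only merit beyond the paper's text is that you actually supply the construction it leaves to the citation, including the correct observation that commutativity of $\oplus$ lets Case~1 be reduced to a single element of infinite order.
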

\begin{proof} The proof of (i) $\Rightarrow$ (ii) follows from Corollary \ref{thm:det-run}. The proof of (ii) $\Rightarrow$ (i) can be obtained by an easy generalization of \cite[Lm. 12]{drostuvog10} from weighted string automata to wta.
\end{proof}

\begin{function}[t] 
    \small
    \SetAlgoFuncName{Algorithm}{Algorithm}
    \KwIn{$k \in \N$,\enspace $\sigma \in \Sigma^{(k)},$ and $\pi_1,\ldots,\pi_k \in [0, i_\A+p_\A-1]^{Q \times H_\A}$}
    \KwOut{$\pi: [0, i_\A+p_\A-1]^{Q \times H_\A}$}
    \BlankLine
    \Var $(q,b), (q_1, b_1), \ldots ,(q_k,b_k) \in Q \times H_\A$, \quad $q \in Q$,  and\quad $b \in H_\A$\;
    \hspace*{15mm} family $(n_{q,b} \in \N \mid q \in Q, b \in H_\A)$\;
    \BlankLine
    \lForEach{$(q,b) \in Q \times H_\A$}{
      $n_{q,b} \leftarrow 0$
    }
    \ForEach{$(q,b) \in Q \times H_\A$}{
      \ForEvery{$(q_1,b_1), \ldots, (q_k,b_k) \in Q \times H_\A$}{
        \If{$b = (\bigotimes_{i=1}^k b_i) \otimes \delta_k(q_1 \ldots q_k, \sigma, q)$}{
          $n_{q,b} \leftarrow n_{q,b} + J_\A\big( \prod_{i=1}^k \pi_i(q_i,b_i)\big)$
        }
      }
      $\pi(q,b) \leftarrow J_\A(n_{q,b})$
    }
    output $\pi$\;
    \caption{compute()}
    \label{map:compute}
\end{function}

Next we present the construction of $\R(\A)$ (cf. Algorithm \ref{alg:construct-R(A)})  which is the generalization of \cite[Algorithm~8.3]{cirdroignvog10}. 
The construction uses Algorithm \ref{map:compute} which is a generalization of \cite[Algorithm~8.4]{cirdroignvog10}. For these, the following result is necessary .

\begin{cor} \label{cor:pi-computing}
  For every $(q,b) \in Q \times H_\A$ and $\xi = \sigma(\xi_1,\ldots,\xi_k) \in T_\Sigma$, we have
  \[
    \pi_\xi(q,b) = J_\A \Big(\sum_{(q_1,b_1) \ldots (q_k,b_k) \in S_\xi(q,b)} J_\A \Big( \prod_{i=1}^k \pi_{\xi_i}(q_i,b_i)\Big)\Big)\enspace.
  \]
\end{cor}

\begin{proof}
  Let $(q,b) \in Q \times H_\A$ and $\xi = \sigma(\xi_1,\ldots,\xi_k) \in T_\Sigma$. Then we have
  \begingroup
  \allowdisplaybreaks
  \begin{align*}
    \pi_\xi(q,b)
    &= J_\A\big(p_\xi(q,b)\big) \tag{\text{by the definition of $\pi_\xi(q,b)$}}\\
    &= J_\A \Big(\sum_{(q_1,b_1) \ldots (q_k,b_k) \in S_\xi(q,b)} \prod_{i=1}^k p_{\xi_i}(q_i,b_i)\Big) \tag{\text{by Equality~\eqref{eq:runs} of Lemma~\ref{lm:runs}}}\\
    &= J_\A \Big(\sum_{(q_1,b_1) \ldots (q_k,b_k) \in S_\xi(q,b)} J_\A \Big( \prod_{i=1}^k p_{\xi_i}(q_i,b_i)\Big)\Big)
    \tag{\text{since $J_\A(m+n) = J_\A\big( J_\A(m) + J_\A(n) \big)$ for every $m,n \in \N$}}\\[1.25em]
    &= J_\A \Big(\sum_{(q_1,b_1) \ldots (q_k,b_k) \in S_\xi(q,b)} J_\A \Big( \prod_{i=1}^k J_\A\big(p_{\xi_i}(q_i,b_i)\big)\Big)\Big)
    \tag{\text{since $J_\A(m \cdot n) = J_\A\big( J_\A(m) \cdot J_\A(n) \big)$ for every $m,n \in \N$}}\\[1.25em]
    &= J_\A \Big(\sum_{(q_1,b_1) \ldots (q_k,b_k) \in S_\xi(q,b)} J_\A \Big( \prod_{i=1}^k \pi_{\xi_i}(q_i,b_i)\Big)\Big)
    \tag{\text{by the definition of $\pi_{\xi_i}(q_i, b_i)$}}
  \end{align*}
  \endgroup
\end{proof}

If Algorithm \ref{map:compute} is given the input $k \in \N$, $\sigma \in \Sigma^{(k)}$, and $\pi_{\xi_1}, \ldots, \pi_{\xi_k} \in [0, i_\A+p_\A-1]^{Q \times H_\A}$ for some $\xi_1,\ldots,\xi_k\in T_\Sigma$, then it outputs the mapping $\pi_\xi \in [0, i_\A+p_\A-1]^{Q \times H_\A}$, where $\xi=\sigma(\xi_1,\ldots,\xi_k)$ (cf. Corollary \ref{cor:pi-computing}).

    \begin{algorithm}[t]
        \small
        \KwIn{$(\Sigma,B)$-wta $\A=(Q,\delta,F)$ with finite order property}
        \KwOut{the crisp-deterministic $(\Sigma,B)$-wta $\R(\A)$}
        \BlankLine
        \Macro{\hspace*{5mm}${\cal F} = [0, i_\A+p_\A-1]^{Q \times H_\A}$}\;
        \Var $i \in \N$\;
        \hspace*{15mm} family $(G_i \mid i \in \mathbb{N})$ where $G_i=(V_i,E_i)$ is a $\Sigma$-hypergraph\;
        \hspace*{15mm} $X \subseteq {\cal F},\quad  Y \subseteq \{{\cal F}^k \times \Sigma^{(k)} \times {\cal F} \mid k \in [0, \maxrk(\Sigma)]\}, \text{ and}\quad \pi \in {\cal F}$\;
        \BlankLine
        $V_0 \leftarrow \emptyset$ and $E_0 \leftarrow \emptyset$ \Comment*[r]{\textrm{\% this forms the hypergraph $G_0$}}
        $i\leftarrow 0$\;
        
        \Repeat
        {
            $G_i = G_{i-1}$
        }
        {
            $X\leftarrow\emptyset$ and $Y\leftarrow\emptyset$\;
            \ForEvery{$k \in [0, \maxrk(\Sigma)]$, $\sigma \in \Sigma^{(k)}$, and $\pi_1,\ldots,\pi_k \in V_i$}
            {
                $\pi \leftarrow \mathrm{compute}(k,\sigma,\pi_1, \ldots,\pi_k)$\Comment*[r]{\textrm{\% cf. Algorithm~\ref{map:compute}}} 
                $X \leftarrow X \cup \{\pi\}$, and 
                $Y \leftarrow Y \cup \{\langle \pi_1, \ldots, \pi_k,\sigma, \pi \rangle\}$\;
            }
            $V_{i+1} \leftarrow V_i \cup X$ and $E_{i+1} \leftarrow E_i \cup Y$\Comment*[r]{\textrm{\% this forms the hypergraph $G_{i+1}$}}
            $i \leftarrow i+1$\;
        }
        output the crisp-deterministic wta $(Q_\R,\delta_\R,F_\R)$ with\;
        \begin{itemize}
          \item $Q_\R = V_i$,
        \item $\delta_\R = ((\delta_\R)_k \mid k \in \N)$ and  $\supp((\delta_\R)_k)= \{(\pi_1\ldots \pi_k,\sigma,\pi)\mid  \langle \pi_1, \ldots, \pi_k,\sigma, \pi \rangle\in E_i\}$ and $\im((\delta_\R)_k) \in \{\0,\1\}$ and
          \item $F_\R: Q_\R \to B$ such that  $(F_\R)_\pi \leftarrow \bigoplus_{(q,b) \in Q \times H_\A} \pi(q,b)(b \otimes F_q)$ for each $\pi \in Q_\R$
          \end{itemize}
        \caption{Construction of the crisp-deterministic $(\Sigma,B)$-wta $\R(\A)$}
        \label{alg:construct-R(A)}
    \end{algorithm}

\subsection{Relating the number of states of \texorpdfstring{$\Ne(\A)$}{N(A)} and \texorpdfstring{$\R(\A)$}{R(A)}}

\begin{theo} \label{thm:relating}(cf. \cite[Prop.~10.2]{cirdroignvog10})
Let $B$ be right distributive and $\A=(Q,\delta,F)$ be a $(\Sigma,B)$-wta such that the set $H_\A$ is finite and each of its elements has a finite order in $(B,\oplus,\0)$. 
Then $Q_\R$ and $Q_\Ne$ are finite and $|Q_\Ne| \leq |Q_\R|$.
\end{theo}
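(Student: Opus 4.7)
The plan is to construct a surjective mapping $\psi\colon Q_\R \twoheadrightarrow Q_\Ne$ via $\psi(\pi_\xi) = h_{\mathrm{V}(\A)}(\xi)$. Once this is done, surjectivity of $\psi$ and finiteness of $Q_\R$ yield all three conclusions: finiteness of $Q_\R$, finiteness of $Q_\Ne$, and the inequality $|Q_\Ne| \le |Q_\R|$.

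First I would verify that $\A$ has the finite order property, so that $\R(\A)$ is defined. Under right distributivity the identity $n(b \otimes b') = (nb) \otimes b'$ holds for every $n \in \N$ and $b, b' \in B$, so if $b \in H_\A$ has finite order in $(B,\oplus,\0)$ then so does $b \otimes b'$ for every $b' \in \im(F)$. Hence every element of $H_\A \otimes \im(F)$ has finite order and $\A$ has the finite order property. The bound $|Q_\R| \le (i_\A+p_\A)^{|Q|\cdot|H_\A|}$ noted after the definition of $\R(\A)$ then gives $Q_\R$ finite.

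The core of the argument is the identity
\[
h_{\mathrm{V}(\A)}(\xi)_q \;=\; \bigoplus_{\rho \in R_\A(q,\xi)} \wt_\A(\rho) \;=\; \bigoplus_{b \in H_\A} p_\xi(q,b)\cdot b
\]
for every $\xi \in T_\Sigma$ and $q \in Q$, which I would prove by structural induction on $\xi$. The base case $\xi = \alpha$ is immediate from the unique $q$-run on a leaf. For the inductive step $\xi = \sigma(\xi_1, \ldots, \xi_k)$, I would substitute the inductive hypothesis into the recursion $h_{\mathrm{V}(\A)}(\xi)_q = \bigoplus_{\bar q} \bigl(\bigotimes_i h_{\mathrm{V}(\A)}(\xi_i)_{q_i}\bigr) \otimes \delta_k(\bar q, \sigma, q)$ and apply right distributivity repeatedly to pull each of the $k$ inner sums $\bigoplus_{\rho_i \in R_\A(q_i, \xi_i)}$ out to the front, after which the result regroups as a single sum ranging over all $q$-runs on $\xi$. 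This step is the main technical obstacle: right distributivity only extracts a sum standing immediately to the left of a product, so the extractions of the nested summations have to be orchestrated carefully along the product structure of the $k$-ary transition.

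Finally, the finite-order assumption on each $b \in H_\A$ implies that the value $n \cdot b$ depends only on $J_\A(n) = \pi_\xi(q,b)$ when $n = p_\xi(q,b)$, so the identity above rewrites as
\[
h_{\mathrm{V}(\A)}(\xi)_q \;=\; \bigoplus_{b \in H_\A} \pi_\xi(q,b)\cdot b,
\]
which depends only on $\pi_\xi$. Hence the assignment $\psi(\pi_\xi) = h_{\mathrm{V}(\A)}(\xi)$ is a well-defined mapping $Q_\R \to B^Q$, and since its image is precisely $\{h_{\mathrm{V}(\A)}(\xi) \mid \xi \in T_\Sigma\} = Q_\Ne$ by Proposition~\ref{prop:Nerode=image}, the mapping $\psi$ is the desired surjection, giving $|Q_\Ne| \le |Q_\R|$ and in particular the finiteness of $Q_\Ne$.
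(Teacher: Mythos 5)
Your proposal is correct and follows essentially the same route as the paper's proof: the surjection $\psi(\pi_\xi)=h_{\mathrm{V}(\A)}(\xi)$, well-definedness via the identity $h_{\mathrm{V}(\A)}(\xi)_q=\bigoplus_{b\in H_\A}p_\xi(q,b)\,b=\bigoplus_{b\in H_\A}\pi_\xi(q,b)\,b$ using right distributivity and the finite-order assumption, and surjectivity from Proposition~\ref{prop:Nerode=image}. The only difference is that you spell out (and rightly flag as the delicate step) the structural induction behind the first equality, which the paper asserts in one line.
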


\begin{proof}
Since $B$ is right distributive, for every $d \in H_\A$ and $b \in B$, the element $d \otimes b$ has a finite  order in $(B,\oplus,\0)$. Thus any element of $H_\A \otimes \im(F)$ has a finite  order in $(B,\oplus,\0)$. 
Hence the assumptions of Theorem \ref{theo:A_pi_equals_A_run} are satisfied, so $Q_\R$ is finite. 

To finish the proof, it is sufficient to give a surjective mapping $\psi: Q_\R \to Q_\Ne$. We define it 
 by $\psi(\pi_\xi) = h_{\V(\A)}(\xi)$ for each $\xi \in T_\Sigma$.
We show that $\psi$ is well defined, \textit{i.e.}, that
\[\forall \xi,\zeta \in T_\Sigma : \pi_\xi = \pi_\zeta \implies h_{\V(\A)}(\xi) = h_{\V(\A)}(\zeta).\]
Let $\xi,\zeta \in T_\Sigma$ such that $\pi_\xi = \pi_\zeta$. Then we have 
\[h_{\V(\A)}(\xi)_q = \bigoplus_{b \in H_\A} p_\xi(q,b)b = \bigoplus_{b \in H_\A} \pi_\xi(q,b)b = \bigoplus_{b \in H_\A} \pi_\zeta(q,b)b = \bigoplus_{b \in H_\A} p_\zeta(q,b)b = h_{\V(\A)}(\zeta)_q\]
for every $q \in Q$, where in the first and the last equality we use that $B$ is right distributive. Hence $\psi$ is a well-defined. Moreover it is surjective obviously, so we obtain that $|Q_\Ne| \leq |Q_\R|$.
\end{proof}

\section{Undecidability results}\label{sect:undecidability}

The undecidability results of this section only make sense if we assume that the strong bimonoids we consider are computable.
A strong bimonoid $(B,\oplus,\otimes,\0,\1)$ is called {\em computable} if $B$ is a recursive set and the operations $\oplus$ and $\otimes$ are computable (\textit{e.g.}, by a Turing machine).

\begin{quote}\emph{In the rest of this section, we assume that all the mentioned strong bimonoids are computable.}
\end{quote}

We will show that each of the following problems is undecidable for arbitrary ranked alphabet $\Sigma$,  strong bimonoid $B$, and bu-deterministic $(\Sigma,B)$-wta $\A$:
\begin{itemize}
\item[(Pi)] Does there exist a crisp-deterministic $(\Sigma, B)$-wta which is i-equivalent to $\A$?
\item[(Pii)] Is the $(\Sigma,B)$-algebra $\Ne(\A)$ is finite?
\item[(Piii)] Does $\A$ have the finite order property?
\end{itemize}
Each of these results is based on the reduction to an undecidability result of Mealy machines. Thus we devote the first subsection to the repetition of Mealy machines and their simulation by wta.

\subsection{Mealy machines and their simulation by weighted tree automata}\label{subsect:Mealy-machine}

A {\em Mealy machine} is a tuple $\M=(Q,\Delta,\tau,\nu)$ where $Q$ is a finite nonempty set (\emph{states}), $\Delta$ is an alphabet, $\tau:~Q~\times~\Delta~\to~Q$ is a mapping (\emph{transition mapping}), and $\nu: Q \times \Delta \to \Delta$ is a mapping (\emph{output mapping}). 

As usual, we extend the transition mapping $\tau$ to a mapping $\tau^*: Q \times \Delta^* \to Q$ as follows: $\tau^*(q,\varepsilon)=q$ for each $q \in Q$, and $\tau^*(q,wa)=\tau(\tau^*(q,w),a)$ for every $q \in Q$, $w \in \Delta^*$ and $a \in \Delta$. For the sake of simplicity, we denote by $qw$ the state $\tau^*(q,w)$ for every $q \in Q$ and $w \in \Delta^*$.

The {\em mapping induced by $\M$ at state $q$}, denoted by $\nu_q$, is the mapping $\nu_q: \Delta^* \to \Delta^*$ defined by $\nu_q(\varepsilon)=\varepsilon$ and $\nu_q(wa)=\nu_q(w)\nu(qw,a)$ for every $w \in \Delta^*$ and $a \in \Delta$. 
The {\em monoid generated by $\M$}, denoted by $\langle \M \rangle_{\{\circ\}}$, is the submonoid 
$\langle \{\nu_q \mid q \in Q\}\rangle_{\{\circ, \id_{\FS}\}}$ of the monoid $(\FS,\circ, \id_{\FS})$, where $\FS$ denotes the set of all mappings $f:\Delta^{*}\to \Delta^{*}$, and  $\id_{\FS}$ is the identity mapping defined by 
  $\id_{\FS}(w)=w$ for each $w \in \Delta^*$.

We will prove our undecidability results by reducing them to the following one.

\begin{theo} \cite[Thm.~3.13]{gil14} \label{thm:undecidable-automaton-semigroup} 
It is undecidable whether,  for an arbitrary Mealy machine $\M$, the monoid  $\langle \M \rangle_{\{\circ\}}$ is finite. 
\end{theo}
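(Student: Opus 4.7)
The result is external (due to Gillibert), so my plan is to reconstruct at a high level the style of reduction that yields it. I would reduce from the halting problem for deterministic Turing machines, which fits naturally with Mealy machines because both models act on strings letter by letter. Given a deterministic Turing machine $T$ with empty initial input, I would construct a Mealy machine $\M_T = (Q,\Delta,\tau,\nu)$ whose states track a (Turing state, scanned-symbol) pair, whose alphabet $\Delta$ extends the tape alphabet with separators and padding, and whose transition and output maps implement exactly one step of $T$ per input letter, including head movement encoded as a shift of responsibility to an adjacent position in the output word. The key design feature is that long compositions $\nu_{q_{i_1}}\circ\cdots\circ\nu_{q_{i_n}}$ should, on suitably encoded input words, simulate $n$ steps of $T$ starting from a configuration determined by the chosen generator sequence.

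With such a construction in hand, the plan is to establish the equivalence: $T$ halts on the empty input if and only if $\langle \M_T\rangle_{\{\circ\}}$ is finite (or the reverse, depending on how halting is encoded). One direction observes that if $T$ loops forever, then successive simulated configurations produce pairwise distinct output words on some fixed witness input, which forces the corresponding elements of $\langle\M_T\rangle_{\{\circ\}}$ to be pairwise distinct and hence the monoid to be infinite. The other direction observes that once $T$ halts after some $N$ steps, the dynamics become stationary (all subsequent outputs can be forced to a fixed filler behavior), bounding the set of realizable output functions and hence the monoid.

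The main obstacle is technical rather than conceptual: Mealy machines are synchronous, letter-by-letter, and finite-state, whereas Turing machine tapes are unbounded. The tape contents must therefore live in the \emph{input words} to $\M_T$ rather than in its state, and composition of generators must be arranged so that stacking local single-letter transitions lifts to a global simulation of Turing steps without corrupting the encoding. Ensuring that (i) malformed or padding inputs collapse to trivial behavior so as not to create spurious generators, and (ii) distinct halting times genuinely realize distinct output mappings, are the delicate points. Gillibert's argument sidesteps part of (i)--(ii) by passing to the \emph{dual} Mealy machine and using the classical equivalence between finiteness of the monoid $\langle\M\rangle_{\{\circ\}}$ and finiteness of all orbits under the dual action; I would follow the same route, since the orbit formulation makes the halting--finiteness correspondence nearly immediate and reduces the remaining work to verifying that the reduction preserves the relevant algebraic invariants.
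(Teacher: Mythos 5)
First, note that the paper does not prove this statement at all: it is imported verbatim from Gillibert \cite[Thm.~3.13]{gil14} and used as a black box for the reductions in Section~\ref{sect:undecidability}. So there is no internal proof to compare your sketch against; I can only assess the sketch on its own merits.

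On those merits, your outline has the right flavour (simulate machine steps by a length-preserving, letter-to-letter transducer and use the dual/orbit characterization of finiteness), but the reduction as you set it up has a genuine gap: you reduce from \emph{halting on the empty input}, whereas finiteness of $\langle \M \rangle_{\{\circ\}}$ is a property quantified over \emph{all} generator words and \emph{all} input words. A Mealy machine is finite-state and processes its input letter by letter, so it can reject locally malformed encodings, but it cannot recognize which well-formed tape configurations are actually reachable from the empty initial configuration (that set is in general not regular, and deciding reachability is the very problem you are reducing from). Consequently the natural construction gives, at best, ``$\langle \M_T\rangle_{\{\circ\}}$ is infinite iff $T$ has an infinite run from \emph{some} configuration'' --- a reduction from the immortality/mortality problem (Hooper), not from halting on empty input; with your stated equivalence, a machine that halts on the empty input but diverges from some unreachable configuration would be a counterexample. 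A second concrete problem is the step ``if $T$ loops forever then successive configurations are pairwise distinct'': a deterministic non-halting machine may cycle within bounded space, in which case the induced maps are eventually periodic and need not generate an infinite monoid. Finally, the appeal to ``finiteness of all orbits under the dual action'' is imprecise as stated: orbits of finite words under a length-preserving action are always finite, and the usable criteria concern \emph{boundedness} of orbit sizes (equivalently, finiteness of orbits of infinite words, or of the connected components of the powers of the dual automaton). These are exactly the points where the actual argument of \cite{gil14} does nontrivial work, so the sketch cannot be accepted as a proof without choosing a universally quantified source problem and carrying out the encoding in detail.
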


In the proof of our undecidability results, we will simulate $\langle \M \rangle_{\{\circ\}}$ for an arbitrary  Mealy machine $\M$ with input alphabet $\Delta$ by a bu-deterministic wta $\A_\M$. The weight algebra of $\A_\M$ is a strong bimonoid which, cum grano salis, contains the monoid $(\FS,\circ, \id_{\FS})$ as multiplicative part. In order to guarantee later that $\A_\M$ has the finite order property, we will extend  $(\FS,\circ, \id_{\FS})$ into a strong bimonoid with an idempotent addition.

Formally, we let $\infty$ be a new symbol, \textit{i.e.}, $\infty \not\in \Delta^*$. Then we consider
the commutative monoid $(\Delta^*_\infty,\lcp,\infty)$ (cf. \cite[Ex.~1(5)]{drostuvog10}), where $\Delta^*_\infty= \Delta^* \cup \{\infty\}$,  $\lcp$ is the longest common prefix operation on $\Delta^*_\infty$ such that $\lcp(w,\infty)=w=\lcp(\infty,w)$ for each $w \in \Delta^*_\infty$. Then, by \cite[Ex.~1(4)]{drostuvog10}, the algebra
\[
  (\FSinf,\overline{\lcp},\circ,\infty_{\FSinf},\id_{\FSinf})
\]
is a strong bimonoid, where
\begin{compactitem}
  \item  $\FSinf$ is the set of all computable mappings $f: \Delta^{*}_\infty \to \Delta^{*}_\infty$ such that $f(\infty) = \infty$,
  \item $\overline{\lcp}$ is the extension of $\lcp$ for mappings defined by $(\overline{\lcp}(f, g))(w)=\lcp(f(w), g(w))$ for every $f,g \in \FSinf$ and $w \in \Delta^*_\infty$,
  \item $\circ$ is the composition of mappings,
\item $\infty_{\FSinf}$ is the constant mapping zero defined by $\infty_{\FSinf}(w)=\infty$  for each $w \in \Delta^*_\infty$, and
\item $\id_{\FSinf}$ is the identity mapping  over $\FSinf$. 
\end{compactitem}
We note that the condition $f(\infty) = \infty$ is needed in order to guarantee $f \circ \infty_{\FSinf} = \infty_{\FSinf}$. We also note that this strong bimonoid is additively idempotent.

Now let $\M=(Q,\Delta,\tau,\nu)$ be a Mealy machine. We construct the ranked alphabet $\Sigma_\M$ and the $(\Sigma_\M,\FSinf)$-wta $\A_\M=(\{*\},\delta_\M,F_\M)$ as follows.
\begin{compactitem}
  \item $\Sigma_\M=\Sigma_\M^{(1)}\cup \Sigma_\M^{(0)}$ is the ranked alphabet with $\Sigma_\M^{(1)}=Q$ and $\Sigma_\M^{(0)}=\{e^{(0)}\}$ where $e \not\in Q$ is a new symbol,
\item $(\delta_\M)_0(\varepsilon,e,*)=\id_{\FSinf}$ and for each $q \in \Sigma_\M^{(1)}$, we have $(\delta_\M)_1(*,q,*)=\nu'_q$,  where
  $\nu'_q \in \FSinf$ is the extension of $\nu_q$ defined by $\nu'_q|_{\Delta^*}=\nu_q$ and $\nu'_q(\infty)=\infty$, and
  \item $(F_\M)_* = \id_{\FSinf}$.
\end{compactitem}
Note that $\A_\M$ is bu-deterministic.

\begin{quote}
\emph{In the rest of this section, let $\M = (Q,\Delta,\tau,\nu)$ be an arbitrary Mealy machine and $\A_\M= (\{*\},\delta_\M,F_\M)$ be the  $(\Sigma_\M,\FSinf)$-wta constructed from $\M$ as above. Also we abbreviate $\sem{\A_\M}^\init$ by $r_\M$. For each $\xi \in T_{\Sigma_\M}$ we identify $h_{\V(\A_\M)}(\xi)$ with its only component $h_{\V(\A_\M)}(\xi)_*$.}
\end{quote}

Due to the determinism of $\A_\M$ and the way it is constructed we obtain the following connections between $\M$ and $\A_\M$.

\begin{lm}\rm \label{lm:equ-three-sets} We have $\im(r_\M)=\im(h_{\V(\A_\M)})= \langle \{\nu'_q \mid q \in Q\}\rangle_{\{\circ,\id_{\FS}\}} = H_{\A_\M}$. Moreover, $|\im(r_\M)|=| \langle \M \rangle_{\{\circ\}}|$.
  \end{lm}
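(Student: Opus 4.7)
The plan is to unroll the definitions on the very simple shape of trees in $T_{\Sigma_\M}$ and reduce everything to compositions of the $\nu'_q$'s.

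First I would observe that, since $\Sigma_\M^{(0)} = \{e\}$ and $\Sigma_\M^{(k)} = \emptyset$ for $k \geq 2$, every $\xi \in T_{\Sigma_\M}$ has the form $q_n(q_{n-1}(\cdots(q_1(e))\cdots))$ for some $n \in \N$ and $q_1,\ldots,q_n \in Q$. Then, using that $\A_\M$ has the single state $*$ and that $\otimes = \circ$ in the strong bimonoid $(\FSinf,\overline{\lcp},\circ,\infty_{\FSinf},\id_{\FSinf})$, I would prove by induction on $n$ that
\[
h_{\V(\A_\M)}(q_n(\cdots(q_1(e))\cdots)) \;=\; \nu'_{q_1} \circ \nu'_{q_2} \circ \cdots \circ \nu'_{q_n}\ .
\]
The base case $n=0$ uses $h_{\V(\A_\M)}(e) = (\delta_\M)_0(\varepsilon,e,*) = \id_{\FSinf}$. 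For the inductive step I would apply the definition of $\delta_{\A_\M}$ for a unary symbol $q$, which (since there is only one state) collapses to $\delta_{\A_\M}(q)(v) = v \otimes (\delta_\M)_1(*,q,*) = v \circ \nu'_q$.

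Next I would derive the equality of the first three sets. For the semantics, $F_* = \id_{\FSinf}$ gives $r_\M(\xi) = h_{\V(\A_\M)}(\xi) \circ \id_{\FSinf} = h_{\V(\A_\M)}(\xi)$ for every $\xi$, so $\im(r_\M) = \im(h_{\V(\A_\M)})$. By the computation above, this image is precisely the set of all finite $\circ$-products of elements of $\{\nu'_q \mid q \in Q\}$ (with the empty product $\id_{\FSinf}$ corresponding to $n=0$), which equals $\langle \{\nu'_q \mid q \in Q\}\rangle_{\{\circ,\id_{\FSinf}\}}$. Finally, $H_{\A_\M} = \langle \im(\delta_\M)\rangle_{\{\otimes\}} = \langle \{\id_{\FSinf}\} \cup \{\nu'_q \mid q \in Q\}\rangle_{\{\circ\}}$; since $\id_{\FSinf}$ is the $\circ$-unit, this closure coincides with the same submonoid.

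For the ``moreover'' part I would introduce the extension map $\Phi: \FS \to \FSinf$ with $\Phi(f)|_{\Delta^*} = f$ and $\Phi(f)(\infty) = \infty$, i.e.\ $\Phi(\nu_q) = \nu'_q$. A direct verification shows $\Phi(\id_\FS) = \id_{\FSinf}$ and $\Phi(f \circ g) = \Phi(f) \circ \Phi(g)$ (both sides agree on $\Delta^*$ by definition and send $\infty$ to $\infty$). Moreover $\Phi$ is injective since $\Phi(f)|_{\Delta^*} = f$ recovers $f$. Consequently $\Phi$ restricts to a monoid isomorphism from $\langle \M\rangle_{\{\circ\}} = \langle \{\nu_q \mid q \in Q\}\rangle_{\{\circ,\id_\FS\}}$ onto $\langle \{\nu'_q \mid q \in Q\}\rangle_{\{\circ,\id_{\FSinf}\}} = \im(r_\M)$, which yields the cardinality equality.

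There is no real obstacle; the only point that requires attention is bookkeeping the order of composition in the inductive computation of $h_{\V(\A_\M)}$ (the innermost symbol $q_1$ ends up as the leftmost factor, because each application of $\delta_{\A_\M}(q)$ multiplies the accumulated value on the right by $\nu'_q$), and making sure the extension map $\Phi$ is used only between the two monoids where it is well defined and composition-preserving.
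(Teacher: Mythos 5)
Your proposal is correct and follows essentially the same route as the paper: it computes $h_{\V(\A_\M)}$ explicitly on the monadic trees $q_n(\cdots q_1(e)\cdots)$ as the composition of the $\nu'_{q_i}$ (with the same care about the order of factors), identifies $\im(r_\M)$, $\im(h_{\V(\A_\M)})$, $\langle\{\nu'_q\mid q\in Q\}\rangle$ and $H_{\A_\M}$ with the submonoid generated by the $\nu'_q$, and obtains the cardinality claim from the bijection between $\nu_q$ and $\nu'_q$ (the paper uses the restriction map, you use the extension map $\Phi$ — the same bijection in the other direction). The only cosmetic difference is that you make the induction and the monoid-homomorphism verification explicit where the paper compresses them into a single chain of equalities.
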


  \begin{proof}  For every $k\in \N$ and $q_1,\ldots,q_k \in \Sigma_\M^{(1)}$, we have
\begin{equation}\label{eq:r-M}
\begin{aligned}
r_\M(q_1 \ldots q_k e)&  =h_{\V(\A_\M)}(q_1 \ldots q_k e)_*\circ (F_\M)_* \\
& =(\id_{\FSinf}\circ \nu'_{q_k} \circ \ldots \circ \nu'_{q_1}) \circ \id_{\FSinf} =\nu'_{q_k} \circ \ldots \circ \nu'_{q_1}\\
& =\id_{\FSinf}\circ \nu'_{q_k} \circ \ldots \circ \nu'_{q_1}= h_{\V(\A_\M)}(q_1 \ldots q_k e)_*\enspace.
\end{aligned}
\end{equation}
We note that $\nu'_{q_k} \circ \ldots \circ \nu'_{q_1}=\id_{\FSinf}$ for $k=0$.
Then we have
\begin{align*}
  \im(r_\M) &= \im(h_{\V(\A_\M)}) =  \{\nu'_{q_k} \circ \ldots \circ \nu'_{q_1} \mid  k \in \N, q_1,\ldots,q_k \in \Sigma_\M^{(1)}\}\\
&=\langle \{\nu'_q \mid q \in Q\}\rangle_{\{\circ,\id_{\FS}\}} = \langle \im(\delta_\M) \rangle_{\{\circ\}} =  H_{\A_\M} \enspace.
\end{align*}

Moreover, it is clear that  the mapping $\varphi: \im(r_\M)\to\langle \M \rangle_{\{\circ\}}$ defined for every 
$k \in \N$ and  $q_1,\ldots,q_k \in Q$ by
\(\varphi(\nu'_{q_k} \circ \ldots \circ \nu'_{q_1})=\nu_{q_k} \circ \ldots \circ \nu_{q_1}\)
is a bijection (and a monoid homomorphism). Hence $|\im(r_\M)|=| \langle \M \rangle_{\{\circ\}}|$.
    \end{proof}

\subsection{Undecidability of crisp-determinization under initial algebra semantics}

Here we show that problem (Pi) is undecidable. For this we introduce the $\Sigma_\M$-algebra $\F_\M=(\FSinf, (\sigma_\M \mid \sigma \in \Sigma_\M))$ by
\begin{compactitem}
\item $e_\M = \id_{\FSinf}$ and 
\item $q_\M(f) = f \circ \nu'_q$ for every $q \in \Sigma_\M^{(1)}$ and $f \in \FSinf$.
\end{compactitem}

\begin{lm} \label{lm:rM-is-homomorphism}
$r_\M$ is a $\Sigma_\M$-algebra homomorphism from $(T_{\Sigma_\M}, (\overline{\sigma} \mid \sigma \in \Sigma_\M))$ to $\F_\M$.
\end{lm}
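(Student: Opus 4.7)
My plan is to prove this directly using the explicit formula for $r_\M$ obtained in the proof of Lemma \ref{lm:equ-three-sets}, namely
\[
r_\M(q_1 \ldots q_k e) \;=\; \nu'_{q_k} \circ \cdots \circ \nu'_{q_1}
\]
for every $k \in \N$ and $q_1,\ldots,q_k \in Q$ (with the empty composition understood as $\id_{\FSinf}$). Since every tree in $T_{\Sigma_\M}$ has the shape $q_1 \ldots q_k e$, this formula determines $r_\M$ completely.

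First I would check the nullary case: by the formula with $k=0$ we have $r_\M(e) = \id_{\FSinf} = e_\M = \overline{e}_\M$, which gives the homomorphism condition for $e$.

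Next I would verify the unary case. Let $q \in \Sigma_\M^{(1)}$ and let $\xi \in T_{\Sigma_\M}$ be of the form $\xi = q_1 \ldots q_k e$. Then $\overline{q}(\xi) = q q_1 \ldots q_k e$, and the formula yields
\[
r_\M(\overline{q}(\xi)) \;=\; \nu'_{q_k} \circ \cdots \circ \nu'_{q_1} \circ \nu'_q \;=\; r_\M(\xi) \circ \nu'_q \;=\; q_\M(r_\M(\xi)),
\]
where the last equality is the definition of $q_\M$ in $\F_\M$. This establishes the homomorphism condition for every $q \in \Sigma_\M^{(1)}$ and every $\xi \in T_{\Sigma_\M}$.

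Since the conditions hold for all symbols of $\Sigma_\M$, the mapping $r_\M$ is a $\Sigma_\M$-algebra homomorphism from the term algebra to $\F_\M$. There is no real obstacle here; the only subtlety is remembering the reversed order of composition inherent in the run of $\A_\M$ (the topmost label $q$ of a tree contributes the \emph{rightmost} factor $\nu'_q$), which is precisely what matches the definition $q_\M(f) = f \circ \nu'_q$.
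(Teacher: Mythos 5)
Your proof is correct and follows essentially the same route as the paper: both arguments reduce the claim to the two homomorphism conditions (for the nullary symbol $e$ and the unary symbols $q$) and verify the unary case via the explicit formula $r_\M(q_1\ldots q_k e)=\nu'_{q_k}\circ\cdots\circ\nu'_{q_1}$ established in Equation~\eqref{eq:r-M}, including the correct handling of the reversed composition order that matches $q_\M(f)=f\circ\nu'_q$. The only cosmetic difference is that you spell out that every tree in $T_{\Sigma_\M}$ has the monadic form $q_1\ldots q_k e$, which the paper leaves implicit.
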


\begin{proof}
  Obviously, $r_\M(\overline{e})=r_\M(e)=\id_{\FSinf}=e_\M$.
  Moreover, for every $q \in \Sigma_\M^{(1)}$ and $\xi \in T_{\Sigma_\M}$, we have
\[r_\M(\overline{q}(\xi))=r_\M(q(\xi))=r_\M(\xi) \circ \nu'_q =  q_\M(r_\M(\xi))\enspace ,\]
where the second equality follows from Equation \eqref{eq:r-M}.
Hence, $r_\M$ is a $\Sigma_\M$-algebra homomorphism.
\end{proof}

\begin{lm} \label{lm:rec-step-map-Mcirc} The following two statements are equivalent.
\begin{enumerate}
\item[(i)]$\im(r_\M)$ is finite and for each $f \in \FSinf$ the $\Sigma_\M$-tree language $r_\M^{-1}(f)$ is recognizable.
\item[(ii)] $\langle \M \rangle_{\{\circ\}}$ is finite.
\end{enumerate}
\end{lm}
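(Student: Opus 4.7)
The plan is to observe that Lemma \ref{lm:equ-three-sets} has already done most of the work: it exhibits a bijection between $\im(r_\M)$ and $\langle \M \rangle_{\{\circ\}}$, so these two sets are equicardinal. Consequently, $\im(r_\M)$ is finite if and only if $\langle \M \rangle_{\{\circ\}}$ is finite. This equivalence immediately yields the implication (i) $\Rightarrow$ (ii) (the recognizability half of (i) is not even needed for this direction), and it also supplies the finiteness part of (ii) $\Rightarrow$ (i).

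Hence the only remaining task is, for the direction (ii) $\Rightarrow$ (i), to argue that when $\langle \M \rangle_{\{\circ\}}$ is finite, each preimage $r_\M^{-1}(f)$ with $f \in \FSinf$ is a recognizable $\Sigma_\M$-tree language. Here I would invoke Lemma \ref{lm:rM-is-homomorphism}, which says that $r_\M$ is a $\Sigma_\M$-algebra homomorphism from the term algebra $T_{\Sigma_\M}$ to $\F_\M$. By initiality of the term algebra, $r_\M$ coincides with the unique homomorphism $h_{\F_\M}$, and so its $h_{\F_\M}$-image is a subalgebra $(\im(r_\M), \theta')$ of $\F_\M$ whose carrier set is now known to be finite.

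It then suffices to regard this finite $\Sigma_\M$-algebra as a bu-deterministic and total (in fact crisp-deterministic) finite-state $\Sigma_\M$-tree automaton over $\mathbb{B}$: take the state set to be $\im(r_\M)$, define every transition $\sigma(q_1,\ldots,q_k) \mapsto \theta'(\sigma)(q_1,\ldots,q_k)$ to have weight $1$ (all others $0$), and take $f$ as the unique accepting state if $f \in \im(r_\M)$, or no accepting state otherwise. In the latter case $r_\M^{-1}(f) = \emptyset$, which is trivially recognizable; in the former case the set of trees accepted by this automaton is precisely $\{\xi \in T_{\Sigma_\M} \mid r_\M(\xi) = f\} = r_\M^{-1}(f)$, so this language is recognizable.

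No real obstacle is expected here, because the two supporting lemmas have already been assembled exactly for this purpose. The only minor subtlety is to remember to handle elements $f \in \FSinf \setminus \im(r_\M)$ separately (their preimage is empty and thus trivially recognizable), rather than trying to encode them as states of the constructed tree automaton.
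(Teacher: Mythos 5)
Your proposal is correct and follows essentially the same route as the paper: both directions rest on Lemma \ref{lm:equ-three-seets}\let\dummy\relax\ref{lm:equ-three-sets} for the finiteness equivalence, and the recognizability of $r_\M^{-1}(f)$ is obtained from Lemma \ref{lm:rM-is-homomorphism} by viewing $r_\M$ as a surjective homomorphism onto a finite algebra, with the empty-preimage case handled separately. The only difference is that the paper cites the classical fact (\cite[Cor.~2.7.2]{gecste84}) that such preimages are recognizable, whereas you unfold that citation into the explicit crisp-deterministic automaton construction; both are fine.
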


\begin{proof}
(i) $\Rightarrow$ (ii): By Lemma \ref{lm:equ-three-sets}, the statement trivially holds. 

(ii) $\Rightarrow$ (i):
By Lemma \ref{lm:equ-three-sets},  $\im(r_\M)$ is finite.
Next we show that for each $f \in \FSinf$ the $\Sigma_\M$-tree language $r_\M^{-1}(f)$ is recognizable. 

If $f \in \FSinf \setminus \im(r_\M)$, then $r_\M^{-1}(f)=\emptyset$ which is obviously recognizable.

Now let $f \in \im(r_\M)$. Since $\im(r_\M)$ is finite, the $r_\M$-image of the term algebra $T_{\Sigma_\M}$ in 
$\F_\M$ is a finite $\Sigma_\M$-algebra. Moreover, $r_\M$ is a (surjective) $\Sigma_\M$-algebra homomorphism from $T_{\Sigma_\M}$
to this finite algebra. Thus, by \cite[Cor. 2.7.2]{gecste84}, $r_\M^{-1}(f)$ is recognizable.
\end{proof}

\begin{theo}\label{thm:A-init-rec-step-map}
It is undecidable whether, for arbitrary ranked alphabet $\Sigma$,  strong bimonoid $B$, and bu-deterministic $(\Sigma, B)$-wta $\A$, there is a crisp-deterministic $(\Sigma,B)$-wta $\B$ such that $\sem{\B}^\init=\sem{\A}^\init$.
\end{theo}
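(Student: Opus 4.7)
The plan is to reduce the question to the undecidability of finiteness of the monoid $\langle \M \rangle_{\{\circ\}}$ generated by an arbitrary Mealy machine $\M$ (Theorem \ref{thm:undecidable-automaton-semigroup}). Given such an $\M$, I would invoke the construction of Subsection \ref{subsect:Mealy-machine}, which yields effectively (since the strong bimonoid $\FSinf$ is computable) the bu-deterministic $(\Sigma_\M,\FSinf)$-wta $\A_\M$. Thus any decision procedure for the crisp-determinization question would, in particular, apply to $\A_\M$.

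The central claim I need to establish is that $\A_\M$ is crisp-determinizable under initial algebra semantics if and only if $\langle \M \rangle_{\{\circ\}}$ is finite. The forward direction of this equivalence chains the already-proven equivalences as follows: by Lemma \ref{lm:cdwta_finite_image}, the existence of a crisp-deterministic $(\Sigma_\M,\FSinf)$-wta $\B$ with $\sem{\B}^\init = \sem{\A_\M}^\init = r_\M$ is equivalent to $\im(r_\M)$ being finite together with each preimage $r_\M^{-1}(f)$ being a recognizable $\Sigma_\M$-tree language; and Lemma \ref{lm:rec-step-map-Mcirc} translates exactly this conjunction into the finiteness of $\langle \M \rangle_{\{\circ\}}$. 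Conversely, if $\langle \M \rangle_{\{\circ\}}$ is finite, the same two lemmas produce a crisp-deterministic wta i-equivalent to $\A_\M$.

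Putting the pieces together, a hypothetical algorithm that decides, for arbitrary ranked alphabet $\Sigma$, strong bimonoid $B$, and bu-deterministic $(\Sigma,B)$-wta $\A$, whether some crisp-deterministic $(\Sigma,B)$-wta is i-equivalent to $\A$, would, by feeding it $\A_\M$, decide whether $\langle \M \rangle_{\{\circ\}}$ is finite, contradicting Theorem \ref{thm:undecidable-automaton-semigroup}. Therefore the crisp-determinization problem for the initial algebra semantics is undecidable, even when restricted to bu-deterministic wta.

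I expect the proof itself to be short because the conceptual work is concentrated in the two preparatory lemmas: Lemma \ref{lm:cdwta_finite_image} for the algebraic characterization of crisp-determinizability and Lemma \ref{lm:rec-step-map-Mcirc} for the bridge from $r_\M$ back to $\M$. The only delicate point to check is that the reduction is effective, i.e., that $\A_\M$ can actually be written down from $\M$; this is where the standing assumption that $\FSinf$ is a \emph{computable} strong bimonoid is used, since the transition weights $\nu'_q$ are specified as computable mappings obtained directly from $\M$.
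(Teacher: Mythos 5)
Your proposal is correct and follows essentially the same route as the paper: a reduction from the finiteness problem for $\langle \M \rangle_{\{\circ\}}$ (Theorem \ref{thm:undecidable-automaton-semigroup}) via the wta $\A_\M$, using Lemma \ref{lm:cdwta_finite_image} together with Lemma \ref{lm:rec-step-map-Mcirc} to equate crisp-determinizability of $\A_\M$ with finiteness of the monoid. Your additional remark on the effectiveness of constructing $\A_\M$ is a point the paper handles by its standing computability assumption, so nothing is missing.
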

\begin{proof} 
We prove by contradiction. Thus, we assume that it is decidable whether, for arbitrary ranked alphabet $\Sigma$, strong bimonoid $B$, and bu-deterministic $(\Sigma,B)$-wta $\A$, there is a crisp-deterministic wta $\B$ such that $\sem{\B}^\init=\sem{\A}^\init$.

  Now let $\M$ be an arbitrary Mealy machine and let $\A_\M$ be the $(\Sigma_\M,\FSinf)$-wta constructed from $\M$ as above.
 By Lemma \ref{lm:cdwta_finite_image}, there is a crisp-deterministic wta $\B$ such that $\sem{\B}^\init=r_\M$, if and only if Condition (i) of Lemma \ref{lm:rec-step-map-Mcirc} holds.  
  Thus, by our assumption and Lemma \ref{lm:rec-step-map-Mcirc}, we can decide whether $\langle \M \rangle_{\{\circ\}}$ is finite  for an arbitrary Mealy machine $\M$. This contradicts to Theorem \ref{thm:undecidable-automaton-semigroup}, \textit{i.e.}, our assumption is wrong.
\end{proof}

\begin{cor} It is undecidable whether, for arbitrary ranked alphabet $\Sigma$, strong bimonoid $B$, and  $(\Sigma, B)$-wta $\A$, 
there is a crisp-deterministic $(\Sigma,B)$-wta $\B$ such that $\sem{\B}^\init=\sem{\A}^\init$.
\end{cor}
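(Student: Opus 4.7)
The plan is very short: this corollary is an immediate weakening of Theorem \ref{thm:A-init-rec-step-map}. The theorem establishes undecidability of the crisp-determinization question already within the restricted class of bottom-up deterministic $(\Sigma,B)$-wta, and every bu-deterministic wta is in particular a wta. So any decision procedure for the general question would, when applied to inputs that happen to be bu-deterministic, yield a decision procedure for the restricted question.

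Concretely, I would argue by contradiction. Suppose there were an algorithm that, given a ranked alphabet $\Sigma$, a computable strong bimonoid $B$, and an arbitrary $(\Sigma,B)$-wta $\A$, decides whether there exists a crisp-deterministic $(\Sigma,B)$-wta $\B$ with $\sem{\B}^\init=\sem{\A}^\init$. Then in particular this algorithm decides the same question when its input $\A$ happens to be bu-deterministic (bu-determinism being a syntactically checkable property that can be verified in advance and does not change the semantic problem being asked). This directly contradicts Theorem \ref{thm:A-init-rec-step-map}, which asserts the undecidability of that very question restricted to bu-deterministic inputs.

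I expect no real obstacle here, since the corollary is a pure ``restriction vs.\ generalization'' observation: harder instances of a decision problem cannot be easier to decide than a proper sub-case. The only thing worth noting explicitly is that the strong bimonoid $\FSinf$ used in the construction of $\A_\M$ in Section \ref{subsect:Mealy-machine} is computable (since all involved mappings $\nu_q'$ are computable and composition of computable mappings is computable), so Theorem \ref{thm:A-init-rec-step-map} applies within the computable setting assumed throughout this section, and the same witnesses $\A_\M$ therefore serve as witnesses for the more general corollary.
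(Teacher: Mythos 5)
Your argument is correct and is exactly the (implicit) argument the paper relies on: the corollary follows immediately from Theorem \ref{thm:A-init-rec-step-map} because bu-deterministic wta form a subclass of all wta, so a decision procedure for the general problem would decide the restricted one. Nothing further is needed.
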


\subsection{Undecidability of finiteness of Nerode algebras}

Next we show that the problem (Pii) is undecidable.

\begin{theo}\label{thm:finite-index-undec}
It is undecidable whether, for arbitrary ranked alphabet $\Sigma$, strong bimonoid $B$, and bu-deterministic $(\Sigma,B)$-wta $\A$, the $(\Sigma,B)$-algebra $\Ne(\A)$ is finite.
\end{theo}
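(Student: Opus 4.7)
The plan is to reduce the problem to the undecidability of finiteness of the monoid generated by a Mealy machine (Theorem \ref{thm:undecidable-automaton-semigroup}), using exactly the simulation $\M \mapsto \A_\M$ that was set up in Section \ref{subsect:Mealy-machine}. The point is that the Nerode algebra of $\A_\M$ is essentially the same object as $\langle \M \rangle_{\{\circ\}}$, so finiteness of the former is equivalent to finiteness of the latter.

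More concretely, I would argue by contradiction: assume that there is an algorithm which, given an arbitrary bu-deterministic $(\Sigma,B)$-wta $\A$, decides whether $\Ne(\A)$ is finite. Let $\M$ be an arbitrary Mealy machine and consider the bu-deterministic $(\Sigma_\M,\FSinf)$-wta $\A_\M$ constructed in Section \ref{subsect:Mealy-machine}. By Proposition \ref{prop:Nerode=image}, the carrier set $Q_\Ne$ of $\Ne(\A_\M)$ is exactly $\im(h_{\V(\A_\M)})$. By Lemma \ref{lm:equ-three-sets} we have
\[
\im(h_{\V(\A_\M)}) \;=\; \langle\{\nu'_q \mid q \in Q\}\rangle_{\{\circ,\id_{\FS}\}},
\]
and in the same lemma the bijection $\varphi$ from $\im(r_\M)$ onto $\langle \M \rangle_{\{\circ\}}$ shows that these two sets are in bijection. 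Hence $\Ne(\A_\M)$ is finite if and only if $\langle \M \rangle_{\{\circ\}}$ is finite. The hypothetical decision procedure for finiteness of $\Ne(\cdot)$ therefore yields a decision procedure for finiteness of $\langle \M \rangle_{\{\circ\}}$, contradicting Theorem~\ref{thm:undecidable-automaton-semigroup}.

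No real obstacle arises here, because all the heavy lifting has already been done in the preceding lemmas: the construction $\M \mapsto \A_\M$ is effective and produces a computable strong bimonoid, $\A_\M$ is bu-deterministic by construction, and Lemma \ref{lm:equ-three-sets} together with Proposition \ref{prop:Nerode=image} directly identifies $|Q_\Ne|$ with $|\langle \M \rangle_{\{\circ\}}|$. The only small point to check is that the strong bimonoid $\FSinf$ is indeed computable in the sense required for the reduction, which follows from the fact that each $\nu'_q$ is computable (so $\oplus = \overline{\lcp}$ and $\otimes = \circ$ are computable on the generated submonoid used in $\A_\M$). Consequently, the undecidability follows immediately.
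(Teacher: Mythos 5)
Your proposal is correct and follows essentially the same route as the paper: a reduction by contradiction to Theorem~\ref{thm:undecidable-automaton-semigroup} via the construction $\M \mapsto \A_\M$, using Proposition~\ref{prop:Nerode=image} and Lemma~\ref{lm:equ-three-sets} to identify finiteness of $\Ne(\A_\M)$ with finiteness of $\langle \M \rangle_{\{\circ\}}$. Your added remark on the computability of $\FSinf$ is a sensible (and harmless) extra check that the paper handles by its blanket assumption at the start of the section.
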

\begin{proof} The proof is by contradiction. Thus, we assume that it is decidable whether, for arbitrary ranked alphabet $\Sigma$, strong bimonoid $B$, and bu-deterministic $(\Sigma,B)$-wta $\A$, the $(\Sigma,B)$-algebra $\Ne(\A)$ is finite.

  Now let $\M$ be an arbitrary Mealy machine and let $\A_\M$ be the $(\Sigma_\M,\FSinf)$-wta constructed from $\M$ as above.
  Then we have
  \[\Ne(\A_\M)  \text{ is finite } \ \text{ iff } \ \im(h_{\V(\A_\M)}) \text{ is finite } \ \text{ iff } \ \langle \M \rangle_{\{\circ\}} \text{ is finite }\enspace,
  \]
where the second equivalence follows from Lemma \ref{lm:equ-three-sets}.
  Thus, by our assumption, we can decide whether $\langle \M \rangle_{\{\circ\}}$ is finite  for an arbitrary Mealy machine $\M$. This contradicts to Theorem \ref{thm:undecidable-automaton-semigroup}, and this means that our assumption is wrong.
  \end{proof}

\begin{cor}\label{cor:finite-index-undec}
It is undecidable whether, for arbitrary ranked alphabet $\Sigma$, strong bimonoid $B$, and $(\Sigma,B)$-wta $\A$, the $(\Sigma,B)$-algebra $\Ne(\A)$ is finite.
\end{cor}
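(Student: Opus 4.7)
The plan is to derive Corollary \ref{cor:finite-index-undec} as an immediate consequence of Theorem \ref{thm:finite-index-undec} by the standard argument that undecidability for a subclass of inputs implies undecidability for a superclass.

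More concretely, I would argue by contradiction. Suppose there were a decision procedure $\mathcal{D}$ which, given an arbitrary ranked alphabet $\Sigma$, an (arbitrary, computable) strong bimonoid $B$, and an arbitrary $(\Sigma,B)$-wta $\A$, decides whether the $(\Sigma,B)$-algebra $\Ne(\A)$ is finite. Since every bu-deterministic $(\Sigma,B)$-wta is in particular a $(\Sigma,B)$-wta, the same procedure $\mathcal{D}$ would decide finiteness of $\Ne(\A)$ when the input $\A$ is guaranteed to be bu-deterministic. This contradicts Theorem \ref{thm:finite-index-undec}, so no such $\mathcal{D}$ exists.

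There is no real obstacle here: the only thing to check is that bu-deterministic wta are syntactically a subclass of wta, which is immediate from the definition in Section \ref{sect:wta}, and that the encoding used in the input of $\mathcal{D}$ for general wta restricts properly to the encoding used in Theorem \ref{thm:finite-index-undec}. Both are routine. Hence the corollary follows without any further construction.
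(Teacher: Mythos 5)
Your proposal is correct and is exactly the argument the paper leaves implicit: Theorem \ref{thm:finite-index-undec} establishes undecidability already on the subclass of bu-deterministic wta, so a decision procedure for arbitrary wta would restrict to one for that subclass, a contradiction. Nothing further is needed.
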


\subsection{Undecidability of finite order property}

Lastly, we show that problem (Piii) is undecidable. Let us recall that $H_{\A_\M}$ denotes the set $\langle \im(\delta_\M) \rangle_{\{\circ\}}$

\begin{theo}\label{thm:fin-ord-prop-undec}
It is undecidable whether, for arbitrary ranked alphabet $\Sigma$, strong bimonoid $B$, and bu-deterministic $(\Sigma,B)$-wta $\A$, the wta $\A$ has the finite order property.
\end{theo}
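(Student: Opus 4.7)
The plan is to mimic the reduction pattern used for Theorems \ref{thm:A-init-rec-step-map} and \ref{thm:finite-index-undec}, reducing from the undecidability of finiteness of $\langle \M \rangle_{\{\circ\}}$ (Theorem \ref{thm:undecidable-automaton-semigroup}) via the bu-deterministic wta $\A_\M$ constructed in Section \ref{subsect:Mealy-machine}. Assuming that the finite order property were decidable, given an arbitrary Mealy machine $\M$ one would build the $(\Sigma_\M,\FSinf)$-wta $\A_\M$ and decide whether it has the finite order property, and I will argue that the answer is equivalent to the finiteness of $\langle \M \rangle_{\{\circ\}}$, yielding the desired contradiction.

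The key observation is that the strong bimonoid $(\FSinf,\overline{\lcp},\circ,\infty_{\FSinf},\id_{\FSinf})$ is \emph{additively idempotent}: since $\lcp(w,w)=w$ for every $w \in \Delta^*_\infty$, one has $\overline{\lcp}(f,f)=f$ for every $f \in \FSinf$. Consequently every element $b \in \FSinf$ satisfies $b \oplus b = b$, so $\langle b \rangle = \{\0, b\}$ is finite and in particular every element has finite order in the additive monoid. Therefore the second clause in the definition of the finite order property is automatically fulfilled for $\A_\M$, and $\A_\M$ has the finite order property \emph{iff} $H_{\A_\M}$ is finite.

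By Lemma \ref{lm:equ-three-sets}, $H_{\A_\M}=\langle\{\nu'_q \mid q \in Q\}\rangle_{\{\circ,\id_{\FSinf}\}}$ is in bijection with $\langle \M\rangle_{\{\circ\}}$ via the map sending $\nu'_{q_k}\circ\cdots\circ\nu'_{q_1}$ to $\nu_{q_k}\circ\cdots\circ\nu_{q_1}$; hence $H_{\A_\M}$ is finite iff $\langle \M\rangle_{\{\circ\}}$ is finite. Combining with the previous paragraph, $\A_\M$ has the finite order property iff $\langle \M\rangle_{\{\circ\}}$ is finite. A decision procedure for the finite order property would therefore decide the finiteness of $\langle \M\rangle_{\{\circ\}}$, contradicting Theorem \ref{thm:undecidable-automaton-semigroup}. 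Since $\A_\M$ is bu-deterministic, the undecidability holds already within that subclass, and a corollary for arbitrary wta follows at once. The only subtle point to verify carefully is that $\FSinf$ really is additively idempotent under $\overline{\lcp}$, but this is immediate from the pointwise definition and the idempotence of $\lcp$; no further technical obstacle is expected.
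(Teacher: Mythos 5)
Your proposal is correct and follows essentially the same route as the paper's own proof: reduce to Theorem \ref{thm:undecidable-automaton-semigroup} via $\A_\M$, observe that additive idempotence of $\FSinf$ makes the finite-order condition on $H_{\A_\M}\circ\im(F_\M)$ automatic, and conclude that $\A_\M$ has the finite order property iff $H_{\A_\M}$, equivalently $\langle\M\rangle_{\{\circ\}}$, is finite by Lemma \ref{lm:equ-three-sets}. No gaps.
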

\begin{proof} The proof is by contradiction. Thus, we assume that it is decidable whether, for arbitrary ranked alphabet $\Sigma$, strong bimonoid $B$, and bu-deterministic $(\Sigma,B)$-wta $\A$, the wta $\A$ has the finite order property.

Now let $\M$ be an arbitrary Mealy machine and let $\A_\M$ be the $(\Sigma_\M,\FSinf)$-wta constructed from $\M$ as above.
It is easy to see that the strong bimonoid $\FSinf$ is additively idempotent, hence, each element $f \in H_{\A_\M} \circ \im(F_\M)$ has order 1 in $(\FSinf,\overline{\lcp},\infty_{\FSinf})$. Therefore, $\A_\M$ has the finite order property if and only if $H_{\A_\M}$ is finite.
Thus, by our assumption and Lemma \ref{lm:equ-three-sets}, we can decide whether $\langle \M \rangle_{\{\circ\}}$ is finite  for an arbitrary Mealy machine $\M$. This contradicts to Theorem \ref{thm:undecidable-automaton-semigroup}, and this means that our assumption is wrong.
\end{proof}


\begin{cor}\label{cor:fin-ord-prop-undec}
It is undecidable whether, for arbitrary ranked alphabet $\Sigma$, strong bimonoid $B$, and $(\Sigma,B)$-wta $\A$, the wta $\A$ has the finite order property.
\end{cor}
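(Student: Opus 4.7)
The plan is to derive this corollary directly from Theorem \ref{thm:fin-ord-prop-undec} by observing that the class of bu-deterministic $(\Sigma,B)$-wta is a subclass of the class of all $(\Sigma,B)$-wta. Indeed, any decision procedure for the finite order property on arbitrary wta would, when restricted to inputs that happen to be bu-deterministic, yield a decision procedure for the finite order property on bu-deterministic wta, contradicting Theorem \ref{thm:fin-ord-prop-undec}.

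More concretely, I would proceed by contradiction. Suppose there exists an algorithm $\mathcal{D}$ which, given an arbitrary ranked alphabet $\Sigma$, a (computable) strong bimonoid $B$, and an arbitrary $(\Sigma,B)$-wta $\A$, decides whether $\A$ has the finite order property. Given an arbitrary Mealy machine $\M$, I would construct the bu-deterministic $(\Sigma_\M,\FSinf)$-wta $\A_\M$ as defined in Subsection \ref{subsect:Mealy-machine}; this construction is effective. Then I would run $\mathcal{D}$ on input $\A_\M$ (viewed simply as a $(\Sigma_\M,\FSinf)$-wta, ignoring the fact that it is bu-deterministic). This yields a decision procedure for whether $\A_\M$ has the finite order property, which by the argument in the proof of Theorem \ref{thm:fin-ord-prop-undec} (using that $\FSinf$ is additively idempotent, so the finite order property of $\A_\M$ is equivalent to finiteness of $H_{\A_\M}$, which by Lemma \ref{lm:equ-three-sets} is equivalent to finiteness of $\langle \M \rangle_{\{\circ\}}$) would decide finiteness of $\langle \M \rangle_{\{\circ\}}$ for arbitrary Mealy machines $\M$, contradicting Theorem \ref{thm:undecidable-automaton-semigroup}.

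There is no real obstacle here, since everything needed has already been established. The only thing to make sure of is that the reduction that was used in the proof of Theorem \ref{thm:fin-ord-prop-undec} produces wta $\A_\M$ which are instances of the broader class of all $(\Sigma,B)$-wta, which is trivially true since every bu-deterministic wta is a wta. Alternatively and even more briefly, one can simply remark that the problem in Theorem \ref{thm:fin-ord-prop-undec} is a special case of the problem in the corollary (the input is merely restricted to bu-deterministic wta), so undecidability transfers upward.
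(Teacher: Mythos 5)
Your proposal is correct and matches the paper's (implicit) argument: the paper states this corollary without proof precisely because it follows immediately from Theorem \ref{thm:fin-ord-prop-undec} by the restriction argument you describe, since any decider for the finite order property on arbitrary wta would in particular decide it on bu-deterministic wta. Nothing further is needed.
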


\subsection{Undecidability for the string case}

Weighted string automata can be considered as  wta over monadic ranked alphabets \cite[p.~324]{fulvog09}, and vice versa. A ranked alphabet $\Sigma$ is \emph{monadic} if $\Sigma = \Sigma^{(0)} \cup \Sigma^{(1)}$ and $|\Sigma^{(0)}|=1$; say $\Sigma^{(0)}=\{e\}$. Each string $w$ over an alphabet $\Gamma$ can be considered as a tree $\mathrm{tree}(w)$ over the monadic ranked alphabet $\Sigma_\Gamma$ with $\Sigma_\Gamma^{(1)}=\Gamma$ and $\mathrm{tree}(\varepsilon)=e$. Obviously, $\mathrm{tree}: \Gamma^* \to T_{\Sigma_\Gamma}$ is a bijection. Then, a weighted string automaton  $\A=(Q,\lambda,\mu,\gamma)$ over $\Gamma$ with weights in $B$  \cite{cirdroignvog10} can be turned into the $(\Sigma_\Gamma,B)$-wta $\A_t=(Q,\delta,\gamma)$ where $\delta_0(\varepsilon,e,q)=\lambda_q$ and  $\delta_1(q,a,q')=\mu(a)_{q,q'}$.
Then $\sem{\A}^\init(w) = \sem{\A_t}^\init(\mathrm{tree}(w))$ and $\sem{\A}^\run(w) = \sem{\A_t}^\run(\mathrm{tree}(w))$ for every $w \in \Gamma^*$. Also, in a straightforward way, we can transform each wta over a monadic ranked alphabet into a weighted string automaton such that the corresponding equations hold.

Using the fact that weighted string automata are wta over monadic ranked alphabets, we can transfer our undecidability results to weighted string automata in the following way.

\begin{cor} Each of the following questions is undecidable for arbitrary alphabet $\Gamma$, 
 strong bimonoid $B$, and weighted string automaton $\A$ over $\Gamma$ and $B$:
\begin{itemize}
\item[(Ri)] Does there exist a crisp-deterministic weighted string automaton over $\Gamma$ and $B$ which is i-equivalent to $\A$?
\item[(Rii)] Is the Nerode $(\Sigma,B)$-algebra $\Ne(\A)$ is finite?
\item[(Riii)] Does $\A$ have the finite order property?
\end{itemize}
\end{cor}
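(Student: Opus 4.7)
The plan is simply to observe that the entire reduction from Mealy machines carried out in Subsection~\ref{subsect:Mealy-machine} already lives in the string world. Indeed, the ranked alphabet $\Sigma_\M$ constructed there satisfies $\Sigma_\M = \Sigma_\M^{(0)} \cup \Sigma_\M^{(1)}$ with $\Sigma_\M^{(0)} = \{e\}$, so $\Sigma_\M$ is monadic. Consequently, via the bijection $\mathrm{tree}: \Gamma^* \to T_{\Sigma_\Gamma}$ described just above the corollary (with $\Gamma = \Sigma_\M^{(1)} = Q$), the bu-deterministic $(\Sigma_\M,\FSinf)$-wta $\A_\M$ corresponds to a bu-deterministic weighted string automaton $\A_\M'$ over $Q$ and $\FSinf$ such that $\sem{\A_\M}^\init(\mathrm{tree}(w)) = \sem{\A_\M'}^\init(w)$ and $\sem{\A_\M}^\run(\mathrm{tree}(w)) = \sem{\A_\M'}^\run(w)$ for every $w \in Q^*$.

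Next I would argue that the three properties (Ri), (Rii), (Riii) are preserved under this translation, in both directions. For (Ri): a crisp-deterministic weighted string automaton i-equivalent to $\A_\M'$ exists iff a crisp-deterministic wta i-equivalent to $\A_\M$ exists, since crisp-determinism is defined pointwise on transitions and the translation between strings and monadic trees preserves the structure of transitions and root-weights. For (Rii): the Nerode algebra $\Ne(\cdot)$ is defined from the vector algebra, which depends only on the transition mappings; since the transition mappings of $\A_\M$ and $\A_\M'$ are in one-to-one correspondence, their Nerode algebras are isomorphic, hence one is finite iff the other is. For (Riii): the finite order property is defined purely in terms of $\im(\delta)$, $\im(F)$, and orders in $(B,\oplus,\0)$, all of which agree between $\A_\M$ and $\A_\M'$.

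With these three equivalences in hand, the proof is a one-line reduction for each item. Assuming any of (Ri), (Rii), (Riii) were decidable for arbitrary weighted string automata, we could decide the corresponding property for $\A_\M'$, hence for $\A_\M$, and therefore (by Theorems~\ref{thm:A-init-rec-step-map}, \ref{thm:finite-index-undec}, \ref{thm:fin-ord-prop-undec}, applied exactly as in the tree case via Lemmas~\ref{lm:rec-step-map-Mcirc} and \ref{lm:equ-three-sets}) decide whether $\langle \M \rangle_{\{\circ\}}$ is finite for an arbitrary Mealy machine $\M$, contradicting Theorem~\ref{thm:undecidable-automaton-semigroup}.

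There is really no main obstacle here; the only thing to be careful about is to verify that the dictionary between weighted string automata and wta over monadic ranked alphabets is literally an isomorphism at the level of the data used in each of the three properties. Once the translation $\A_\M \leftrightarrow \A_\M'$ is set up, the rest is just invoking the tree-case proofs verbatim, since the wta $\A_\M$ used in those proofs was already a monadic (i.e., string-like) object.
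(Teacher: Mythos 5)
Your proposal is correct and takes essentially the same approach as the paper: the paper likewise observes that weighted string automata are exactly wta over monadic ranked alphabets and that the Mealy-machine reduction already yields a monadic wta $\A_\M$, so Theorems~\ref{thm:A-init-rec-step-map}, \ref{thm:finite-index-undec} and \ref{thm:fin-ord-prop-undec} transfer directly. You merely spell out the string/tree dictionary and the preservation of the three properties in more detail than the paper's brief justification does.
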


\section{Open problems}

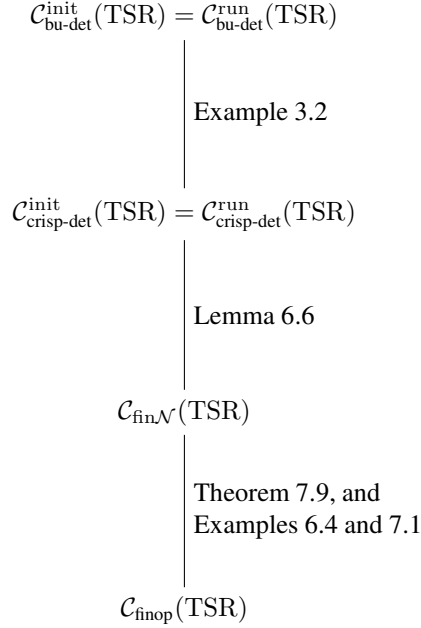
\begin{figure}[t]
\centering
\begin{tikzpicture}[node distance=7.5em]
\node (1) {$\C_\text{finop}(\mathrm{TSR})$};
\node[above of=1] (2) {$\C_{\mathrm{fin}\Ne}(\mathrm{TSR})$};
\node[above of=2] (3) {$\C^\init_\text{crisp-det}(\mathrm{TSR})=\C^\run_\text{crisp-det}(\mathrm{TSR})$};
\node[above of=3] (4) {$\C^\init_\text{bu-det}(\mathrm{TSR})=\C^\run_\text{bu-det}(\mathrm{TSR})$};

\path[-]
  (1) edge node[right, align=left] {Theorem \ref{thm:relating}, and \\Examples \ref{ex:size-mod2-non-det} and \ref{ex:ex-finite-order-prop}} (2)
  (2) edge node[right, align=left] {Lemma \ref{lm:N(A)-not-finite}} (3)
  (3) edge node[right, align=left] {Example \ref{ex:size}} (4)
;
\end{tikzpicture}
\caption{\label{fig:det-class} The determinization classification $(\mathrm{TSR},D)$.}
\end{figure}

Let  $\B$ be a class of strong bimonoids. For every $\mathrm{s} \in \{\init,\run\}$ and $\mathrm{m} \in \{\text{bu-det}, \text{crisp-det}\}$,
let 
\begin{compactitem}
\item $\C^\mathrm{s}_\mathrm{m}(\B)$ be the class of all $(\Sigma,B)$-wta $\A$ for some ranked alphabet $\Sigma$ and $B\in \B$ such that there is a $\mathrm{m}$ $(\Sigma,B)$-wta $\A'$ with $\sem{\A}^\mathrm{s} = \sem{\A'}^\mathrm{s}$.
\end{compactitem}
Moreover, let 
\begin{compactitem}
  \item $\C_{\mathrm{fin}\Ne}(\B)$ be the class of all $(\Sigma,B)$-wta $\A$ for some ranked alphabet $\Sigma$ and $B\in \B$ such that $\Ne(\A)$ is finite, and
  \item $\C_\mathrm{finop}(\B)$  the class of all $(\Sigma,B)$-wta $\A$ for some ranked alphabet $\Sigma$ and $B\in \B$ such that $\A$ has the finite order property.
  \end{compactitem}
Thus we have the following six classes of wta:
  \begin{equation}\label{eq:classes}
    \C^\init_\text{bu-det}(\B), \  \C^\init_\text{crisp-det}(\B), \ \C^\run_\text{bu-det}(\B), \ \C^\run_\text{crisp-det}(\B), \  
\C_{\mathrm{fin}\Ne}(\B), \ \text{ and } \ \C_\mathrm{finop}(\B) \enspace.
  \end{equation}
  The following inclusion relations between these classes hold for each class $\B$ of strong bimonoids:
  \begin{compactenum}
  \item[(i)] $\C^\mathrm{s}_\text{crisp-det}(\B) \subseteq \C^\mathrm{s}_\text{bu-det}(\B)$ for each $\mathrm{s} \in \{\init,\run\}$ by definition,
    \item[(ii)] $\C_{\mathrm{fin}\Ne}(\B) \subseteq \C^\init_\text{crisp-det}(\B)$ by Theorem \ref{th:cd-init}, and
    \item[(iii)] $\C_\mathrm{finop}(\B) \subseteq \C^\run_\text{crisp-det}(\B)$ by Theorem \ref{theo:A_pi_equals_A_run}.
\end{compactenum}
Moreover, for each class $\B$ of semirings:
\begin{compactenum}
    \item[(iv)]  $\C^\init_\text{bu-det}(\B)=\C^\run_\text{bu-det}(\B)$ and $\C^\init_\text{crisp-det}(\B)=\C^\run_\text{crisp-det}(\B)$, cf. Theorem  \ref{lm:run=initial}.
      \end{compactenum}

It would be nice to identify classes $\B$ of strong bimonoids for which a complete description of the inclusion relations can be given among the six classes \eqref{eq:classes} of wta. Let us form this problem more exactly.

A \emph{determinization classification} is a pair $(\B,D)$ such that
  \begin{compactitem}
  \item $\B$ is a class of strong bimonoids,
  \item $D$ is the Hasse diagram of the classes \eqref{eq:classes}.
  \end{compactitem}

Next we give some easy examples of  determinization classifications. For the sake of brevity, for a singleton class  $\{B\}$, we write just $B$.

For instance, $(\mathbb{B},D)$ is a determinization classification, where $D$  is the Hasse diagram which contains just one node and this node is labeled by all the six classes; indeed, all these classes are equal to the class of all $(\Sigma,\mathbb{B})$-wta for some ranked alphabet $\Sigma$.

  As another example, $(\mathrm{TSR},D)$ is also a determinization classification, where  $D$ is the Hasse diagram shown in Figure \ref{fig:det-class}.
The inclusions and equalities shown by $D$ were justified above, except the inclusion  $\C_\text{finop}(\mathrm{TSR})\subseteq \C_{\mathrm{fin}\Ne}(\mathrm{TSR})$, which follows from Theorem \ref{thm:relating}. Moreover, each inclusion is proper because
\begin{compactitem}
\item the wta $\D$ in Example \ref{ex:size-mod2-non-det} is in $\C_{\mathrm{fin}\Ne}(\mathrm{TSR}) \setminus \C_\text{finop}(\mathrm{TSR})$ (cf Example \ref{ex:ex-finite-order-prop}),
\item the wta $\A$ in the proof of Lemma \ref{lm:N(A)-not-finite} is in  $\C^\init_\text{crisp-det}(\mathrm{TSR})\setminus \C_{\mathrm{fin}\Ne}(\mathrm{TSR})$, and
\item the wta $\C$ of Example \ref{ex:size} is in $\C^\init_\text{bu-det}(\mathrm{TSR})\setminus \C^\init_\text{crisp-det}(\mathrm{TSR})$.
\end{compactitem}

\acknowledgements The authors would like to thank Manfred Droste for valuable discussions and the reviewers for their work and useful suggestions.

\bibliographystyle{alpha}
\bibliography{crisp19-bib}

\end{document}